\documentclass[sigconf]{acmart}

\AtBeginDocument{%
  }

\copyrightyear{2026}
\acmYear{2026}
\setcopyright{cc}
\setcctype{by}
\acmConference[CCS '26]{Proceedings of the 2026 ACM SIGSAC Conference on Computer and Communications Security}{November 15--19, 2026}{The Hague, Netherlands}
\acmBooktitle{Proceedings of the 2026 ACM SIGSAC Conference on Computer and Communications Security (CCS '26), November 15--19, 2026, The Hague, Netherlands}
\acmDOI{10.1145/3830454.3832583}
\acmISBN{979-8-4007-2871-6/2026/11}
\usepackage{graphicx} 
\usepackage{booktabs}
\usepackage{amsmath}

\usepackage[caption=false,font=footnotesize,labelfont=rm,textfont=rm]{subfig}

\usepackage{amsthm}
\usepackage{cleveref}
\usepackage{multirow}
\usepackage{algorithm}
\usepackage{algpseudocode}

\usepackage{xcolor}

\newcommand{\TLSS}{TLSS\xspace}
\newcommand{\NFSA}{NFSA\xspace}

\crefname{table}{Table}{Tables}
\crefname{definition}{Def.}{Defs.}
\crefname{theorem}{Thm.}{Thms.}
\crefname{equation}{Eq.}{Eq.s}
\crefname{figure}{Fig.}{Figs.}
\crefname{appendix}{Appendix}{Appendices}
\crefname{property}{Property}{Properties}
\crefname{lemma}{Lemma}{Lemmas}
\crefname{algorithm}{Alg.}{Algs.}

\usepackage{xspace}
\newtheorem{theorem}{Theorem}
\newtheorem{lemma}{Lemma}

\begin{document}

\title{NFSA: Non-Forward Secure Aggregation with One Server via Two Layer Secret Sharing}


\author{Yufei Zhou}
\email{zhouyf55@mail2.sysu.edu.cn}
\orcid{0009-0000-3494-7663}
\affiliation{%
  \institution{Sun Yat-sen University}
  \city{Guangzhou}
  \country{China}
}

\begin{abstract}
Federated Learning (FL) enables collaborative model training while preserving privacy by keeping data local. However, the risk of sensitive data leakage through model updates necessitates the use of secure aggregation protocols. Existing server-based secure aggregation protocols typically require the server to forward sensitive data shared between users, which increases communication overhead and introduces potential security risks.
In this work, we propose a novel secure aggregation protocol based on two-layer secret sharing to address these issues. By combining Shamir's Secret Sharing with 2-out-of-2 additive secret sharing using a Pseudo-Random Function (PRF), our protocol eliminates direct communication between users, thereby removing the need for the server to forward data.
We further extend the protocol with Key-homomorphic PRF (KhPRF)  to support high-dimensional data aggregation and apply it to FL, enabling one-shot secure aggregation with a single server and no intermediary data forwarding. To reduce user overhead, we design a new encoding method based on the Chinese Remainder Theorem for the almost KhPRF-based mask, reducing the number of KhPRF calls and mitigating the model update expansion issue after masking. Experimental results show that our scheme significantly outperforms existing methods in terms of auxiliary node overhead. For instance, when the number of users is 100, our scheme improves communication efficiency by nearly 100 times and reduces computational overhead by approximately 17\%. Moreover, user computation time can be reduced by 51\% to 75\% when the input length is $2^{18}$.
\end{abstract}

\begin{CCSXML}
<ccs2012>
   <concept>
       <concept_id>10002978.10002979</concept_id>
       <concept_desc>Security and privacy~Cryptography</concept_desc>
       <concept_significance>500</concept_significance>
       </concept>
   <concept>
       <concept_id>10002978.10002991.10002995</concept_id>
       <concept_desc>Security and privacy~Privacy-preserving protocols</concept_desc>
       <concept_significance>500</concept_significance>
       </concept>
   <concept>
       <concept_id>10002978.10003006.10003013</concept_id>
       <concept_desc>Security and privacy~Distributed systems security</concept_desc>
       <concept_significance>500</concept_significance>
       </concept>
 </ccs2012>
\end{CCSXML}

\ccsdesc[500]{Security and privacy~Cryptography}
\ccsdesc[500]{Security and privacy~Privacy-preserving protocols}
\ccsdesc[500]{Security and privacy~Distributed systems security}

\keywords{Secure Aggregation; Secret Sharing; Key-homomorphic Pseudo-Random Function}

\maketitle

\section{Introduction}

Federated Learning (FL) \cite{mcmahan2017communication} enhances user privacy by ensuring that the server only knows the aggregated model, not the raw data. This feature has led to widespread research and applications in areas such as healthcare \cite{xu2021federated}, financial services \cite{oualid2025federated}, and intelligent manufacturing \cite{li2024novel}. However, despite not directly uploading private data, users' privacy may still be compromised through plain model gradients (or model parameters) uploaded by users \cite{zhu2019deep,ren2022grnn}. 
To address this issue, secure aggregation protocols \cite{bonawitz2017practical} have been proposed. Secure aggregation employs cryptographic techniques, such as random masking \cite{bonawitz2017practical,bell2020secure} and homomorphic encryption \cite{aono2017privacy,zhou2024two}, to ensure that model aggregation can be performed without disclosing users' private model gradients.

Designing efficient secure aggregation schemes requires addressing two key issues, particularly in cross-device scenarios \cite{yang2019federated}. \textit{First, the problem of user dropout}. In cross-device FL, users may be mobile devices \cite{lee2024federated} or resource-constrained IoT devices \cite{nguyen2021federated}, whose connectivity is often unstable. As a result, users may drop out at any time. Secure aggregation in such scenarios must ensure that the privacy of other users is not compromised and that training can continue even if some users drop out.
\textit{Second, the issue of communication efficiency}. Both the model parameters and the number of users are large. However, users' communication capabilities are often limited, typically with low transmission rates. Consequently, communication overhead frequently becomes a bottleneck for training efficiency.

For the first issue (user dropout), multiple-server solutions \cite{fazli2023fedshare,liang2024privacy} can easily resolve the problem. However, they rely on the assumption that servers do not collude, which is often not feasible in many scenarios. Moreover, deploying multiple servers increases the cost of practical deployment. Therefore, we focus on the single-server scenario.
Single-server schemes typically address this issue using threshold secret sharing (SS), such as Shamir’s SS \cite{shamir1979share}. A typical example is given in \cite{bonawitz2017practical}, where user $U_i$ shares its private key with other users (the holders) through SS. When $U_i$ drops out, the server can reconstruct $U_i$'s private key using the secret shares returned by the other users, thereby mitigating the impact of the dropout.
Subsequent work has focused on improving the efficiency of private key sharing, particularly by reducing the number of holders, which in turn decreases sharing overhead. For instance, \cite{bell2020secure} reduces the number of holders by constructing a neighbor graph. Other works, such as \cite{li2023lerna, bell2025willow, karthikeyan2025opa}, elect a committee as holders to hold the private key, further reducing the number of holders.

However, a core issue remains unresolved: the sharing still requires forwarding through an untrusted server. There is no direct communication channel between users or committees, so the communication must be relayed through the server. Specifically, user $U_i$ needs to send the share $s_m$ to the server, which then forwards $s_m$ to the share holder $P_m$. Since we do not fully trust the server, $U_i$ must employ additional cryptographic techniques, such as authenticated encryption (AE) \cite{jimale2022authenticated}, to ensure that the server cannot tamper with the share and forward it to the correct recipient. Furthermore, the server should not learn any information about the forwarded share.
In addition to security, server forwarding introduces significant communication overhead. With $N$ users and $M$ share holders, the forwarding complexity is $O(NM)$. If there are $R$ aggregation rounds, the total communication complexity becomes $O(NMR)$.

To address this issue, we propose a two-layer SS scheme (\TLSS) that, combined with the properties of Pseudo-Random Functions (PRF) \cite{naor2004number}, enables secure aggregation without the need for server relaying. The first layer uses threshold SS, such as Shamir’s scheme \cite{shamir1979share}, to prevent user dropouts. The second layer uses 2-out-of-2 additive SS \cite{krenn2023introduction} to share the Shamir share $s_m$ with both the server and the share holder $P_m$. By using 2-out-of-2 additive SS with a PRF and a pre-negotiated key $K_m$, we can achieve secure aggregation without any server forwarding.
It is important to note that a KA protocol, such as Diffie-Hellman (DH) \cite{harn2014efficient}, is done by exchanging public keys and does not require the server to relay any secret information. 

In \Cref{fig:ss_compare}, we present a communication diagram comparing our \TLSS with traditional Shamir’s SS in the FL scenario for sharing and reconstruction. $s_m^{AE}$ represents the share value encrypted with AE, which needs to be sent to the share holder $P_m$. $s_m^{A_1}$ and $s_m^{A_2}$ represent the additive secret shares satisfying $s_m = s_m^{A_1} + s_m^{A_2} \mod p$. Our scheme reduces the amount of forwarded communication for sharing and does not require AE encryption.

\begin{figure}
    \centering
    \includegraphics[width=\linewidth]{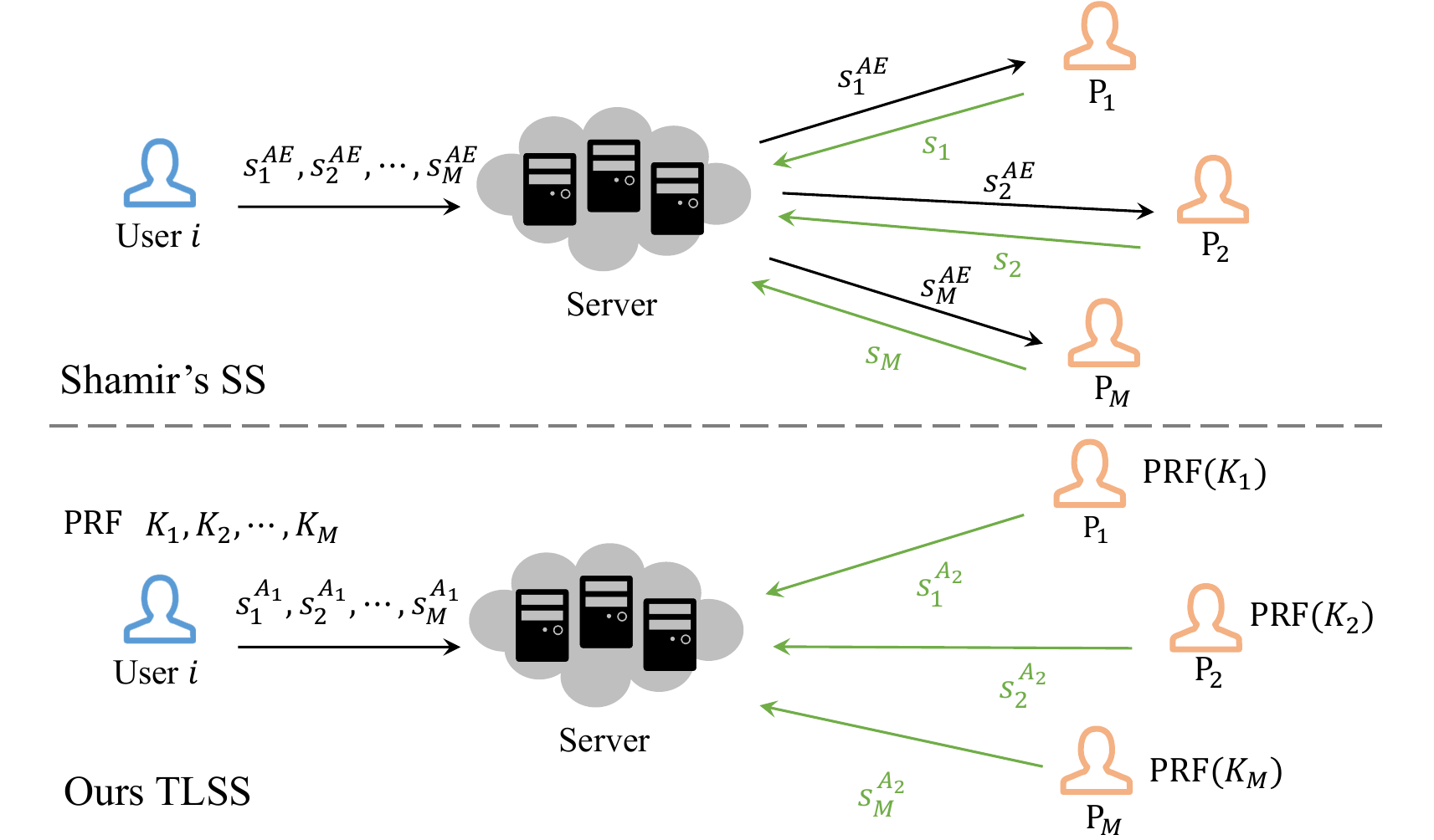}
    \caption{Comparison between our \TLSS and traditional Shamir’s SS. The superscript AE indicates AE encryption, while A denotes additive SS. The black arrows represent the sharing process, and the green arrows represent the reconstruction process.}
    \label{fig:ss_compare}
\end{figure}

For the second issue (communication efficiency), secure aggregation must minimize user communication as much as possible. This reduction in communication involves two aspects: one is reducing the number of communication rounds, and the other is minimizing the total communication volume. Recent works \cite{karthikeyan2025opa, bell2025willow, guan2025opsa} further decrease the number of interactions, enabling one-shot aggregation schemes where users and the server interact only once per round.
The key technology used is an almost key-homomorphic PRF (KhPRF) based on LWR or LWE assumptions \cite{boneh2013key, ernst2021private}. However, almost KhPRF introduces noise, and the noise size is proportional to the number of users participating in the aggregation. Thus, prior works must allocate extra space during model update masking to avoid interference from KhPRF’s noise in the aggregation result. Undoubtedly, this increases the total communication volume. The additional communication per user per round is approximately $O(\log N)$, so the total additional communication for $N$ users and $R$ rounds of aggregation is $O(RN\log N)$.

In this work, we propose a new encoding method (CRT encoding), which utilizes the Chinese Remainder Theorem  to reduce the ciphertext expansion ratio (the ratio of the size of the masked input to the size of the original input), thereby decreasing the total communication volume for users. Additionally, the encoding reduces the number of KhPRF calls by combining multiple inputs into a single one, thereby lowering the computational overhead. Furthermore, we combine the proposed \TLSS with KhPRF and achieve a more compact, one-shot secure aggregation (\NFSA) that does not require server relaying through round fusion. Notably, in one round of secure aggregation, our scheme requires communication only once between the users involved in the aggregation and the server, and the decryptors also need to communicate with the server only once. Our main contributions are summarized as follows:

\begin{itemize}
\item We propose \TLSS, which addresses the issue of forwarding secret shares in the single-server FL scenario, as encountered in prior work using Shamir's SS.
\item We analyze the error bounds of almost KhPRF and introduce a new encoding method that achieves more compact masking of model updates, reducing both the computational overhead and communication volume required by users.
\item We combine the proposed \TLSS with the new encoding for almost KhPRF to enable compact, one-shot aggregation (\NFSA) for high-dimensional model parameter aggregation, without the need for server relaying.
\end{itemize}

\section{Related Work}
In this section, we briefly review related work on cross-device secure FL in a single-server setting.

Bonawitz et al.~\cite{bonawitz2017practical} proposed the first double-mask protocol for FL, where each pair of users shares a symmetric key to ensure the privacy of the aggregated model. Bell et al.~\cite{bell2020secure} later enhanced this approach by improving the neighbor graph construction to reduce communication costs. FLDP~\cite{stevens2022efficient} and ACORN~\cite{bell2023acorn} further optimize the one-time masking process by leveraging the additive homomorphism of (Ring) LWE \cite{rosca2018ring}. However, these schemes rely on one-time keys, requiring clients to regenerate and distribute fresh keys in each round, which results in significant communication interactions.

To address this issue, MicroFedML~\cite{guo2022microfedml} introduces a one-time setup that allows for key reuse across rounds, effectively reducing communication costs. Flamingo~\cite{ma2023flamingo} combines threshold decryption with double masking, while LERNA~\cite{li2023lerna} employs flat SS and a KhPRF to achieve aggregation in a single round. However, these schemes are not suitable for high-dimensional data aggregation. OPA~\cite{karthikeyan2025opa} proposes an almost KhPRF based on LWE assumptions, and Willow~\cite{bell2025willow} presents a symmetric encryption scheme that is both key- and message-homomorphic to aggregate data with one-shot user communication.

To ensure users do not drop out, most of these schemes use Shamir's SS. Although these works have made significant progress in achieving efficient secure aggregation, they do not address the issue of secret message forwarding by the server and the communication overhead caused by user communication expansion.
We propose \TLSS to address the forwarding issue and design a new input encoding method for almost KhPRF that reduces the overhead of secure aggregation. 
The following is a comparative analysis between our approach and the state-of-the-art approaches.

\textbf{Comparison with OPA.}
Compared to OPA, both our scheme and OPA employ Shamir's SS to share the KhPRF key with the decryptors and use the KhPRF to mask inputs. The decryptors then reconstruct the aggregated KhPRF key to assist the server in decoding. However, OPA uses symmetric encryption to encrypt the shared values before forwarding them to the decryptors. In contrast, our scheme employs \TLSS and PRF to avoid forwarding of the shares, thus reducing the overhead of key sharing. Additionally, the proposed CRT encoding reduces the input dimension, lowering both the computational and communication costs of KhPRF. OPA primarily focuses on verifying the correctness of key sharing and KhPRF masking using SCRAPE Test \cite{cascudo2024publicly} and ZKP techniques. In contrast, our scheme aims to mitigate the security risks caused by key forwarding and enhance the performance of KhPRF through CRT packing in the semi-honest model. As a result, our scheme achieves better efficiency than OPA under semi-honest conditions, whereas OPA provides a better global model in a malicious environment due to its verification mechanisms.

\textbf{Comparison with LERNA.}
Although LERNA also uses a KhPRF and requires only a single round of interaction, it employs a flat SS instead of Shamir's SS. LERNA requires every client to communicate with the decryptors during the setup phase, leading to a significant setup overhead, with decryptors needing several gigabytes of communication. The large setup cost makes LERNA less suitable for FL. In contrast, our scheme has stateless decryptors who only need to download the users' public keys from the server, without any additional setup or communication overhead, and supports dynamic changes.

\textbf{Comparison with Willow.}
Our optimizations for almost KhPRF and SS are specifically tailored for OPA-like frameworks and are not directly applicable to Willow’s architecture. In Willow, the subprotocol to aggregate keys is based on an asymmetric message homomorphic scheme, while our approach relies on a SS-based  method. On one hand, our scheme requires only a single round of participation from stateless decryptors. In contrast, Willow involves two rounds of interaction and relies on stateful decryptors. While Willow focuses on the single-iteration setting, it is unclear how it can be extended to a multi-iteration setting without requiring a two-round setup procedure. As a result, Willow is not suitable for multi-round secure aggregation, and its scheme is more appropriate for single-shot histogram aggregation over large client populations while our method focuses on  secure aggregation in FL.

\section{Preliminaries}

\subsection{Notations}
\label{sec:notations}
For convenience in the following descriptions, we define some notations here. Let $N$ represent the number of users, $M$ the number of share holders, $R$ the number of aggregation rounds, and $d$ the dimension of the model parameters. We use bold lowercase letters to represent vectors, e.g., $\mathbf{x}$ denotes a vector, and $\mathbf{x}[i]$ represents the $i$-th element. We use $\mathcal{D}_i$ to denote the dataset of user $U_i$. The size of the dataset $\mathcal{D}_i$ is represented as $\Vert \mathcal{D}_i \Vert$. We use $\mathbb{Z}_p$ to denote the integer field, with elements ${0, 1, \dots, p-1}$, where $p$ is a prime number. All operations over $\mathbb{Z}_p$ are performed modulo $p$. We use $\{x_i\}_{i \in [1:n]}$ to denote the set $\{x_1, x_2, \dots, x_n\}$ obtained by iterating over all $i \in \{1, 2, \dots, n\}$.

\subsection{Secure Aggregation for Federated Learning}
We assume that the input $\mathbf{x}_i^{(r)}$ for each user $U_i$ in the $r$-th round of secure aggregation is a $d$-dimensional vector in the ring $\mathbb{Z}_{p_m}$. The goal of secure aggregation is to securely compute:
\begin{equation}
\label{eq:fl_aggregation}
\mathbf{x}_0^{(r)} = \sum_{i \in \mathcal{U}} \mathbf{x}_i^{(r)} \bmod p_m,
\end{equation}
where $\mathcal{U}$ denotes the set of users online in the $r$-th round.
More details about secure aggregation and FL are in \cref{apx:pre_sa}.

\subsection{Secret Sharing}
In a Secret Sharing (SS) scheme, there exists a dealer and $M$ share holders, denoted as $P_1, P_2, \cdots, P_M$. The dealer divides the secret $s$ into $M$ shares, $s_1, s_2, \cdots, s_M$, using a sharing algorithm, and transmits each share $s_m$ to the corresponding holder $P_m$ via a secure communication channel. The security of the sharing process ensures that no single share $s_m$ reveals any information about the secret $s$. 
When reconstruction is needed, each holder $P_m$ sends their share $s_m$ to the reconstruction entity (the server). The server then uses a reconstruction algorithm along with the shares $s_1, s_2, \cdots, s_M$ to recover the secret $s$.
The details of the used 2-out-of-2 additive SS and Shamir's SS in this work are described in \cref{apx:ss}.

\subsection{Pseudo-Random Function}
A Pseudo-Random Function (PRF) $\mathcal{F}_\kappa(v_0)$ operates by taking a secret key $\kappa$ and a public initial value $v_0$ as inputs, and produces a sequence of bits that appear to be uniformly random. One commonly used construction for a PRF is based on the AES counter mode (CTR mode). For practical purposes, the output is mapped to a vector over $\mathbb{Z}_p$. Thus, we denote the output vector of length $\ell$ with elements in $\mathbb{Z}_p$ as $\mathbf{y} = \mathcal{F}_\kappa(v_0, \ell)$. A PRF is considered secure if $\mathbf{y}$ is indistinguishable from a vector of length $\ell$ over $\mathbb{Z}_p$ that is uniformly random. Moreover, the output should be completely different when either $\kappa$ or $v_0$ is changed. A formal definition and security requirements can be found in Chapter 3.5 of \cite{katz2020introduction}.

KhPRF is a special type of PRF with additive key homomorphism:
\begin{equation}
    \mathcal{F}_{\kappa_1+\kappa_2}(v_0, \ell) = \mathcal{F}_{\kappa_1}(v_0, \ell) + \mathcal{F}_{\kappa_2}(v_0, \ell) \bmod p.
\end{equation}
Efficient KhPRFs are typically based on LWR or LWE constructions \cite{boneh2013key}. However, the KhPRF based on LWE is not exact; it actually satisfies an approximate additive key homomorphism:
\begin{equation}
    \mathcal{F}_{\kappa_1+\kappa_2}(v_0, \ell) = \mathcal{F}_{\kappa_1}(v_0, \ell) + \mathcal{F}_{\kappa_2}(v_0, \ell) + \mathbf{e} \bmod p.
\end{equation}
This is referred to as an $\epsilon$-almost KhPRF, where $\epsilon$ is the $L_1$-norm of the error vector $\mathbf{e}$.

Since in FL, $\ell$ typically needs to be large, an exact KhPRF cannot meet the efficiency requirements. Therefore, only an almost KhPRF can be used. As shown in the experimental results in \cite{karthikeyan2025opa}, the LWR-based scheme performs nearly two orders of magnitude faster than an exact KhPRF scheme based on the Hidden Subgroup Membership assumption. 
Hence, we adopt the almost KhPRF and propose a new encoding to handle the errors introduced by the almost KhPRF.

\subsection{Other Cryptographic Primitives}
A key-agreement (KA) protocol allows two parties to agree on a shared private key through public-key exchange. $P_i$ and $P_j$ each generate a public-private key pair, $(pk_i, sk_i) = \operatorname{KA.GenKey(\lambda)}$ and $(pk_j, sk_j) = \operatorname{KA.GenKey(\lambda)}$, where $\lambda$ is the security parameter, and then send their public keys, $pk_i$ and $pk_j$, to each other. $P_i$ uses $sk_i$ and $pk_j$ to compute the shared private key $sk_{i,j} = \operatorname{KA.Agree(sk_i, pk_j)}$. Similarly, $P_j$ uses $sk_j$ and $pk_i$ to compute $sk_{j,i} = \operatorname{KA.Agree(sk_j, pk_i)}$. The KA protocol requires that $sk_{i,j} = sk_{j,i}$. 
A secure KA ensures that the public keys $pk_i$ and $pk_j$ do not leak any information about the private keys $sk_i$ and $sk_j$. Additionally, no external party can obtain any information about $sk_{i,j}$.


A hash function accepts an input of arbitrary length and returns a fixed-length output. Let $y = \operatorname{Hash}(x)$. A secure hash function satisfies that given $y$, finding $x$ is computationally difficult, and finding $x' \neq x$ such that $y = \operatorname{Hash}(x')$ is also computationally hard. Furthermore, finding $x_1 \neq x_2$ such that $\operatorname{Hash}(x_1) = \operatorname{Hash}(x_2)$ is computationally difficult.

\begin{figure*}
    \centering
    \includegraphics[width=\linewidth]{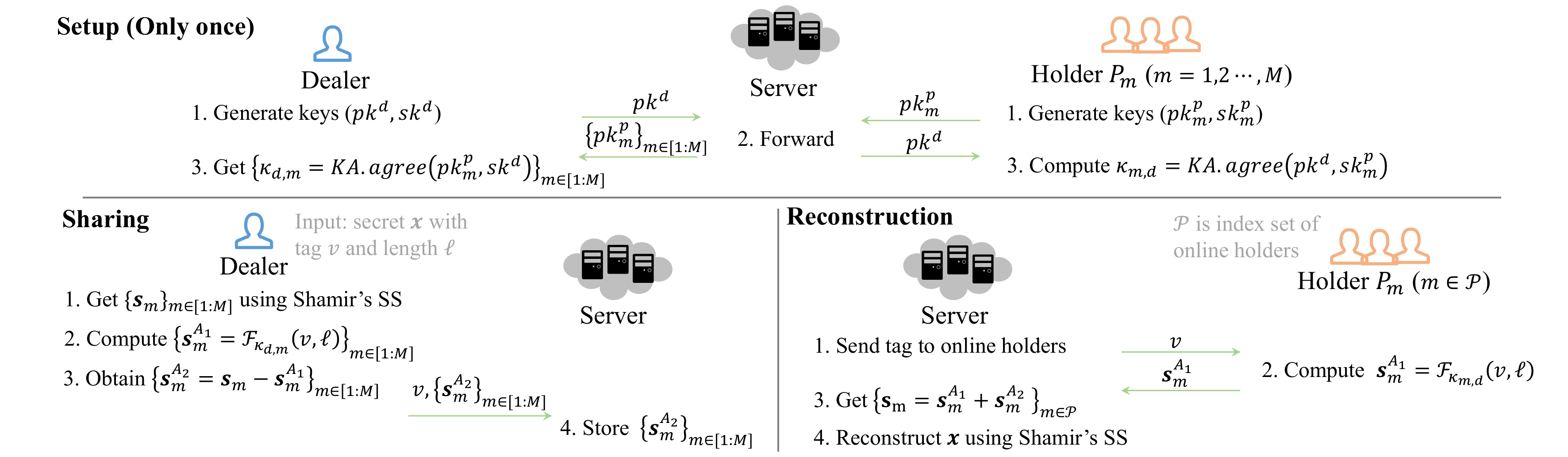}
    \caption{Details of our \TLSS. For simplicity, we assume that all $v_m$ equal to $v$. The dealer's secret is a vector $\mathbf{x}$ of length $\ell$. The setup phase only needs to be performed once, after which multiple secrets can be shared and reconstructed.}
    \label{fig:tlss_protocol}
\end{figure*}

\section{Two Layer Secret Sharing}
\label{sec:tlss}
The scenario we consider involves a dealer and the share holders $P_1, P_2, \dots, P_M$ who do not have direct communication channels between them and require a server for intermediary communication. After some time, the server needs to reconstruct the secret. This situation is quite common in cloud computing architectures that require collaboration among multiple users, especially in the context of FL \cite{bonawitz2017practical,bell2020secure,karthikeyan2025opa}. 
Our \TLSS scheme is specifically designed for such a secret sharing scenario. \TLSS consists of three phases: Setup, Sharing, and Reconstruction. We assume that communication between all participants is secure, for example, through SSL communication \cite{oppliger2023ssl}. We now describe the details of the three phases.

In the Setup phase, all participants share keys based on the security parameter $\lambda$. The dealer generates a public-private key pair $(pk^d, sk^d)$ and sends $pk^d$ to the server. Each holder $P_m$ generates a key pair $(pk^{p_m}, sk^{p_m})$ and sends $pk^{p_m}$ to the server.
The dealer then downloads all public keys $\{pk^{p_m}\}_{m \in [1:M]}$ from the server, and using $sk^d$, shares a key $\kappa_{d,m}$ with each holder $P_m$. Meanwhile, each holder downloads $pk^d$ and computes $\kappa_{m,d}$ using $sk^{p_m}$.

In the Sharing phase, the user shares the secret $x$ with all holders. The user first applies Shamir's SS to generate shares $s_m$ for each holder $P_m$. This involves sampling $T-1$ random numbers from $\mathbb{Z}_p$ to construct a secret polynomial $f(x)$, then computing $s_m = f(\beta_m) \bmod p$. The user then uses additive SS to share $s_m$ with $P_m$ and the server.
First, the user generates a random tag $v_m$ and uses the shared key $\kappa_{d,m}$ with $P_m$ to generate a pseudorandom number $s_m^{A_1} = \mathcal{F}_{\kappa_{d,m}}(v_m, 1)$. Then, the user computes $s_m^{A_2} = s_m - s_m^{A_1} \bmod p$. Finally, the user sends the tags $\{v_m\}_{m \in [1:M]}$ and the values $\{s_m^{A_2}\}_{m \in [1:M]}$ to the server. Here, $v_m$ serves merely as an identifier for different users' inputs and can be either the same for all holders or a public parameter agreed upon in advance. Thus, the only data actually sent in the Sharing phase are the values $\{s_m^{A_2}\}_{m \in [1:M]}$.

In the Reconstruction phase, the server obtains the shares from the share holders and reconstructs the secret. The server first sends the tag $v_m$ to the holder $P_m$. Then, the holder $P_m$ computes $s_m^{A_1,p} = \mathcal{F}_{\kappa_{m,d}}(v_m, 1)$ and sends it back to the server. The server then recovers the Shamir’s SS by computing $s_m' = s_m^{A_1,p} + s_m^{A_2} \bmod p$, and uses $(\beta_m, s_m')$ to reconstruct the secret $x$. Only if at least $T$ share holders successfully send their messages, can the server successfully reconstruct the true secret.

The correctness is obvious: KA ensures that the key $\kappa_{d,m}$ computed by the dealer is identical to the key $\kappa_{m,d}$ computed by the holder $P_m$, thereby guaranteeing that the derived value $s_m^{A_1,p}$ from the PRF by the holder $P_m$ is equivalent to the value $s_m^{A_1}$ computed by the dealer. Consequently, the server can accurately reconstruct the Shamir's secret share $s_m$.

\begin{equation*}
    \begin{aligned}
        s_m^\prime&=s_m^{A_1,p}+s_m^{A_2} \bmod p = \mathcal{F}_{\kappa_{m,d}}(v_m,1)+s_m^{A_2} \bmod p \\
        &=\mathcal{F}_{\kappa_{d,m}}(v_m,1)+s_m^{A_2} \bmod p = s_m^{A_1}+s_m^{A_2} \bmod p =s_m
    \end{aligned}
\end{equation*}
Therefore, the reconstruction algorithm of Shamir's SS guarantees that the server can accurately reconstruct the secret $x$. The detailed pseudocode and communication flow are provided in \cref{fig:tlss_protocol}.
Next, we analyze some useful and interesting properties of our \TLSS.

\subsection{Security}

To prove the security of \TLSS, we need to demonstrate two points. First, the shares of any two secrets are indistinguishable. Second, fewer than $T$ share holders colluding cannot obtain any information about the secret. The second point is evident, as \TLSS does not send any information to $\{P_m\}_{m \in [1:M]}$ during the Sharing phase. Therefore, when share holders collude, they can only access the keys negotiated during the Setup phase, and nothing else. Since the key and the secret $x$ are unrelated, even if all share holders collude, they cannot obtain any information about $x$.

For the first point, we have the following theorem:
\begin{theorem}
\label{thm:ss_security_distinguish}
    For any two secret inputs $x_1, x_2 \in \mathbb{Z}_p$, the distributions of the \TLSS shares $\{s_m^{A_2}\}_{m \in [1:M]}^{x_1}$ and $\{s_m^{A_2}\}_{m \in [1:M]}^{x_2}$ are computationally indistinguishable.
\end{theorem}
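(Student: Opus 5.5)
The plan is a standard hybrid argument that replaces the PRF outputs by uniform randomness, after which the shares become a one-time pad. The adversary here is the server (or any observer of the server's view), who sees the public keys $pk^d, \{pk^{p_m}\}$, the tags $\{v_m\}$, and the values $\{s_m^{A_2}\}_{m\in[1:M]}$, but not the keys $\kappa_{d,m}$.

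\textbf{Game 0} is the real distribution for secret $x_b$. Here $s_m^{A_2} = f_b(\beta_m) - \mathcal{F}_{\kappa_{d,m}}(v_m,1) \bmod p$, where $f_b$ is a random degree-$(T-1)$ polynomial with $f_b(0)=x_b$.

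\textbf{Hybrid 1} replaces each agreed key $\kappa_{d,m}$ by an independent uniformly random key. By the security of KA (the shared key is hidden from anyone seeing only the public keys), this change is indistinguishable. I apply it one holder at a time, $M$ hybrids in all. Each step reduces to a KA distinguisher that embeds the challenge public keys as $pk^d, pk^{p_m}$. A subtlety is that all $M$ keys share the dealer's $sk^d$. The reduction therefore needs the KA to remain secure when one party's key pair is reused. I would assume this, as in the DH/CDH-with-hashing setting, or else state it as part of the KA assumption.

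\textbf{Hybrid 2} replaces each $\mathcal{F}_{\kappa_{d,m}}(v_m,1)$ by an independent uniform $r_m\in\mathbb{Z}_p$. Since the keys are now independent and uniform, PRF security gives this, again via $M$ hybrids. A single holder's reduction queries its oracle on $v_m$ and simulates all the other coordinates itself.

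\textbf{Final step.} In Hybrid 2 we have $s_m^{A_2} = f_b(\beta_m) - r_m$ with fresh uniform $r_m$, so the vector $\{s_m^{A_2}\}$ is uniform on $\mathbb{Z}_p^M$ regardless of $f_b$, and hence of $x_b$. The one-time-pad property therefore makes the Hybrid 2 distributions for $x_1$ and $x_2$ identical. Chaining the hybrids gives a distinguishing advantage of at most $M(\mathrm{Adv}_{KA}+\mathrm{Adv}_{PRF})$, which is negligible. Notably, this argument does not need the Shamir threshold at all; the threshold only matters once holders reply in reconstruction, which the statement excludes.

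The main obstacle is Hybrid 1: making precise how the KA security notion covers the reuse of $sk^d$ across $M$ agreements, and requiring the tags to be public and non-repeating within a key. I would handle the first with a multi-instance (decisional DH plus hash) assumption, losing a factor $M$. The PRF step itself is routine.
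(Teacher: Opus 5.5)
Your proof is correct and shares the paper's core idea: replace each pad $\mathcal{F}_{\kappa_{d,m}}(v_m,1)$ by an independent uniform element of $\mathbb{Z}_p$ via PRF security (with fresh tags per key, as the paper also notes), after which every $s_m^{A_2}$ is a one-time-pad encryption. The decomposition differs in two useful ways.

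\textbf{Order of the hybrids.} The paper's simulator first replaces $x_1$ by a random input in the Shamir layer. It justifies this by claiming that Shamir shares are computationally indistinguishable from random, and only afterwards swaps the PRF outputs. On its own, that first step fails whenever $M \ge T$: all $M$ shares determine the secret, so a distinguisher that knows $x_1$ can interpolate and tell them apart from shares of a random value. The step is harmless only because the pads are replaced later. Your ordering avoids this: you swap the pads first and then observe that $\{f_b(\beta_m)-r_m\}_m$ is uniform on $\mathbb{Z}_p^M$ for every polynomial $f_b$. As you say, this also shows that the Shamir threshold plays no role in this theorem.

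\textbf{Key agreement.} The paper's proof never mentions key agreement. It implicitly treats the $\kappa_{d,m}$ as uniform PRF keys and considers only the share values. Your Hybrid 1 makes that assumption explicit and extends the statement to the server's full view, including $pk^d$ and $\{pk^{p_m}\}$.

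\textbf{Reuse of $sk^d$.} The obstacle you flag there is milder than you expect. In the hybrid step for holder $m$, the reduction gets $(pk^d, pk^{p_m}, K)$ from the KA challenger and generates the other holders' key pairs itself. It computes their keys as $\operatorname{KA.Agree}(sk^{p_{m'}}, pk^d)$, which equals the dealer's key by correctness. Ordinary two-party KA security therefore suffices, with no multi-instance assumption, at the same factor-$M$ loss.
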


The proof of \cref{thm:ss_security_distinguish} is given in \cref{apx:proof_ss_security_distinguish}.

In addition to the dealer and share holders, our scheme also involves a server. Therefore, we consider the security of \TLSS when the server colludes with the share holders. We have the following theorem:
\begin{theorem}
\label{thm:ss_security_collusion}
    If the number of share holders compromised by the server is fewer than $T$, then the server cannot obtain any information about secret $x$ from its \TLSS shares $\{s_m^{A_2}\}_{m \in [1:M]}$.
\end{theorem}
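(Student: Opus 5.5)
The plan is a hybrid argument. Let $\mathcal{C}\subset\{1,\dots,M\}$ with $|\mathcal{C}|\le T-1$ be the indices of the holders compromised by the server, and let $\mathcal{H}$ be the rest. The server's view consists of:
\begin{itemize}
\item all public keys from the Setup phase;
\item the tags $\{v_m\}_{m\in[1:M]}$ and the values $\{s_m^{A_2}\}_{m\in[1:M]}$;
\item the internal state of each $P_m$ with $m\in\mathcal{C}$, in particular the agreed keys $\kappa_{m,d}$.
\end{itemize}
For $m\in\mathcal{C}$ the server can compute $s_m^{A_1}=\mathcal{F}_{\kappa_{m,d}}(v_m,1)$ and hence recover the Shamir share $s_m=s_m^{A_1}+s_m^{A_2}\bmod p$. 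The goal is to show that the joint view, for secret $x_1$ and for secret $x_2$, are computationally indistinguishable (equivalently, the view can be simulated without $x$).

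\textbf{Hybrid 1.} For each honest index $m\in\mathcal{H}$, replace the KA key $\kappa_{d,m}$ by a uniformly random key independent of the public keys. The security of the key agreement protocol says that no outside party, including a server holding $pk^d$, $pk^{p_m}$ and the corrupted holders' secrets (which are independent of $sk^{p_m}$ and $sk^d$ for honest $m$), learns anything about $\kappa_{d,m}$. This gives indistinguishability, one honest holder at a time. Note that $sk^d$ belongs to the dealer, who is not corrupted.

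\textbf{Hybrid 2.} Replace $\mathcal{F}_{\kappa_{d,m}}(v_m,1)$ by a uniform element $r_m\in\mathbb{Z}_p$ for each $m\in\mathcal{H}$, using PRF security under a key now uniform and unknown to the adversary. After this step, $s_m^{A_2}=s_m-r_m$ is exactly uniform and independent of everything else for $m\in\mathcal{H}$. This is the one-time-pad property of 2-out-of-2 additive SS.

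\textbf{Final step.} What remains dependent on $x$ is only $\{s_m\}_{m\in\mathcal{C}}$, at most $T-1$ evaluations of a degree-$(T-1)$ polynomial with uniform non-constant coefficients. By the perfect privacy of Shamir's SS (\cref{apx:ss}), these are jointly uniform and independent of $x$. So in the last hybrid the view has identical distribution for $x_1$ and $x_2$. Chaining the hybrids yields the theorem, in the same spirit as \cref{thm:ss_security_distinguish}, which is essentially the case $\mathcal{C}=\emptyset$.

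The main obstacle is justifying the first hybrid cleanly. Corrupted holders' keys $\kappa_{m,d}$ are derived from the same $sk^d$ as the honest ones. We therefore need a KA assumption in a multi-party form (e.g. DDH-based: given $pk^d$ and several $pk^{p_m}$ with known $sk^{p_m}$, the remaining agreed keys are pseudorandom). I would first try a reduction embedding the DDH challenge into $pk^d$ and the honest $pk^{p_m}$ via random self-reducibility, simulating the corrupted keys from their known secret exponents. If that is too delicate, I would simply model the honest agreed keys as ideal random keys, as is standard in this line of work.
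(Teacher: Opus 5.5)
Your proposal is correct and follows the same route as the paper's proof: corrupted holders let the server recover at most $T-1$ Shamir shares, key-agreement security keeps the honest holders' keys hidden, and Shamir's privacy finishes the argument. Your version is more careful in two places. It makes explicit the PRF/one-time-pad step, which the paper leaves implicit when it says the server ``cannot obtain any other Shamir shares''. It also addresses the multi-party key-agreement point that the honest agreed keys must stay pseudorandom even though they share $sk^d$ with the corrupted ones; your one-honest-holder-at-a-time DDH hybrid handles this without difficulty.
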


\begin{proof}
    The proof here is straightforward. When the server colludes with $P_m$, it can obtain the value of $\kappa_{m,d}$, which allows the server to compute $s_m^{A_1}$. As a result, the server can compute $s_m = s_m^{A_1} + s_m^{A_2} \bmod p$. Therefore, when the server colludes with $t$ share holders, it can obtain $t$ Shamir's SS shares. Due to the security of KA, the keys of the non-colluding share holders will not be leaked, so the server cannot obtain any other Shamir shares. Based on the security of Shamir's SS, when $t < T$, the server cannot learn any information about $x$.
\end{proof}

\subsection{Additive Homomorphism}
Below, we prove that \TLSS satisfies the additive homomorphism property, which means that the shares of the sum of two secrets can be obtained from the shares of the individual secrets without the need for the dealer to re-execute the sharing algorithm. Intuitively, since both additive SS and Shamir's SS satisfy additive homomorphism, \TLSS inherits this property.

Let the dealer construct the secret polynomial $f_1(x)$ for secret $x_1$, and the secret polynomial $f_2(x)$ for secret $x_2$. Therefore, the Shamir's SS shares for the holder $P_m$ corresponding to secrets $x_1$ and $x_2$ are given by $
s_m^{x_1} = f_1(\beta_m) \bmod p$ and $s_m^{x_2} = f_2(\beta_m) \bmod p$.

Let the tags of the two secrets for $P_m$ be denoted as $v_m^{x_1}$ and $v_m^{x_2}$, respectively. Then, the \TLSS share values computed by $P_m$ are $s_m^{A_1,x_1} = \mathcal{F}_{\kappa_{m,d}}(v_m^{x_1}, 1)$ and $s_m^{A_1,x_2} = \mathcal{F}_{\kappa_{m,d}}(v_m^{x_2}, 1)$.
Since $\kappa_{m,d} = \kappa_{d,m}$, the shares received by the server are $\{s_m^{A_2,x_1} = s_m^{x_1} - s_m^{A_1,x_1} \bmod p\}_{m \in [1:M]}$ and $\{s_m^{A_2,x_2} = s_m^{x_2} - s_m^{A_1,x_2} \bmod p\}_{m \in [1:M]}$.

Next, we prove that the server and the share holders can locally compute the shares of $x_3 = x_1 + x_2 \bmod p$. A valid share of $x_3$ for each $P_m$ is given by:
\begin{equation*}
    s_m^{A_1,x_3} = s_m^{A_1,x_1} + s_m^{A_1,x_2} \bmod p
\end{equation*}
and the corresponding shares for the server are:
\begin{equation*}
    \{s_m^{A_2,x_3} = s_m^{A_2,x_1} + s_m^{A_2,x_2} \bmod p\}_{m \in [1:M]}.
\end{equation*}

We now show that after obtaining the share $s_m^{A_1,x_3}$ from $P_m$, the server can compute the corresponding Shamir's share of $x_3$, denoted as $f_3(\beta_m)$, where $f_3(x) = f_1(x) + f_2(x) \bmod p$.
\begin{equation*}
    \begin{aligned}
        s_m^{x_3} &= s_m^{A_1,x_3} + s_m^{A_2,x_3} \bmod p \\
        &= s_m^{A_1,x_1} + s_m^{A_1,x_2} + s_m^{A_2,x_1} + s_m^{A_2,x_2} \bmod p \\
        &= s_m^{x_1} + s_m^{x_2} \bmod p \\
        &= f_1(\beta_m) + f_2(\beta_m) \bmod p \\
        &= f_3(\beta_m) \bmod p.
    \end{aligned}
\end{equation*}
Thus, the server can reconstruct the secret $x_3$ using the points $\{(\beta_m, s_m^{x_3})\}_{m \in [1:M]}$.

\subsection{Batch Processing}
In \cite{braun2023secure, karthikeyan2025opa}, the concept of packed Shamir's SS was proposed. They demonstrated that multiple secrets can be hidden within a secret polynomial $f(x)$, while keeping the output size unchanged. The basic idea of packed Shamir's SS is to construct two polynomials: one is the secret polynomial $s(x)$, and the other is a random polynomial $r(x)$. The random polynomial is used to obscure the secret polynomial, resulting in a share polynomial, which is then distributed to the share holders. 

We can replace the Shamir’s SS in the construction of \TLSS with packed Shamir’s SS without affecting its properties, such as security and additive homomorphism. This is because the additive SS at the second layer is used for a one-time padding encryption, which ensures the security of shares. In the proof of additive homomorphism, we only rely on the additive homomorphism property of Shamir’s SS. If the packed Shamir's SS satisfies the additive homomorphism property at the first level, the constructed \TLSS will also satisfy this property.

In addition to using packed SS to improve the efficiency of \TLSS, we can further optimize the computation of shares by enhancing the PRF generation. Notice that each time we invoke the PRF, we need a tag to generate a pseudo-random number over $\mathbb{Z}_p$. When there are multiple Shamir’s SS share values, we can generate multiple pseudo-random numbers in one call, thus reducing the number of tags. Specifically, a secret vector can be processed using only one tag. We use $\mathcal{F}_{\kappa_{d,m}}(v_m,\ell)$ to generate a pseudo-random vector, allowing the second-level additive SS to handle multiple first-level share values at once. Note that a secure PRF requires unpredictability, meaning that the $\ell$ generated pseudo-random numbers are independent. Therefore, replacing with $\mathcal{F}_{\kappa_{d,m}}(v_m,\ell)$ does not affect the security of \TLSS.

\subsection{Theoretical Overhead}

In both \TLSS and Shamir's SS, the number of secrets is proportional to the overhead. Therefore, we analyze the primary computational and communication overheads involved in sharing a secret $x \in \mathbb{Z}_p$. Since a tag is only required to uniquely identify each secret, our analysis below does not consider the overhead of tags. 

We provide a comparison of the overall overheads for sharing and reconstruction in \cref{tab:tlss_theory_compare}. As shown, our scheme significantly reduces the communication overhead during the sharing process while maintaining comparable overheads for other operations. The detailed analysis is provided below.

\subsubsection{Computational Overhead}

In the Setup phase, the dealer needs to perform $M$ key exchange operations with each share holder using the KA protocol to share a private key. The dealer must compute a key and engage in $M$ rounds of KA. The computational overhead for a secret share holder $P_m$ involves generating a public key and performing a single key exchange. The server only forwards the public keys and does not require additional computation. It is important to note that this phase is identical in both \TLSS and Shamir's SS. The difference lies in that, in Shamir's SS, the shared key is used for AE encryption, whereas in \TLSS, it is used to generate pseudo-random numbers.

In the Sharing phase, in \TLSS, the dealer needs to invoke a polynomial evaluation once, followed by $M$ calls of the PRF function, and then perform $M$ modular additions. The server only needs to store the $M$ received share values (elements in $\mathbb{Z}_p$). Each share holder $P_m$ does not require any computation. In Shamir’s SS, the dealer must perform a polynomial evaluation once, followed by $M$ AE encryptions. The server must forward all the AE ciphertexts to the corresponding share holders, with each share holder $P_m$ storing the AE ciphertext.

In the Reconstruction phase, in \TLSS, each online share holder $P_m$ calls the PRF once. The server then performs $M$ modular additions and performs a polynomial interpolation once. In Shamir’s SS, each online share holder $P_m$ decrypts the AE ciphertext once. The server needs a polynomial interpolation.

The overhead for modular addition is negligible compared to other operations, and the overheads of the PRF (implemented by AES) and AE are comparable. Therefore, the computational overhead of \TLSS and Shamir's SS is similar.

\subsubsection{Communication Overhead}

In the Setup phase, the dealer needs to upload a public key and download $M$ public keys. Each share holder $P_m$ incurs communication overhead by uploading a public key and downloading a public key. The server needs to receive $M+1$ public keys and forward $M+1$ public keys. Since the Setup phase in \TLSS and Shamir's SS is identical, the communication overhead is also the same.

In the Sharing phase, in \TLSS, the dealer sends $M$ share values. In Shamir's SS, the dealer sends $M$ AE ciphertexts. The server receives these $M$ ciphertexts and then forwards them to each share holder, thus requiring $M$ additional AE ciphertexts to be sent. Each share holder $P_m$ needs to receive one AE ciphertext.
In the Reconstruction phase, both \TLSS and Shamir’s SS require only one share value to be sent by each online share holder $P_m$.

Although AE ciphertexts are slightly larger than elements in $\mathbb{Z}_p$, the difference is minimal. However, it is noteworthy that in \TLSS, during the Sharing phase, the server does not need to distribute data, and share holders do not need to receive data. In fact, share holders may even be offline. Therefore, the communication advantage of \TLSS lies in reducing the server's communication by nearly half and eliminating the need for share holders to download data.

\begin{table}[!t]
\caption{Overall computational (comp.) and communication (comm.) overhead comparison. Eval is the polynomial evaluation overhead, Intep. is the interpolation overhead.}
\label{tab:tlss_theory_compare}
\begin{tabular}{@{}cccc@{}}
\toprule
 & Scheme & Sharing & Reconstruction \\ \midrule
\multirow{2}{*}{Comp.} &  Shamir & Eval,$M$ AE-Enc. & $M$ AE-Dec., Interp. \\
 & \TLSS & Eval, $M$ PRF, $M$ Add & $M$ PRF, $M$ Add, Interp. \\ \cmidrule(){1-4} 
\multirow{2}{*}{Comm.} &  Shamir & $2M$ AE-Ctxt & $M$ Field-Elements \\
 & \TLSS & $M$ Field-Elements & $M$ Field-Elements \\ \bottomrule
\end{tabular}
\end{table}

\section{Non-Forward Secure Aggregation}

At a high level, our protocol primarily uses KhPRF to achieve one-shot secure aggregation. First, each user locally generates a KhPRF key, and then uses the pseudo-random mask generated by KhPRF to obscure the user's input. The user then shares the key using \TLSS and sends the masked input to the server. The decryptor, based on the additive homomorphism of \TLSS, computes the sum of all online users' keys. The server then reconstructs the sum of the keys and generates a global mask based on KhPRF. Finally, the server performs an unmasking operation on the aggregated result to obtain the final outcome.

In the following subsections, we first describe the system model and threat model, followed by the presentation of our novel encoding for almost KhPRF. We then provide the details of the secure aggregation protocol and the analysis of its security.

\subsection{System Model and Threat Model}

Our assumptions and goals align with those of most current single-server protocols, such as \cite{ma2023flamingo, karthikeyan2025opa}.
In our secure aggregation protocol, there are three types of entities:

\begin{itemize}
    \item \textbf{Users:} There are $N$ users, $U_1, U_2, \dots, U_N$. Each user owns a private dataset and has the capability to train a model, but the network connection may not be stable. Users can go offline at any time, and their network bandwidth may be limited.
    \item \textbf{Server:} There is a single server, usually provided by a third-party service provider. The server offers stable service and does not go offline. The server aggregates models from different users.
    \item \textbf{Decryptors:} There are $M$ decryptors,  $P_1, P_2, \dots, P_M$. These decryptors can be selected from the users or be separate auxiliary nodes. Decryptors may also go offline, and their network bandwidth is limited. They assist the server in reconstructing the globally aggregated model.
\end{itemize}

\textbf{Communication Assumptions.} We assume that there is no direct communication channel between different users. Each user or decryptor has a secure communication channel with the server, such as SSL \cite{oppliger2023ssl}. This means that the transmitted content is only known to the sender and the receiver, and cannot be intercepted or altered by other parties. Therefore, we do not consider attackers from outside the system.

\textbf{Threat Model Assumptions.} We assume that all entities are semi-honest, meaning they follow the protocol steps correctly, but may perform additional computations to infer other entities' private information from the received data. Additionally, we assume that at most $T-1$ decryptors and the server may collude. 
In our scheme, the dropout and corruption thresholds for decryptors differ from those in OPA. In the maliciously-secure OPA, robustness against Byzantine faults requires that $D + C < \frac{M}{3}$, where $D$ represents the number of dropout decryptors and $C$ denotes the number of corrupted decryptors. However, we operate within a semi-honest model, where decryptors adhere to the protocol and provide valid shares. As a result, reconstruction only requires $T$ responses. In this case, up to $D = M - T$ dropouts are tolerated, with a condition that $C < T$ to ensure privacy.

\textbf{Design Goals.} Our protocol has three main goals.
\textbf{(1) Correctness:} The first goal is to ensure the correctness of the aggregation, meaning that the aggregation function in \cref{eq:fl_aggregation} is computed correctly.
\textbf{(2) Security:} The second goal is to ensure that the aggregation is secure, meaning that under our threat model, the private input of users will not be exposed to others, except for the aggregation result and the inputs of compromised users.
\textbf{(3) Efficiency:} The third goal is to minimize both the computational and communication overheads for the users, decryptors and the server.

\subsection{CRT Encoding for Almost KhPRF}

\subsubsection{Error Analysis of Almost KhPRF}
Since the output of KhPRF satisfies approximate key-homomorphism, although the error $\mathbf{e}$ from a single addition is relatively small, repeated additions accumulate errors. The error $\mathbf{e}$ increases linearly with the number of additions. In previous works, such as \cite{li2023lerna, guan2025opsa, karthikeyan2025opa}, researchers often mitigate the error introduced by addition by scaling the inputs with an amplification factor $\Delta$. This approach suppresses the error during decoding.

For the reader's convenience, we first present the simple construction of the KhPRF based on LWR as proposed by \cite{boneh2013key} in the random oracle model. Let $q$, $p_r$, and $n_\lambda$ be integers.
Let $\lfloor x \rfloor_{p_r}=\lfloor x \cdot \frac{p_r}{q} \rfloor \bmod p_r$ be the rounding function. Let $H_2(x): \chi \rightarrow \mathbb{Z}_q^{n_\lambda}$ be a hash function from input to $\mathbb{Z}_q^{n_\lambda}$. Then the KhPRF is defined as follows:
\begin{equation}
\label{eq:LWR_random_oracle}
    F_{H}(\mathbf{k},x)=\lfloor \langle H_2(x),\mathbf{k}\rangle \rfloor_{p_r},
\end{equation}
where $\langle \cdot,\cdot \rangle$ is the inner product of two vectors with modulo $q$. As shown in \cite{boneh2013key}, this construction satisfies the following property:
\begin{equation}
    F_{H}(\mathbf{k}_1+\mathbf{k}_2,x)=F_{H}(\mathbf{k}_1,x)+F_{H}(\mathbf{k}_2,x)+e,
\end{equation}
where $e \in \{0,1\}$. It is easy to prove. Let $y_1 = \langle H_2(x), \mathbf{k}_1 \rangle$ and $y_2 = \langle H_2(x), \mathbf{k}_2 \rangle$. 
Due to the linearity of the inner product, we have $\langle H_2(x),\mathbf{k}_1+\mathbf{k}_2 \rangle=y_1+y_2 \bmod q$.
Let $y_1p_r = I_1q + r_1$ and $y_2p_r = I_2q + r_2$, where $I_1, I_2 \in \mathbb{Z}$ and $r_1, r_2 \in \mathbb{Z}_q$. Therefore, we have $F_H(\mathbf{k}_1, x) = \lfloor y_1 \rfloor_{p_r} = I_1$ and $F_H(\mathbf{k}_2, x) = \lfloor y_2 \rfloor_{p_r} = I_2$.
Then we have the following:
\begin{equation*}
    \begin{aligned}
        F_H(\mathbf{k}_1+\mathbf{k}_2,x)&=\lfloor y_1+y_2 \rfloor_{p_r}\\
        &=\lfloor I_1+\frac{r_1}{q}+I_2+\frac{r_2}{q} \rfloor\\
        &=I_1+I_2+\lfloor \frac{r_1+r_2}{q} \rfloor \\
        &=F_H(\mathbf{k}_1,x)+F_H(\mathbf{k}_2,x)+e
    \end{aligned}
\end{equation*}
where $e=\lfloor \frac{r_1+r_2}{q} \rfloor$. Since $r_1 + r_2 \in [0, 2q - 2]$, it follows that $e \in \{0, 1\}$.

Next, we analyze the situation when multiple additions are performed. Suppose there are
$n$ keys $\mathbf{k}_1,\cdots,\mathbf{k}_2,\cdots,\mathbf{k}_n$. Let $\mathbf{k}=\sum_{i=1}^n\mathbf{k}_i$. Let $y_i=\langle H_2(x),\mathbf{k}_i \rangle$ and $y_ip_r=I_iq+r_i$ where $I_i \in \mathbb{Z}$ and $r_i \in \mathbb{Z}_q$.  Then we have
\begin{equation*}
    \begin{aligned}
        F_H(\mathbf{k},x)&=\lfloor \sum_{i=1}^n y_i  \rfloor_{p_r} = \lfloor  \sum_{i=1}^n I_i +\frac{\sum_{i=1}^n r_i}{q} \rfloor\\
        &=\sum_{i=1}^n I_i + \lfloor \frac{\sum_{i=1}^n r_i}{q} \rfloor = \sum_{i=1}^n F_H(\mathbf{k}_i,x)+e
    \end{aligned}
\end{equation*}
where $e=\lfloor \frac{\sum_{i=1}^n r_i}{q} \rfloor$.

Since $r_i \in [0, q-1]$, it follows that $\sum_{i=1}^n r_i \in [0, nq - n]$. Therefore, $e \leq \lfloor \frac{nq - n}{q} \rfloor = n - \lfloor \frac{n}{q} \rfloor$. Typically, $q$ is much larger than $n$, so $e \in [0, n]$. 
Note that since different keys are independent, the values of different $r_i$ are also independent, and their values do not influence each other. Thus, it is possible for $\sum_{i=1}^n r_i = nq - n$. In other words, $\vert e \vert \leq n$ is a supremum (least upper bound).

In other KhPRF constructions based on LWR or RLWR, such as the KhPRF in the standard model presented in \cite{boneh2013key}, or the RLWE KhPRF variant in \cite{banerjee2012pseudorandom}, the rounding function $\lfloor \cdot \rfloor_{p_r}$ is the sole operation that introduces error.
Therefore, the error analysis in these constructions applies to the analysis presented above. We thus have the following theorem:

\begin{theorem}
    For any almost KhPRF $F_H(\mathbf{k},x)$ based on rounding function $\lfloor \cdot \rfloor_{p_r}$, we have 
    \begin{equation*}
        F_H(\sum_{i=1}^n \mathbf{k}_i,x)=\sum_{i=1}^n F_H(\mathbf{k}_i,x)+e
    \end{equation*}
    where $e \in [0,n]$.
\end{theorem}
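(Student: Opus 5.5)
The plan is to reduce everything to a single scalar statement about the rounding map $\lfloor \cdot \rfloor_{p_r}$. We abstract away the construction and assume only what the statement says. The almost KhPRF has the form $F_H(\mathbf{k},x)=\lfloor L_x(\mathbf{k}) \rfloor_{p_r}$, where $L_x$ is a map from keys to $\mathbb{Z}_q$ that is additive modulo $q$:
\[
L_x\Bigl(\sum_i \mathbf{k}_i\Bigr)=\sum_i L_x(\mathbf{k}_i) \bmod q .
\]
This covers the inner product $\langle H_2(x),\mathbf{k}\rangle$ of \cref{eq:LWR_random_oracle}, the standard-model LWR KhPRF of \cite{boneh2013key}, and the ring variant of \cite{banerjee2012pseudorandom}, applied coordinatewise. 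In each case, the linear part before rounding is exactly homomorphic. So it suffices to prove the following: for any $y_1,\dots,y_n\in\mathbb{Z}_q$, the difference $\lfloor \sum_i y_i \bmod q\rfloor_{p_r}-\sum_i\lfloor y_i\rfloor_{p_r}$ equals an integer $e\in[0,n]$, understood modulo $p_r$.

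For the key step, write $y_ip_r=I_iq+r_i$ with $0\le r_i\le q-1$, so that $\lfloor y_i\rfloor_{p_r}=I_i$. We then handle the reduction modulo $q$ of the sum, which the computation in the text glosses over. Let $Y=\sum_i y_i$ computed over the integers, and write $Y=Y'+cq$ with $Y'=Y\bmod q$. Then
\[
Y'p_r=Yp_r-cqp_r=\Bigl(\sum_i I_i - cp_r\Bigr)q+\sum_i r_i .
\]
Taking floors after dividing by $q$ gives
\[
\lfloor Y'p_r/q\rfloor=\sum_i I_i-cp_r+\lfloor \textstyle\sum_i r_i/q\rfloor .
\]
Reducing modulo $p_r$ kills the $cp_r$ term. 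This yields $F_H(\sum\mathbf{k}_i,x)=\sum_i F_H(\mathbf{k}_i,x)+e \bmod p_r$ with $e=\lfloor \sum_i r_i/q\rfloor$. The requirement $q\mid p_r\cdot$ is not needed here; we only need $p_r$ to be an integer.

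Bounding $e$ is then immediate. Since $0\le \sum_i r_i\le n(q-1)<nq$, we get $0\le e\le n-1$, which lies in $[0,n]$ as claimed. This actually sharpens the bound the text derives via $n-\lfloor n/q\rfloor$. For vector-valued outputs, such as ring elements or several hash outputs, we would apply the argument coordinatewise, getting $\mathbf{e}$ with every entry in $[0,n]$.

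The main obstacle is justifying that the rounding is the \emph{only} non-linear step in the "any almost KhPRF based on rounding" class. We would state this as the hypothesis, namely that the output is $\lfloor\cdot\rfloor_{p_r}$ applied to a key-linear function modulo $q$. We would then check that each cited construction fits this hypothesis. The standard-model LWR PRF involves products of secret matrices, which are linear in the single key vector $\mathbf{k}$ being summed. The ring variant is a ring multiplication by a public element, which is also linear in the key. With that hypothesis fixed, the carry handling modulo $q$ above is the only delicate point, and it is resolved by the $cp_r$ cancellation.
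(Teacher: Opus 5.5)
Your proposal is correct and follows the paper's argument: the same decomposition $y_ip_r=I_iq+r_i$, the same error term $e=\lfloor \sum_i r_i/q\rfloor$, and the same appeal to rounding being the only non-linear step in the other constructions. Your additions are genuine improvements over the paper's computation. First, you handle the reduction modulo $q$ explicitly: the extra $cp_r$ term vanishes modulo $p_r$, whereas the paper silently rounds $\sum_i y_i$ as an integer. Second, your bound $e\le n-1$ is sharper and correct; the paper's intermediate claim $\lfloor (nq-n)/q\rfloor=n-\lfloor n/q\rfloor$, and its assertion that $n$ is the least upper bound, are off by one, although the stated range $[0,n]$ still holds. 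One minor slip on your side: the matrices in the standard-model construction of \cite{boneh2013key} are public parameters, not secret ones, though only linearity in $\mathbf{k}$ matters for your argument.
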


Let the private input be denoted as $x_i$, and $\tau$ be the public parameter.
Existing methods \cite{li2023lerna,guan2025opsa,karthikeyan2025opa} mask the input $x_i$ with $F_H(\mathbf{k}_i,\tau)$ as follows:
\begin{equation}
\label{eq:existing_mask}
    y_i=\Delta x_i- F_H(\mathbf{k}_i,\tau) \bmod p_r.
\end{equation}

If there are $n$ inputs involved in the aggregation, then the final aggregation result is:
\begin{equation}
    \sum_{i=1}^nx_i = \lfloor \frac{1}{\Delta} \times (\sum_{i=1}^ny_i+  F_H( \sum_{i=1}^n\mathbf{k}_i,\tau) \bmod p_r) \rfloor.
\end{equation}

To correctly decode, $\Delta$ must satisfy $\Delta > n$. This means that each input $x_i$ will increase by at least $\log_2(n+1)$. When the input is a $d$-dimensional vector $\mathbf{x}_i$, the increase will be $d\log_2(n+1)$. Therefore, the masking method used in existing approaches, as shown in \cref{eq:existing_mask}, will significantly increase the communication overhead for users  when the number of users involved in the aggregation is large.

\subsubsection{CRT Encoding}
Our solution is to combine several elements together to reduce the input dimensionality, thereby lowering the communication overhead. Clearly, the merging method for these elements must satisfy additively homomorphic properties, otherwise, it will not support the aggregation of inputs from different users. We propose using the Chinese Remainder Theorem to merge elements of different dimensions.

Let $p_1, p_2, \dots, p_{d_c} > p_m$ be $d_c$ pairwise coprime positive integers, where $\mathbb{Z}_{p_m}$ is the plaintext space. Let $p_c = \prod_{i=1}^{d_c} p_i$. For the encoding of an input $\mathbf{x} \in \mathbb{Z}_{p_m}^{d_c}$, we first represent each element $\mathbf{x}[i]$ extended to $\mathbb{Z}_{p_i}$, and then use the Chinese Remainder Theorem to combine the $d_c$ elements into a single element in $\mathbb{Z}_{p_c}$. Below, we provide a detailed description of the packing process.

\textbf{Representation Extending.}
As described in \cref{apx:pre_sa}, we use the interval $(0,\frac{p_m}{2})$ to represent positive integers and $(\frac{p_m}{2},p_m)$ to represent negative integers. The actual negative numbers lie in $(-\frac{p_m}{2},0)$. In practical applications, we do not require the results of modular operations over $\mathbb{Z}_{p_m}$, but rather the true additive results. Therefore, when configuring the parameters, the maximum value of the final aggregated result is typically less than $\frac{p_m}{2}$, and the minimum value is greater than $-\frac{p_m}{2}$. Consequently, the effective range is $(-\frac{p_m}{2},\frac{p_m}{2})$. For non-negative numbers, we leave them unchanged, whereas for negative numbers, we map them to $(\frac{p_i}{2},p_i)$ by adding $p_i$. 

We denote this process of representation extension by $\phi(x,p_i)$. During decoding, if a result is less than $\frac{p_i}{2}$, it remains unchanged; otherwise, we subtract $p_i$. We denote this inverse mapping by $\phi^{-1}(x,p_i)$. The following lemma holds:

\begin{lemma}
\label{lem:promote}
    If $x_1, x_2 \in (-\frac{p_m}{2},\frac{p_m}{2})$ and $x_1 + x_2 \in (-\frac{p_m}{2},\frac{p_m}{2})$, then the following equality holds:
    \begin{equation*}
        x_1+x_2=\phi^{-1}(\phi(x_1,p_i)+\phi(x_2,p_i)\bmod p_i,p_i),
    \end{equation*}
    where $p_i>p_m$.
\end{lemma}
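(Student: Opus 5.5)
The plan is to reduce the claim to the fact that $\phi(\cdot,p_i)$ is the restriction of the canonical projection $\mathbb{Z}\to\mathbb{Z}_{p_i}$ to the interval $(-\frac{p_m}{2},\frac{p_m}{2})$, and that $\phi^{-1}(\cdot,p_i)$ inverts this projection on the larger interval $(-\frac{p_i}{2},\frac{p_i}{2})$. Concretely, I first record that for every integer $x\in(-\frac{p_m}{2},\frac{p_m}{2})$ the definition gives $\phi(x,p_i)\in[0,p_i)$ and $\phi(x,p_i)\equiv x \pmod{p_i}$. For $x\ge 0$ this holds because $\phi(x,p_i)=x<\frac{p_m}{2}<p_i$. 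For $x<0$ it holds because $\phi(x,p_i)=x+p_i$, and this value lies in $(p_i-\frac{p_m}{2},p_i)\subset(\frac{p_i}{2},p_i)$, using $p_i>p_m$.

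Next I show that $\phi^{-1}(\cdot,p_i)$ is a left inverse of reduction mod $p_i$ on $(-\frac{p_i}{2},\frac{p_i}{2})$. That is, for any integer $z$ in this interval, $\phi^{-1}(z \bmod p_i,p_i)=z$.
\begin{itemize}
\item If $z\ge 0$, then $z\bmod p_i=z<\frac{p_i}{2}$, so $\phi^{-1}$ leaves it unchanged.
\item If $z<0$, then $z\bmod p_i=z+p_i\in(\frac{p_i}{2},p_i)$, so $\phi^{-1}$ subtracts $p_i$ and returns $z$.
\end{itemize}

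Combining the two steps, set $z=x_1+x_2$. By the first step,
$$\phi(x_1,p_i)+\phi(x_2,p_i)\bmod p_i \;=\; z\bmod p_i .$$
The hypothesis gives $z\in(-\frac{p_m}{2},\frac{p_m}{2})\subset(-\frac{p_i}{2},\frac{p_i}{2})$, since $p_i>p_m$. The second step then yields $\phi^{-1}(z\bmod p_i,p_i)=z=x_1+x_2$, which is the claim.

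The only delicate point is boundary handling. If $p_i$ is even, the value $\frac{p_i}{2}$ itself could be ambiguous under $\phi^{-1}$. It never arises here, because the open intervals and the strict inequality $p_i>p_m$ keep every residue strictly away from $\frac{p_i}{2}$. I will state this explicitly rather than rely on rounding conventions.
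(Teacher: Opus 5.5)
Your proof is correct, but it is organized differently from the paper's. The paper does a direct case analysis on the signs of $x_1$, $x_2$ and $x_1+x_2$: both nonnegative; $x_1\ge 0>x_2$, split by the sign of the sum; and both negative. In each case it computes the concrete integer $\phi(x_1,p_i)+\phi(x_2,p_i)$, reduces it by hand, and checks which side of $\frac{p_i}{2}$ it lands on. The mixed case $x_1<0\le x_2$ is left implicit by symmetry. You instead factor the argument through two structural facts: $\phi(x,p_i)\equiv x \pmod{p_i}$, and $\phi^{-1}(\cdot,p_i)$ inverts reduction mod $p_i$ on $(-\frac{p_i}{2},\frac{p_i}{2})$. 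With these, the only case split is on the sign of the final sum $z$. The paper's route buys complete explicitness about which representative appears at each step, without invoking congruences. Yours buys brevity and automatic coverage of the symmetric case. It also yields a visibly stronger statement: your argument only uses $z\in(-\frac{p_i}{2},\frac{p_i}{2})$, and it works verbatim for any number of summands $\phi(x_1,p_i)+\dots+\phi(x_n,p_i)$, with no condition on partial sums. That is the form actually needed when the encoding is used to aggregate all online users in NFSA. A naive iteration of the paper's two-summand lemma would instead require every intermediate partial sum to stay in $(-\frac{p_m}{2},\frac{p_m}{2})$.
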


\begin{proof}
    When $x_1, x_2 \in [0, \frac{p_m}{2})$ and $x_1 + x_2 \in [0, \frac{p_m}{2})$, since $p_i > p_m$, the operations over $\mathbb{Z}_{p_i}$ are identical to those before extension.  

    When $x_1 \in [0, \frac{p_m}{2})$ and $x_2 \in (-\frac{p_m}{2}, 0)$, after extension, $x_1$ remains unchanged, while $\phi(x_2, p_i) = x_2 + p_i \in (\frac{p_i}{2}, p_i)$. If $x_1 + x_2 \geq 0$, then $x_1 + x_2 + p_i > p_i$, so the result over $\mathbb{Z}_{p_i}$ is 
    \[
        x_1 + \phi(x_2, p_i) \bmod p_i = x_1 + x_2<\frac{p_i}{2},
    \] which is mapped back to $x_1+x_2$.
    If $x_1 + x_2 < 0$, then $x_1 + x_2 + p_i < p_i$, so the computation over $\mathbb{Z}_{p_i}$  yields $x_1 + x_2 + p_i$. Because $x_1\geq0$, $x_1+x_2+p_i>p_i-\frac{p_m}{2}>\frac{p_i}{2}$. Hence, applying the inverse mapping gives 
    \[
        \phi^{-1}(x_1 + x_2 + p_i,p_i) = x_1 + x_2.
    \]  

    When $x_1, x_2 \in (-\frac{p_m}{2}, 0)$ and $x_1 + x_2 \in (-\frac{p_m}{2}, 0)$, we have $\phi(x_1) = x_1 + p_i$ and $\phi(x_2) = x_2 + p_i$. Since $p_i > \frac{p_m}{2}$, we have $x_1 + x_2 + 2 p_i > p_i$ and $x_1 + x_2 + 2 p_i < 2 p_i$. Therefore, the result over $\mathbb{Z}_{p_i}$ is $x_1 + x_2 + p_i>\frac{p_i}{2}$, and applying the inverse mapping yields 
    \[
        \phi^{-1}(\phi(x_1) + \phi(x_2) \bmod p_i,p_i) = x_1 + x_2.
    \]  
     This completes the proof of \cref{lem:promote}.
\end{proof}

\textbf{CRT Packing.}
Let the input be a vector $\mathbf{x} \in \mathbb{Z}_{p_m}^{d_c}$ of length $d_c$. After extending each entry $\mathbf{x}[i]$ to $\mathbb{Z}_{p_i}$, we can construct the following system of congruences:
\begin{equation}
    \left \{
    \begin{aligned}
        x& \equiv \mathbf{x}[1]  \bmod p_1\\
        x& \equiv \mathbf{x}[2]  \bmod p_2\\
        &\cdots\\
        x&\equiv \mathbf{x}[d_c]  \bmod p_{d_c}\\
    \end{aligned}
    \right.
\end{equation}
Let $M_i = \frac{p_c}{p_i}$ (since $p_c = \prod_{i=1}^{d_c} p_i$, $M_i$ is an integer).  
Because any $p_j$ and $p_i$ are co-prime for $i \neq j$, there exists an element $y_i \in \mathbb{Z}_{p_i}$ such that $M_i \times y_i \equiv 1 \bmod{p_i}$.
This element $y_i$ can be efficiently computed using the extended Euclidean algorithm. Hence, we can solve:
\begin{equation}
\label{eq:crt_soluation}
    x\equiv \sum_{i=1}^{d_c} \mathbf{x}[i]\times M_i\times y_i \bmod p_c.
\end{equation}
Thus, we have packed the $d_c$ elements into a single element in $\mathbb{Z}_{p_c}$.  
Unpacking is straightforward: taking the value modulo $p_i$ directly recovers the element stored at the $i$-th position.

We denote the packing process by $\operatorname{CRT}(\mathbf{x})$ and the unpacking process by $\operatorname{CRT}^{-1}(x)$. The following theorem holds:
\begin{theorem}
    Let $\mathbf{x}_1, \mathbf{x}_2 \in \mathbb{Z}_{p_m}^{d_c}$. If each $\mathbf{x}_1[i], \mathbf{x}_2[i]$ satisfies the conditions in \cref{lem:promote}, then the following equality holds:
    \begin{equation*}
        \mathbf{x}_1+\mathbf{x_2} = \operatorname{CRT}^{-1}(\operatorname{CRT}(\mathbf{x}_1)+\operatorname{CRT}(\mathbf{x}_2)\bmod p_c).
    \end{equation*}
\end{theorem}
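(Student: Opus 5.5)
The plan is to reduce the statement to the componentwise identity of \cref{lem:promote}, using the fact that the CRT map is a ring isomorphism $\mathbb{Z}_{p_c}\cong\prod_{i=1}^{d_c}\mathbb{Z}_{p_i}$. First I will make the encoding explicit. Here $\operatorname{CRT}(\mathbf{x})$ means: apply $\phi(\cdot,p_i)$ to each entry, then combine via \cref{eq:crt_soluation}. $\operatorname{CRT}^{-1}(x)$ means: reduce $x$ modulo each $p_i$, then apply $\phi^{-1}(\cdot,p_i)$. Entries of $\mathbf{x}\in\mathbb{Z}_{p_m}^{d_c}$ are read as signed representatives in $(-\frac{p_m}{2},\frac{p_m}{2})$, as in the representation-extending paragraph. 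The equality in the statement is then an equality of these signed integer vectors, which by hypothesis have sums still lying in $(-\frac{p_m}{2},\frac{p_m}{2})$.

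The key step is to show that reduction modulo $p_i$ recovers the $i$-th slot. Let $X_1=\operatorname{CRT}(\mathbf{x}_1)$ and $X_2=\operatorname{CRT}(\mathbf{x}_2)$. By \cref{eq:crt_soluation}, $X_1\equiv\sum_j\phi(\mathbf{x}_1[j],p_j)M_jy_j \pmod{p_c}$. Since $p_i\mid p_c$, this congruence also holds modulo $p_i$. For $j\neq i$ we have $p_i\mid M_j$, and $M_iy_i\equiv1\pmod{p_i}$, so $X_1\equiv\phi(\mathbf{x}_1[i],p_i)\pmod{p_i}$, and likewise for $X_2$. Reducing $X_1+X_2\bmod p_c$ modulo $p_i$ is well defined because $p_i\mid p_c$. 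It therefore gives $\phi(\mathbf{x}_1[i],p_i)+\phi(\mathbf{x}_2[i],p_i)\bmod p_i$.

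The last step is to apply $\phi^{-1}(\cdot,p_i)$ to this residue. \cref{lem:promote} applies with $p_i>p_m$, and it yields exactly $\mathbf{x}_1[i]+\mathbf{x}_2[i]$. Since this holds for every slot $i\in[1:d_c]$, the vector identity follows.

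The main obstacle is bookkeeping rather than mathematics. I need to state clearly that $\operatorname{CRT}$ and $\operatorname{CRT}^{-1}$ include the $\phi$ and $\phi^{-1}$ maps. I also need to check that the two nested reductions, first modulo $p_c$ and then modulo $p_i$, commute. That follows from $p_i\mid p_c$ and the pairwise coprimality assumed for the $p_i$.
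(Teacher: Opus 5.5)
Your proposal is correct and takes essentially the same approach as the paper, which argues that \cref{eq:crt_soluation} is additive because $M_i$ and $y_i$ are constants and then applies \cref{lem:promote} slot by slot. Your version is more explicit: it states that $\operatorname{CRT}$ and $\operatorname{CRT}^{-1}$ include $\phi$ and $\phi^{-1}$, and it shows why reduction modulo $p_i$ isolates the $i$-th slot ($p_i \mid M_j$ for $j \neq i$ and $M_i y_i \equiv 1 \pmod{p_i}$); one small correction is that commuting the reductions modulo $p_c$ and modulo $p_i$ needs only $p_i \mid p_c$, whereas pairwise coprimality is what guarantees that $y_i$ exists.
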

The proof here is straightforward.  
$M_i$ and $y_i$ are constants, so \cref{eq:crt_soluation} is additively homomorphic. Moreover, by \cref{lem:promote}, the transformation from $\mathbb{Z}_{p_m}$ to $\mathbb{Z}_{p_i}$ is also additively homomorphic. By the composability of additive homomorphism, our encoding is therefore additively homomorphic as well.  
In other words, such packing does not affect the correctness of the final aggregation result. We can then mask the input $x$ using the KhPRF, following \cref{eq:existing_mask}.

We now analyze the communication cost. The original input $\mathbf{x} \in \mathbb{Z}_{p_m}^{d_c}$ requires $d_c \times \lceil \log_2 p_m \rceil$ bits for representation. After masking using \cref{eq:existing_mask}, the input becomes an element over $\mathbb{Z}_{\Delta p_m}^{d_c}$, requiring $d_c \times \lceil \log_2 \Delta p_m \rceil$ bits.  

After CRT packing, only $\lceil \log_2 \Delta p_c \rceil$ bits are needed. Although the co-primality condition must be satisfied, when $d_c$ is not very large, we have $p_i \approx p_m$. Therefore,
\[
    \log_2 p_c = \sum_{i=1}^{d_c} \log_2 p_i \approx d_c \log_2 p_m.
\]  
Hence, the theoretical bit savings achieved is:
\begin{equation*}
    d_c\times \lceil \log_2 \Delta p_m \rceil -\lceil \log_2 \Delta p_c \rceil \approx (d_c-1)\log_2\Delta.
\end{equation*}

For ease of implementation and efficient computation, it is common to choose a relatively small $d_c$. When the input dimension $d$ exceeds $d_c$, we partition it into $\lceil \frac{d}{d_c} \rceil$ groups, each packed into a single element over $\mathbb{Z}_{p_c}$.  
Consequently, the theoretical bit savings can be approximated as:
\begin{equation*}
      \lceil \frac{d}{d_c} \rceil\times (d_c-1)\log_2\Delta \approx (d-\lceil \frac{d}{d_c} \rceil) \times \log_2 \Delta.
\end{equation*}

\textbf{Parameter Selection.}
In KhPRF, the parameters that need to be determined are $p_m$, $\Delta$, $q$, and the dimension of the polynomial
ring dimension $n_\lambda$ if using KhPRF based on RLWR as described in \cref{apx:KhPRF_rlwr}. Plaintext modulus $p_m$ represents the input range, and $\Delta$ depends on $n$ (the maximum number of additions required for aggregation). The minimum value of $\Delta$ is $\lceil n+1 \rceil$ to ensure correctness. The polynomial ring dimension and $q$ are determined by the required security level, using the LWE estimator; further details can be found in \cref{apx:parameter_select_KhPRF}. When CRT packing is needed, we first fix $n_\lambda$ and the maximum allowable $q$ based on security requirements, and then determine the largest feasible $d_c$ given $\Delta$ and $p_m$. Subsequently, we select $d_c$ consecutive primes larger than $p_m$ as the CRT moduli.

\subsection{Details of \NFSA}

\begin{algorithm}[!tb]
\caption{Offline Phase of \NFSA}
\label{alg:offline}
\begin{algorithmic}[1]

\Statex \textbf{Input:} Security parameter $\lambda$
\State \textbf{For each} user $U_i$ \textbf{do:}
\State \quad Generate key pair $(pk_i^u, sk_i^u)=\operatorname{KA.GenKey(\lambda)}$
\State \quad Send $pk_i^u$ to the server
\State \textbf{EndFor}
\State \textbf{For each} decryptor $P_m$ \textbf{do:}
\State \quad Generate key pair $(pk_m^p, sk_m^p)=\operatorname{KA.GenKey(\lambda)}$
\State \quad Send $pk_m^p$ to the server
\State \textbf{EndFor}
\State \textbf{For each} user $U_i$ \textbf{do:}
\State \quad Download $\{ pk_m^p \}_{m \in [1:M]}$ from the server
\State \quad Compute $\{ \kappa_{i,m} = \operatorname{KA.Agree}(sk_i^u, pk_m^p) \}_{m \in [1:M]}$
\State \textbf{EndFor}
\State \textbf{For each} decryptor $P_m$ \textbf{do:}
\State \quad Download $\{ pk_i^u \}_{i \in [1:N]}$ from the server
\State \quad Compute $\{ \kappa_{m,i} = \operatorname{KA.Agree}(sk_m^p, pk_i^u) \}_{i \in [1:N]}$
\State \textbf{EndFor}
\end{algorithmic}
\end{algorithm}

\begin{algorithm}[!t]
\caption{Online Phase of \NFSA(The $r$-th Round)}
\label{alg:online}
\begin{algorithmic}[1]
\Statex \textbf{Input:} Each user $U_i$ has input $\mathbf{x}_i^{(r)} \in \mathbb{Z}_{p_m}^d$
\State \textbf{For each} user $U_i$ \textbf{where} $i \in [1:N]$ \textbf{do:}
\State \quad Generate a KhPRF key $\mathbf{k}_i^{(r)} \in \mathbb{Z}_q^{n_\lambda}$
\State \quad Derive temporary key $\kappa_{i,m}^{(r)} = \operatorname{Hash}(\kappa_{i,m} \Vert r)$
\State \quad Share $\mathbf{k}_i^{(r)}$ using \TLSS, taking $\kappa_{i,m}^{(r)}$ as the key of PRF, which returns the shares $\{\mathbf{s}_m^{A_2,i,r}\}_{m \in [1:M]}$
\State \quad Get $\mathbf{c}_i^{(r)} = \Delta \times \operatorname{CRT}(\mathbf{x}_i^{(r)}) - F_H(\mathbf{k}_i^{(r)}, r) \bmod \Delta p_c$
\State \quad Send $\{\mathbf{s}_m^{A_2,i,r}\}_{m \in [1:M]}$ and $\mathbf{c}_i^{(r)}$ to the server
\State \textbf{EndFor}

\State \textbf{Server:}
\State \quad Receive $\{\{\mathbf{s}_m^{A_2,i,r}\}_{m \in [1:M]}\}_{i\in \mathcal{U}}$ from online users $\mathcal{U}$
\State \quad Send $\mathcal{U}$ to the online decryptors $\{P_m\}_{m \in \mathcal{P}_1}$

\State \textbf{For each decryptor} $P_m, m \in \mathcal{P}_1$ \textbf{do:}
\State \quad Derive keys $\{\kappa_{m,i}^{(r)} = \operatorname{Hash}(\kappa_{m,i} \Vert r)\}_{i \in \mathcal{U}}$
\State \quad Compute $\mathbf{s}_m^{A_1,r} = \sum_{i \in \mathcal{U}} \mathcal{F}_{\kappa_{m,i}^{(r)}}(r, n_\lambda) \bmod q$
\State \quad Send $\mathbf{s}_m^{A_1,r}$ to the server
\State \textbf{EndFor}

\State \textbf{Server:}
\State \quad Receive $\{\mathbf{s}_m^{A_1,r}\}_{m \in \mathcal{P}_2}$ from online decryptors $\mathcal{P}_2$
\State \quad Compute $\mathbf{s}_m^{A_2,r} = \sum_{i \in \mathcal{U}} \mathbf{s}_m^{A_2,i,r} \bmod q$ for $m \in \mathcal{P}_2$
\State \quad Reconstruct KhPRF key $\mathbf{k}^{(r)}$ using \TLSS
\State \quad Compute $\mathbf{c}^{(r)} = \sum_{i \in \mathcal{U}} \mathbf{c}_i^{(r)} \bmod \Delta p_c$
\State \quad Compute $\mathbf{y}^{(r)}=F_H(\mathbf{k}^{(r)}, r) + \mathbf{c}^{(r)} \bmod \Delta p_c $
\State \quad Get the result $\mathbf{x}^{(r)} = \operatorname{CRT}^{-1} \left( \left\lfloor \frac{1}{\Delta} \times \mathbf{y}^{(r)} \right\rfloor \right)$
\end{algorithmic}
\end{algorithm}

Our secure aggregation protocol consists of two phases: the offline phase and the online phase. The offline phase corresponds to the Setup phase of \TLSS, where each user $U_i$ and each decryptor $P_m$ engage in key exchange to establish a shared key $\kappa_{i,m}$. Since this process is identical to the setup phase described in \cref{sec:tlss}, we omit further details, which can be found in \cref{alg:offline}. Our primary focus is on the online phase.

Before each round of secure aggregation begins, each participating user downloads the current global model from the server and trains it using their local dataset to obtain the local model parameters. These parameters are then encoded as elements in $\mathbb{Z}_{p_m}^d$, as described in \cref{apx:pre_sa}. Let the current round be denoted as $r$, and the input of user $U_i$ be represented by the integer vector $\mathbf{x}_i^{(r)}$. The secure aggregation process consists of two steps: Masking and Unmasking.

\textbf{Masking.} 
In this step, each user $U_i$ masks their input to obtain the ciphertext, which is then uploaded to the server. User $U_i$ randomly samples a KhPRF key $\mathbf{k}_i^{(r)} \in \mathbb{Z}_q^{n_\lambda}$ and shares it using \TLSS over $\mathbb{Z}_q$. Here, the round index $r$ is used as the tag. The value of the share for $P_m$ is computed as
\begin{equation*}
    \mathbf{s}_m^{A_1,i,r} = \mathcal{F}_{\kappa_{i,m}^{(r)}}(r,n_\lambda),
\end{equation*}
where $\kappa_{i,m}^{(r)} = \operatorname{Hash}(\kappa_{i,m} \Vert r)$. The round-specific shared key is derived using the hash function to avoid invoking the \TLSS setup phase for each aggregation round. Then, user $U_i$ computes the share values $\{\mathbf{s}_m^{A_2,i,r}\}_{m \in [1:M]}$ for the server.

The input is first encoded using our CRT encoding and masked using the KhPRF function: 
\begin{equation*}
    \mathbf{c}_i^{(r)} = \Delta \times \operatorname{CRT}(\mathbf{x}_i^{(r)}) - F_H(\mathbf{k}_i^{(r)},r) \mod \Delta p_c.
\end{equation*}
Both the share values $\{\mathbf{s}_m^{A_2,i,r}\}_{m \in [1:M]}$ and the ciphertext $\mathbf{c}_i^{(r)}$ are uploaded to the server. In this step, each user only needs to upload their data once.

\textbf{Unmasking.} 
In this step, the decryptors upload the auxiliary information to the server, which then recovers the global aggregation result. First, the server sends the list of online users, $\mathcal{U}$, to all online decryptors $\mathcal{P}_1$. For each decryptor $P_m$ where $m \in \mathcal{P}_1$, the shared key between the current round and user $U_i$ where $i \in \mathcal{U}$ is derived as $\kappa_{m,i}^{(r)} = \operatorname{Hash}(\kappa_{m,i} \Vert r)$. The decryptor $P_m$ then computes the share $\mathbf{s}_m^{A_1,i,r} = \mathcal{F}_{\kappa_{m,i}^{(r)}}(r, n_\lambda)$ for $U_i$.

Note that instead of directly returning the share values, each decryptor $P_m$ utilizes the additive homomorphism of \TLSS to compute the share of $\mathbf{k}^{(r)} = \sum_{i \in \mathcal{U}} \mathbf{k}_i^{(r)}$ as follows:
\begin{equation*}
    \mathbf{s}_m^{A_1,r} = \sum_{i \in \mathcal{U}} \mathbf{s}_m^{A_1,i,r} \bmod q.
\end{equation*}
The decryptor $P_m$ then sends $\mathbf{s}_m^{A_1,r}$ to the server. The server aggregates the shares uploaded by the users to obtain
\begin{equation*}
    \mathbf{s}_m^{A_2,r} = \sum_{i \in \mathcal{U}} \mathbf{s}_m^{A_2,i,r} \bmod q,
\end{equation*}
where $m \in \mathcal{P}_2$. Here, $\mathcal{P}_2$ denotes the indices of the decryptors who successfully uploaded their shares.

Finally, using the reconstruction algorithm of \TLSS, the server reconstructs the key $\mathbf{k}^{(r)}$. The server then calls the KhPRF function to obtain the global mask, $F_H(\mathbf{k}^{(r)}, r)$. The ciphertexts uploaded by the users are aggregated and unmasked as follows:
\begin{equation*}
    \mathbf{y}^{(r)} = F_H(\mathbf{k}^{(r)}, r)+\sum_{i \in \mathcal{U}} \mathbf{c}_i^{(r)} \bmod \Delta p_c.
\end{equation*}
Finally, the global aggregation result is decoded as
\begin{equation*}
    \mathbf{x}^{(r)} = \operatorname{CRT}^{-1}\left(\left\lfloor \frac{1}{\Delta} \times \mathbf{y}^{(r)} \right\rfloor \right).
\end{equation*}
The detailed process of the online phase is provided in \cref{alg:online}.

\subsection{Security Analysis of \NFSA}
As with most current one-server secure aggregation schemes \cite{bonawitz2017practical,bell2020secure,karthikeyan2025opa}, our security objective is to protect the privacy of user inputs, but not the aggregated results of online users. In other words, the server knows the aggregation result for each round. While attacks such as exploiting user choices could potentially reveal some privacy information about the users \cite{deer2022multi}, there are already protection schemes for this type of attack, such as those in \cite{deer2022multi,so2023securing}, which are compatible with our approach. The user selection strategy in these schemes can be directly applied to our protocol. This is beyond the scope of this work, so we do not discuss such attacks.

What we need to prove is that the input $\{\mathbf{x}_i^{(r)}\}_{r \in [1:R]}$ of any user $U_i$ does not leak to others, except for the aggregated result of the online users in each round $\{\mathbf{x}^{(r)}\}_{r \in [1:R]}$. User $U_i$ does not send any messages about their input $\{\mathbf{x}_i^{(r)}\}_{r \in [1:R]}$ to other users $U_j$ or decryptors $P_m$, and the public key $pk_i^u$ is independent of the private input, ensuring that the privacy of $U_i$ is not disclosed to other users or decryptors. Therefore, the only potential attacker is the server.

Intuitively, the security of \NFSA is based on the security guarantees provided by the KA protocol, the hash function, \TLSS, and the KhPRF function. The security of the KA protocol ensures that the shared key between the user and decryptor remains confidential. The security of the hash function guarantees that the temporary keys derived for each round are secure. The security of \TLSS ensures that the KhPRF key is not leaked. Finally, the security of the KhPRF function ensures that the masked ciphertext $\mathbf{c}_i$ does not reveal the user's privacy. Formally, we have the following theorem:

\begin{theorem}
\label{thm:security_nfsa}
    Assume the existence of a secure KA protocol, hash functions, \TLSS, and KhPRF functions. Then, the \NFSA protocol described in \cref{alg:offline} and \cref{alg:online} does not leak $\{\mathbf{x}_i^{(r)}\}_{r \in [1:R]}$ to the server, except for the aggregated results $\{\mathbf{x}^{(r)}\}_{r\in[1:R]}$ under the semi-honest model.
\end{theorem}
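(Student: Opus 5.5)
We prove \cref{thm:security_nfsa} by a simulation-based hybrid argument, as in \cite{bonawitz2017practical,karthikeyan2025opa}. We construct a simulator $\mathcal{S}$ that receives only the aggregates $\{\mathbf{x}^{(r)}\}_{r\in[1:R]}$, the online sets $\mathcal{U}$, and the views of the server together with at most $T-1$ corrupted decryptors $\mathcal{C}$ (and any corrupted users' inputs). The goal is to show that the output of $\mathcal{S}$ is computationally indistinguishable from the real view of the server. The real view consists of the public keys; for every round $r$ and user $i\in\mathcal{U}$, the \TLSS shares $\{\mathbf{s}_m^{A_2,i,r}\}_m$ and the ciphertext $\mathbf{c}_i^{(r)}$; and the decryptor responses $\mathbf{s}_m^{A_1,r}$. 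Indistinguishability will be established through a sequence of hybrids $H_0$ (real) $\to H_5$ (simulated), each justified by one assumed primitive. We first handle a single honest user $U_i$ in a single round, then extend by a standard hybrid over all honest users and rounds.

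\textbf{Hybrids.} The steps are as follows.
\begin{itemize}
\item $H_1$: for each honest decryptor $P_m\notin\mathcal{C}$, replace $\kappa_{i,m}$ by a uniformly random key. This is justified by the security of KA, since the server sees only $pk_i^u,pk_m^p$.
\item $H_2$: replace the per-round keys $\kappa_{i,m}^{(r)}=\operatorname{Hash}(\kappa_{i,m}\Vert r)$ by independent random keys. This uses the hash as a key-derivation function (random oracle), with the inputs distinct across rounds.
\item $H_3$: replace $\mathcal{F}_{\kappa_{i,m}^{(r)}}(r,n_\lambda)$ by uniform vectors in $\mathbb{Z}_q^{n_\lambda}$, by PRF security. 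At this point, for honest $m$, $\mathbf{s}_m^{A_2,i,r}$ is a one-time pad of a Shamir share. The only unmasked information about $\mathbf{k}_i^{(r)}$ is then the shares held by $\mathcal{C}$ (fewer than $T$ shares) and the response sums. By \cref{thm:ss_security_collusion} (equivalently \cref{thm:ss_security_distinguish} combined with Shamir privacy), the shares of individual keys can be replaced by shares of $0$. The responses $\mathbf{s}_m^{A_1,r}$ are then programmed so that they reconstruct exactly $\mathbf{k}^{(r)}=\sum_{i\in\mathcal{U}}\mathbf{k}_i^{(r)}$, which the server legitimately learns. Concretely, $\mathcal{S}$ samples $\mathbf{k}^{(r)}$ itself, and for honest $m$ sets $\mathbf{s}_m^{A_1,r}$ equal to the share of $\mathbf{k}^{(r)}$ consistent with the $T-1$ corrupted shares minus $\mathbf{s}_m^{A_2,r}$.
\item $H_4$: replace the mask $F_H(\mathbf{k}_i^{(r)},r)$ for all but one honest user by uniform values in $\mathbb{Z}_{\Delta p_c}$. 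This uses the pseudorandomness of the (almost) KhPRF under LWR, since $\mathbf{k}_i^{(r)}$ now enters the view only through $\mathbf{k}^{(r)}$, which is a sum that remains uniform given the other keys.
\item $H_5$: set honest ciphertexts to uniform values subject to $\sum_i\mathbf{c}_i^{(r)}=\Delta\operatorname{CRT}(\mathbf{x}^{(r)})-F_H(\mathbf{k}^{(r)},r)+\mathbf{e}$, with $\mathbf{e}$ sampled from the rounding-error distribution.
\end{itemize}
The final hybrid depends only on the aggregates, so $\mathcal{S}$ can generate it.

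\textbf{Main obstacle.} We expect the difficulty to lie in $H_4$/$H_5$, namely the key-homomorphic correlation. The keys $\mathbf{k}_i^{(r)}$ are not independent in the view, because their sum is revealed, so plain PRF security of each $F_H(\mathbf{k}_i^{(r)},r)$ does not apply directly. Our first approach is a reduction in the style of OPA. We reparametrize the view by sampling $\mathbf{k}^{(r)}$ and all keys except one designated honest key $\mathbf{k}_{j}^{(r)}$ independently, and write $\mathbf{k}_{j}^{(r)}=\mathbf{k}^{(r)}-\sum_{i\neq j}\mathbf{k}_i^{(r)}$. Almost key homomorphism then expresses $F_H(\mathbf{k}_j^{(r)},r)$ as $F_H(\mathbf{k}^{(r)},r)-\sum_{i\neq j}F_H(\mathbf{k}_i^{(r)},r)$ up to the bounded error $e\in[0,n]$ of the preceding theorem. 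As a result, only the $n-1$ independent keys need to be replaced by random ones through PRF security. The remaining subtlety is that the error term depends on the hidden values $r_i$; we would argue that its distribution can be simulated, or that it is erased by the $\lfloor\cdot/\Delta\rfloor$ decoding with $\Delta>n$. Correctness of the simulated aggregate then follows from the CRT theorem and \cref{lem:promote}. Finally, the hybrids over the $R$ rounds are chained, using the fact that fresh keys are derived in each round.
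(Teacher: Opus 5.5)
Your skeleton is the paper's. Its $\mathtt{S}_1$--$\mathtt{S}_2$ replace the hash-derived \TLSS keys by random ones, $\mathtt{S}_3$ replaces the public keys via KA security, and $\mathtt{S}_4$ gives one online user the input $\mathbf{x}^{(r)}$ and the others zero, citing KhPRF pseudorandomness. Your $H_1$--$H_5$ are the same steps, with care the paper lacks:
\begin{itemize}
\item you model the hash as a KDF/random oracle, since collision resistance alone does not give pseudorandom round keys;
\item you fold in up to $T-1$ corrupted decryptors, which the paper treats in a separate theorem;
\item you make explicit that the decryptor responses hand the server $\mathbf{k}^{(r)}$.
\end{itemize}
The obstacle you flag is real, and the paper does not resolve it. $\mathtt{S}_4$ invokes pseudorandomness of each $F_H(\mathbf{k}_i^{(r)},r)$ in isolation. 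The paper's separate theorem on $(\mathbf{c}_i^{(r)},\mathbf{k}^{(r)})$ starts by swapping $\mathbf{k}^{(r)}$ for an independent uniform key, which ignores its dependence on $\mathbf{k}_i^{(r)}$. That argument never uses $|\mathcal{U}|\ge 2$, yet with a single online user $\mathbf{k}^{(r)}=\mathbf{k}_i^{(r)}$ and the claim is false.

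As written, however, your proposal has a gap exactly there. In $H_4$ the reduction holds PRF challenges for $\mathbf{k}_i^{(r)}$, $i\neq j$, but must also output $\mathbf{k}^{(r)}$ and $\mathbf{c}_j^{(r)}$. The identity $F_H(\mathbf{k}_j^{(r)},r)=F_H(\mathbf{k}^{(r)},r)-\sum_{i\neq j}F_H(\mathbf{k}_i^{(r)},r)-\mathbf{e}$ is usable only if the reduction knows $\mathbf{e}=\lfloor\sum_i r_i/q\rfloor$. That is, it needs the rounding remainders of the hidden inner products, about which pseudorandomness of the rounded outputs says nothing.

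Your fallback fails. The server holds $\mathbf{k}^{(r)}$ and every $\mathbf{c}_i^{(r)}$, so it computes $F_H(\mathbf{k}^{(r)},r)+\sum_i\mathbf{c}_i^{(r)}=\Delta\operatorname{CRT}(\mathbf{x}^{(r)})+\mathbf{e}$ and learns $\mathbf{e}$ exactly, as the paper's Note concedes. Decoding by $\lfloor\cdot/\Delta\rfloor$ concerns correctness, not the view.

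Nor can $H_5$ draw $\mathbf{e}$ from a generic rounding-error distribution. From $\mathbf{k}^{(r)}$ the server knows $\rho=p_r\langle H_2(r),\mathbf{k}^{(r)}\rangle\bmod q$, and $\sum_i r_i\equiv\rho\pmod q$. So $\mathbf{e}$ is strongly correlated with $\rho$: with two online users, $\Pr[e=0\mid\rho]\approx\rho/q$.

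The missing idea has to work on the unrounded values. With $\mathbf{a}=H_2(r)$, proceed as follows:
\begin{enumerate}
\item Replace $\langle\mathbf{a},\mathbf{k}_i^{(r)}\rangle$ for $i\neq j$ by LWE samples with noise bounded by $B$. Let user $j$'s unrounded value be $\langle\mathbf{a},\mathbf{k}^{(r)}\rangle$ minus their sum. Every rounding, hence $\mathbf{e}$, is unchanged unless some value lies within $nB$ of a rounding boundary, which has probability $O(nBp_r/q)$ per coordinate.
\item LWE then makes these $n-1$ values uniform.
\item A statistical argument finishes: since $q\gg p_r$, the rounded values are close to uniform and their remainders are close to uniform given them. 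Hence the view, $\mathbf{e}$ included, depends on the inputs only through $\sum_i\mathbf{x}_i^{(r)}$, and the paper's $\mathtt{S}_4$ is a legitimate endpoint.
\end{enumerate}
This needs $q/p_r$ to be large compared with $nB$ times the number of masked coordinates, or a sharper LWR-to-LWE analysis. Neither you nor the paper states this condition. The alternative is to assume explicitly that the almost-KhPRF masks stay pseudorandom when the sum key is revealed.
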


The formal proof is given in \cref{apx:proof_of_security_nfsa}.

\begin{theorem}
    The result of \cref{thm:security_nfsa} holds even when the server colludes with fewer than $T$ decryptors.
\end{theorem}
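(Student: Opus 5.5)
We would prove this by a simulation (hybrid) argument that extends the one used for \cref{thm:security_nfsa}. The simulator is given only the aggregates $\{\mathbf{x}^{(r)}\}_{r\in[1:R]}$ and the inputs of any corrupted users. Let $\mathcal{C}\subset\{P_1,\dots,P_M\}$ be the corrupted decryptors, with $|\mathcal{C}|=t<T$. The joint view of the server and $\mathcal{C}$ contains the following:
\begin{itemize}
\item the public keys;
\item the secret keys $sk_m^p$ for $m\in\mathcal{C}$, and hence $\kappa_{m,i}$ and every round key $\kappa_{m,i}^{(r)}$;
\item the server shares $\mathbf{s}_m^{A_2,i,r}$;
\item the ciphertexts $\mathbf{c}_i^{(r)}$;
\item the aggregated decryptor shares $\mathbf{s}_m^{A_1,r}$ from honest decryptors.
\end{itemize}
The new ingredient relative to the server-only case is that the adversary can unmask the $t$ Shamir shares $\mathbf{s}_m^{A_1,i,r}+\mathbf{s}_m^{A_2,i,r}$ of every $\mathbf{k}_i^{(r)}$ for $m\in\mathcal{C}$. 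Everything else should be handled by \cref{thm:ss_security_collusion}.

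\textbf{Hybrid steps.} We would proceed in the following order.
\begin{enumerate}
\item By KA security, replace each $\kappa_{i,m}$ with $m\notin\mathcal{C}$ by a uniform key.
\item Treating $\operatorname{Hash}$ as a random oracle or PRF, replace the round keys $\kappa_{i,m}^{(r)}$ for honest $m$ by independent uniform keys. Because the tag/round separation makes them fresh, PRF security turns $\mathcal{F}_{\kappa_{i,m}^{(r)}}(r,n_\lambda)$ into truly uniform pads.
\item The server shares $\mathbf{s}_m^{A_2,i,r}$ for $m\notin\mathcal{C}$ are now one-time-padded and therefore uniform. The only information about $\mathbf{k}_i^{(r)}$ the adversary still holds is the $t<T$ Shamir shares. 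By Shamir (or packed Shamir) privacy, these are uniform and independent of $\mathbf{k}_i^{(r)}$. This is exactly the argument of \cref{thm:ss_security_collusion}, applied per user and per round.
\item Handle the honest decryptors' messages $\mathbf{s}_m^{A_1,r}$. Together with the server's sums, these determine honest-decryptor Shamir shares of the aggregate key $\mathbf{k}^{(r)}=\sum_{i\in\mathcal{U}}\mathbf{k}_i^{(r)}$. The simulator samples a random $\mathbf{k}^{(r)}$. It then picks a random degree-$(T-1)$ polynomial consistent with the already fixed corrupted shares (which are sums of the simulated individual shares), and sets each honest $\mathbf{s}_m^{A_1,r}$ as the difference between that polynomial's value and the server's aggregated $A_2$ value.
\end{enumerate}

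\textbf{Main obstacle.} The hard step is the last one. The adversary sees the aggregate key, and we must argue that the individual keys $\mathbf{k}_i^{(r)}$ of honest users remain hidden. As in \cref{thm:security_nfsa}, we would invoke the (almost) KhPRF security. Given the keys are hidden, the masks $F_H(\mathbf{k}_i^{(r)},r)$ for honest users can be replaced by uniform values in $\mathbb{Z}_{\Delta p_c}$, subject only to their sum matching $F_H(\mathbf{k}^{(r)},r)$ up to the rounding error $e$.

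The simulator then outputs uniform ciphertexts $\mathbf{c}_i^{(r)}$ whose sum is consistent with $\Delta\operatorname{CRT}(\mathbf{x}^{(r)})-F_H(\mathbf{k}^{(r)},r)$ plus an error term. The error $e\in[0,n]$ comes from the error bound theorem, and we would sample it from the same distribution as in \cref{thm:security_nfsa}. Because the corrupted decryptors' keys are independent of the honest-decryptor keys and of every $\mathbf{k}_i^{(r)}$, this replacement is unaffected by the collusion.

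Finally, a standard hybrid argument over all users, decryptors and $R$ rounds yields computational indistinguishability. This shows that the colluding set learns nothing beyond $\{\mathbf{x}^{(r)}\}$. Privacy fails exactly when $t\ge T$, since the adversary could then interpolate each $\mathbf{k}_i^{(r)}$ individually.
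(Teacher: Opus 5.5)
Your proposal is correct, but it takes a different and much more explicit route than the paper.

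The paper's proof is a short black-box reduction. Nothing is forwarded from the server to decryptors, so collusion only adds the corrupted decryptors' PRF keys, which unmask fewer than $T$ Shamir shares of each $\mathbf{k}_i^{(r)}$. By \cref{thm:ss_security_collusion} these reveal nothing about $\mathbf{k}_i^{(r)}$, so the paper declares the view ``identical'' to the non-colluding one and inherits \cref{thm:security_nfsa}.

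You instead rerun the hybrid with the corrupted keys left intact. Your step 4 handles a composition issue that the paper's reduction glosses over. \cref{thm:ss_security_collusion} is about one sharing in isolation, whereas in \NFSA the server also reconstructs the whole aggregate polynomial $\sum_i f_i$ (with $f_i$ the sharing polynomial of $\mathbf{k}_i^{(r)}$) from the honest decryptors. So one must check that $t$ individual shares per user, together with that polynomial, leak nothing beyond $\mathbf{k}^{(r)}$. Your construction does exactly this: sample $\mathbf{k}^{(r)}$, then a uniform degree-$(T-1)$ polynomial subject to the $t+1\le T$ constraints at $0$ and at the corrupted points, and define the honest $\mathbf{s}_m^{A_1,r}$ as differences. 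This reproduces the real conditional distribution, and it is precisely where $t<T$ enters. The paper's route buys brevity and modularity; yours buys a simulator that actually accounts for the joint view.

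One caution on the KhPRF step, on which both routes ultimately rest. The server recovers $e$ coordinatewise as $\mathbf{y}^{(r)} \bmod \Delta$, and it can compute the rounding residual of $F_H(\mathbf{k}^{(r)},r)$ directly from $\mathbf{k}^{(r)}$. These two quantities are correlated, so $e$ must not be drawn from its marginal distribution independently of $\mathbf{k}^{(r)}$. The safe choice is to sample individual keys summing to $\mathbf{k}^{(r)}$ and compute the masks honestly on substituted inputs, as the paper's $\mathtt{S}_4$ implicitly does.
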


\begin{proof}
    Note that during the execution of the \NFSA protocol, no messages are forwarded from the server to the decryptors. Therefore, when the server colludes with fewer than $T$ decryptors, the only information they can obtain is the \TLSS share values. According to \cref{thm:ss_security_collusion}, when the server colludes with fewer than $T$ decryptors, they cannot learn anything about the secret, i.e., they gain no information about $\mathbf{k}_i^{(r)}$. Consequently, their view is identical to the case where there is no collusion. Thus, \cref{thm:security_nfsa} holds.
\end{proof}

Furthermore, our scheme is clearly secure against collusion attacks involving only decryptors. By the security of \TLSS, all the $\{\mathbf{s}_m^{A_1,r}\}_{m \in \mathcal{P}_2}$ are indistinguishable from true random values. Additionally, by the security of the hash function, these values are independent of each other. Therefore, even if the adversary obtains the views of all decryptors, they cannot gain any information about the users' private inputs.

We now prove that KhPRF is secure in our application scenario. We  have the following theorem:
\begin{theorem}
    Even when the server has $\mathbf{k}^{(r)}$, it cannot distinguish $\mathbf{c}_i^{(r)}$ from a uniformly distributed vector of the same length in $\mathbb{Z}_{\Delta p_c}$.
\end{theorem}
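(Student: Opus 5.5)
The plan is a hybrid argument whose only real content is a reduction to the pseudorandomness of $F_H$ for the individual key $\mathbf{k}_i^{(r)}$. The subtlety is that the server knows the aggregate key $\mathbf{k}^{(r)}=\sum_{j\in\mathcal{U}}\mathbf{k}_j^{(r)}$, and $F_H$ is key-homomorphic. We therefore have to argue that knowing the sum does not help unmask a single summand.

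\textbf{Step 1: reduce to one user's key.} Fix the target user $U_i$ and assume $|\mathcal{U}|\ge 2$ with at least one other honest user $U_j$; otherwise the aggregate is $\mathbf{x}_i^{(r)}$ itself and there is nothing to hide. The keys $\mathbf{k}_j^{(r)}$ for $j\neq i$ are sampled independently and uniformly in $\mathbb{Z}_q^{n_\lambda}$. Hence, conditioned on $\mathbf{k}_i^{(r)}$, the aggregate $\mathbf{k}^{(r)}$ is uniform and independent of $\mathbf{k}_i^{(r)}$, since it contains the uniform summand $\mathbf{k}_j^{(r)}$.

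The server's view then consists of $\mathbf{k}^{(r)}$, the ciphertext $\mathbf{c}_i^{(r)}$, and the $\mathbf{c}_l^{(r)}$ for $l\neq i$. Write $\mathbf{k}^{(r)}=\mathbf{k}_i^{(r)}+\mathbf{u}$ with $\mathbf{u}=\sum_{l\neq i}\mathbf{k}_l^{(r)}$ uniform. Then $\mathbf{k}^{(r)}$ is a one-time pad of $\mathbf{k}_i^{(r)}$. The challenger can simulate $\mathbf{k}^{(r)}$ as a fresh uniform vector without knowing $\mathbf{k}_i^{(r)}$, as long as the other users' ciphertexts are also simulated consistently. The key sharing $\{\mathbf{s}_m^{A_2,i,r}\}$ is handled separately by \cref{thm:ss_security_distinguish}, which lets us replace those shares by shares of $0$.

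\textbf{Step 2: reduce to PRF security.} Build a reduction $\mathcal{B}$ against the pseudorandomness of $F_H$ under the hidden key $\mathbf{k}_i^{(r)}$. On challenge $\mathbf{z}$, which is either $F_H(\mathbf{k}_i^{(r)},r)$ or uniform over the output range, $\mathcal{B}$ proceeds as follows.
\begin{itemize}
\item It sets $\mathbf{c}_i^{(r)}=\Delta\operatorname{CRT}(\mathbf{x}_i^{(r)})-\mathbf{z}\bmod \Delta p_c$.
\item It samples a uniform $\mathbf{k}^{(r)}$ and hands it to the server.
\item It sets the ciphertext of one other honest user $U_j$ so that the aggregate still decodes correctly to $\mathbf{x}^{(r)}$, namely $\mathbf{c}_j^{(r)}$ absorbs $F_H(\mathbf{k}^{(r)},r)$ minus the other masks.
\end{itemize}
Using the almost-homomorphism theorem, with error $e\in[0,n]$, correct decoding still holds because $\Delta>n$. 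When $\mathbf{z}$ is a real PRF output, the view is identically distributed to the real protocol. When $\mathbf{z}$ is uniform, $\mathbf{c}_i^{(r)}$ is a uniform element masked by a fresh random pad, hence uniform in $\mathbb{Z}_{\Delta p_c}$, provided the PRF output range matches $\mathbb{Z}_{\Delta p_c}$, i.e.\ $p_r=\Delta p_c$.

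\textbf{Main obstacle.} The hard step is the consistency of $\mathbf{c}_j^{(r)}$ in Step 2. The simulated $\mathbf{c}_j^{(r)}$ must look like an honestly masked ciphertext, yet $\mathcal{B}$ does not know $\mathbf{k}_j^{(r)}=\mathbf{k}^{(r)}-\sum_{l\neq j}\mathbf{k}_l^{(r)}$, which depends on the hidden $\mathbf{k}_i^{(r)}$. I would handle this by a second hybrid: first replace $\mathbf{c}_j^{(r)}$'s mask by a uniform value using the PRF security of the independent key $\mathbf{k}_j^{(r)}$, justified because $\mathbf{k}_j^{(r)}$ itself is hidden by $\mathbf{k}_i^{(r)}$ inside the sum. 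Alternatively, I would note that the joint distribution of $(\mathbf{c}_i^{(r)},\mathbf{c}_j^{(r)})$ given the aggregate only constrains their sum, up to the bounded error $e$. If this symmetric argument becomes circular, the fallback is to assume, as the LWR KhPRF literature does, that $F_H(\mathbf{k}_i,\cdot)$ remains pseudorandom given $\mathbf{k}_i+\mathbf{u}$ for independent uniform $\mathbf{u}$, which is immediate from the one-time-pad observation in Step 1.
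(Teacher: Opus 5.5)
Your Step~1 together with the first two bullets of Step~2 is essentially the paper's proof. The paper makes two hops. First it replaces $\mathbf{k}^{(r)}$ by an independent uniform $\mathbf{k}^{\prime(r)}$. Then it uses a reduction $\mathcal{B}$ that samples $\mathbf{k}^{\prime(r)}$ itself and outputs $\mathbf{c}_B=\Delta\operatorname{CRT}(\mathbf{x}_i^{(r)})-\mathbf{r}_B \bmod \Delta p_c$, appealing to the pseudorandomness of $F_H(\mathbf{k}_i^{(r)},r)$ with $p_r=\Delta p_c$.

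Your justification of the first hop is sharper than the paper's. The paper only invokes the marginal uniformity of $\mathbf{k}^{(r)}$, and its $\mathcal{B}$ in fact establishes the second hop, not the first. What is actually needed is that $\mathbf{k}^{(r)}=\mathbf{k}_i^{(r)}+\mathbf{u}$ is independent of $\mathbf{k}_i^{(r)}$. That requires an honest user besides $U_i$, which you correctly assume. It then makes $(\mathbf{c}_i^{(r)},\mathbf{k}^{(r)})$ and $(\mathbf{c}_i^{(r)},\mathbf{k}^{\prime(r)})$ identically distributed.

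Where you go astray is in enlarging the view. The statement concerns only the pair $(\mathbf{c}_i^{(r)},\mathbf{k}^{(r)})$. So the third bullet and your ``main obstacle'' are unnecessary, and your ``fallback'' is not an extra assumption but the entire proof.

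Once the adjusted $\mathbf{c}_j^{(r)}$ is included, your claim that the simulated view is identically distributed to the real one is false. From $\mathbf{k}^{(r)}$ and all ciphertexts, the server obtains $\mathbf{y}^{(r)}$ and hence the exact error vector $\mathbf{e}=F_H(\mathbf{k}^{(r)},r)-\sum_{l\in\mathcal{U}}F_H(\mathbf{k}_l^{(r)},r)$. This vector depends on the rounding remainders of the hidden $\mathbf{k}_i^{(r)}$, and $\mathcal{B}$ cannot sample it correctly.

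Your fallback does not repair this. With $\mathbf{c}_j^{(r)}$ in the view, $\mathbf{u}$ is no longer independent of the rest of the view, so the one-time-pad argument breaks. The paper does not resolve this full-view issue either; it only addresses it informally in the Note after the theorem. Drop the third bullet, and your argument is complete for the theorem as stated.
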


\begin{proof}
    We aim to show that the distributions $(\mathbf{c}_i^{(r)}, \mathbf{k}^{(r)})$ and $(\mathbf{c}_i^\prime, \mathbf{k}^{\prime(r)})$ are indistinguishable, where $\mathbf{c}_i^\prime$ is a uniformly distributed random vector in $\mathbb{Z}_{\Delta p_c}$ of the same length as $\mathbf{c}_i^{(r)}$, and $\mathbf{k}^{\prime(r)}$ is a uniformly distributed random vector in $\mathbb{Z}_q^{n_\lambda}$.

    Since each $\mathbf{k}_i^{(r)}$ is  uniformly sampled from $\mathbb{Z}_q^{n_\lambda}$, $\mathbf{k}^{(r)}$ is indistinguishable from a uniformly distributed random vector in $\mathbb{Z}_q^{n_\lambda}$. Hence, $(\mathbf{c}_i^{(r)}, \mathbf{k}^{(r)})$ is indistinguishable from $(\mathbf{c}_i^{(r)}, \mathbf{k}^{\prime(r)})$.
     
    Assume, for the sake of contradiction, that there exists a distinguisher $\mathcal{A}$ capable of distinguishing $(\mathbf{c}_i^{(r)}, \mathbf{k}^{(r)})$ from $(\mathbf{c}_i^{(r)}, \mathbf{k}^{\prime(r)})$. We can use $\mathcal{A}$ to construct a distinguisher $\mathcal{B}$ that can distinguish between $F_H(\mathbf{k}_i^{(r)},r)$ and a same length uniformly random vector in $\mathbb{Z}_{\Delta p_c}$. The input to $\mathcal{B}$ is $\mathbf{r}_B$. $\mathcal{B}$ then samples $\mathbf{k}^{\prime(r)}$ randomly and constructs $\mathbf{c}_B = \Delta \times \operatorname{CRT}(\mathbf{x_i}^{(r)}) - \mathbf{r}_B \mod \Delta p_c$. It provides $(\mathbf{c}_B, \mathbf{k}^{\prime(r)})$ to $\mathcal{A}$, and returns $\mathcal{A}$'s output.
    From \cite{boneh2013key}, $F_H(\mathbf{k}_i^{(r)}, r)$ and the uniform distribution over $\mathbb{Z}_{\Delta p_c}$ are indistinguishable. Thus, no such distinguisher $\mathcal{A}$ exists.

    Clearly, $(\mathbf{c}_i^{(r)}, \mathbf{k}^{\prime(r)})$ and $(\mathbf{c}_i^\prime, \mathbf{k}^{\prime(r)})$ are also indistinguishable. This is equivalent to $F_H(\mathbf{k}_i^{(r)}, r)$ being indistinguishable from $\Delta \times \operatorname{CRT}(\mathbf{x_i}^{(r)}) - \mathbf{c}_i^\prime \mod \Delta p_c$. Since $\mathbf{c}_i^\prime$ is uniformly distributed, $\Delta \times \operatorname{CRT}(\mathbf{x_i}^{(r)}) - \mathbf{c}_i^\prime \mod \Delta p_c$ is also uniformly distributed. By the security of $F_H$, these distributions are indistinguishable.
    
    By transitivity of indistinguishability, we conclude that $(\mathbf{c}_i^{(r)}, \mathbf{k}^{(r)})$ and $(\mathbf{c}_i^\prime, \mathbf{k}^{\prime(r)})$ are indistinguishable.
\end{proof}

\textbf{Note:} The exposure of $\mathbf{k}^{(r)}$ allows the server to compute the error introduced by the aggregated KhPRF masking. However, this error does not reveal any user's input. Otherwise, it could be used to construct a distinguisher between $\mathbf{c}_i^{(r)}$ and a uniformly random distribution, which would contradict the above theorem.

\subsection{Discussion on Malicious Security}

Malicious security is critical in many scenarios. Although our current solution does not implement malicious security, we will discuss possible approaches to achieve it and the remaining challenges. In FL, malicious threats primarily come from two sources: malicious server and malicious users.

\subsubsection{Malicious Server}
When the server is malicious, it may forge user inputs to deceive the decryptors or return incorrect aggregation results. OPA \cite{karthikeyan2025opa} embeds signature information into a mask to ensure that a malicious server cannot forge other users' inputs. This approach can be directly applied to our scheme, as the sharing of signature information and private keys is identical. Thus, it can be used with \TLSS. The additional overhead is that each user must perform one extra \TLSS share, and both the decryptors and the server incur one more reconstruction overhead.

OPSA \cite{guan2025opsa} uses the Pedersen commitment \cite{pedersen1991non} to detect incorrect aggregation results. Users commit to their inputs using the commitment protocol and leverage the homomorphic property of the commitment, enabling the server to generate proofs for the aggregation results. Users then verify the global model. By adding the aggregation round information into the verification tag, each round's aggregation result becomes unique. According to \cite{guan2025opsa}, the additional overhead is approximately 2 to 3 times. Since our CRT packing method reduces the model's dimensionality, the verification overhead is also reduced, approximately $O(\frac{1}{d_c})$, compared to directly using the method in \cite{guan2025opsa}.

\subsubsection{Malicious Users}
When one user is malicious, there are two possible malicious behaviors: dishonest key sharing and masking, or encrypting malicious update results. To resist malicious users, we need a robust \TLSS. Below, we discuss how to construct a robust \TLSS from verifiable Shamir's SS \cite{feldman1987practical}.

In verifiable Shamir's SS, when sharing, a commitment $c_m = g^{s_m}$ is provided for each share value, where $g$ is the generator of a group of order $p$. In \TLSS, since $s_m$ is not directly sent to $P_m$, it cannot be verified by $P_m$ without assistance from the server. The server computes $c_m' = g^{s_m^{A_2}}$. Since $s_m = s_m^{A_1} + s_m^{A_2}$, and $P_m$ possesses $s_m^{A_1}$, $P_m$ can compute $c_m'' = g^{s_m^{A_1}}$. Then, $P_m$ verifies whether $c_m = c_m' \cdot c_m''$ holds to verify $s_m$. The required overhead for the user is one exponentiation and the transmission of a group element. The server and the share holders both incur an additional exponentiation. For vector verification, the optimization from \cite{guan2025opsa} can be used, which reduces the communication overhead to that of committing to a single element.

For verification of the reconstruction, two methods can be used: one through commitments and the other through Reed-Solomon decoding. Share holders can compute the commitment of the aggregated share values using the homomorphic property of the commitment and send it to the server. After the server reconstructs the second-layer shares, it obtains the Shamir's SS shares and can verify the commitment to check if the result returned by $P_m$ is correct. The server requires $M$ exponentiations to verify each share, while the share holder requires $N$ multiplications. Alternatively, the server can use Reed-Solomon decoding to reconstruct Shamir's SS shares, in which case the share holder does not need additional computation. However, Reed-Solomon decoding requires that the number of malicious shares be fewer than $\frac{M}{3}$. Using the commitment protocol only requires $T$ honest shares.

The verification of KhPRF can utilize the ZKP techniques from \cite{karthikeyan2025opa}. The Proof of Correct Masking and the Proof of Linear Relations are similar. Therefore, the additional overhead for the verification without considering CRT packing is the overhead of the verifiable \TLSS plus the overhead of the Proof of Correct Masking and Proof of Linear Relations.

However, a challenge remains in achieving a fully verifiable \NFSA. In our approach, we can only implement a coarse-grained input verification. By using CRT packed inputs as the verification objects, we can directly apply the verification method from \cite{karthikeyan2025opa}, with an overhead similar to that of the method in \cite{karthikeyan2025opa}. However, it is difficult to verify the range of the values within the packed CRT elements. This is because the CRT encoding process mixes different plaintext values, making the verification difficult. Future research could explore how ZKP techniques can be used to implement verifiable CRT encoding and decoding.

\section{Evaluation}
In this section, we implement our protocol using Python and conduct a series of numerical simulations to validate the effectiveness of our method. Since our focus is on the overhead of secure aggregation, without altering the model's training and aggregation rules, we mainly use the following two experimental metrics:

\begin{itemize}
    \item \textbf{Traffic}: The size of the serialized data. Since we focus on bandwidth-limited user scenarios, \textit{Traffic} is the size of the data obtained after serializing with Python's \texttt{pickle} library and compressing with Python's \texttt{gzip} library.
    \item \textbf{Time}: The total computation time required for data processing. To better reflect the real computational cost, \textit{Time} includes the computation time, serialization time, and compression processing time.
\end{itemize}

All of our metrics are based on simulation results on an Intel Xeon Gold 6145 CPU, with averages taken over ten independent trials. We use the SHA-256 hash function. The PRF function is implemented using AES with CTR mode using \textit{randomgen} Python library. The KA protocol is instantiated with elliptic curve Diffie-Hellman (ECDH). The security parameters for all cryptographic primitives are instantiated with at least 128-bit security.

\subsection{Performance of \TLSS}
To better illustrate the performance of \TLSS in sharing and reconstruction, we assume that a user needs to share the secret (a vector) with multiple holders. Our baseline is the classic Shamir's SS scheme (referred to as Shamir). The default threshold for Shamir is 3. As analyzed in OPA \cite{karthikeyan2025opa}, packed Shamir requires the number of holders to satisfy certain conditions. Therefore, to validate our scheme over a broader range, we implement the original Shamir scheme. Additionally, since the key exchange part is identical in both schemes, we do not consider the overhead of it.

\cref{tab:tlss_performance} shows a comparison of the performance between \TLSS and Shamir when the number of holders is 50 and the length of the secret is 5000, where $\lceil \log p \rceil$ denotes the bit length of the modulus. It can be observed that the user overhead of Shamir is comparable to that of \TLSS. The user time overhead of Shamir is slightly lower than that of \TLSS, while the user traffic of \TLSS is slightly lower than that of Shamir. This is because Shamir uses authenticated encryption to ensure the security of the relay, which requires additional verification information, whereas \TLSS does not.

The main advantage of \TLSS is the reduction in holder overhead. When the bit length of the modulus is 64, the holder computation time of \TLSS is reduced by approximately 95\%, and the communication overhead is reduced by about 57\%. This is because \TLSS does not require processing the relay data of users during sharing. In terms of total overhead (the sum of all entities' overhead), \TLSS is also significantly lower than Shamir. 

It is important to note that when the modulus size increases, the computation overhead for \TLSS holders increases noticeably. This is because our implementation of the PRF is not highly optimized. Since the \textit{randomgen} Python library we used does not directly support the generation of large integers, we implemented it by first generating random smaller integers (e.g., 64-bit) and then concatenating them. However, despite this, the overhead of \TLSS remains lower than that of Shamir.

In \cref{fig:m63_holder_performance}, we present the variation in holder overhead for different secret lengths. As the secret length increases, both the holder computation overhead and communication overhead grow linearly. However, the growth rate of \TLSS is noticeably slower than that of Shamir. As shown in \cref{fig:m63_dix_total_overhead}, the total overhead of \TLSS also increases more slowly than that of Shamir, demonstrating that \TLSS has better scalability. Furthermore, in \cref{fig:m63_dih_overhead}, we compare the overall overhead for different numbers of holders, where \TLSS again shows a slower increase in overhead compared to Shamir. This is consistent with our theoretical overhead analysis in \cref{tab:tlss_theory_compare}.

Additionally, we provide a comparison of the overhead when the modulus bit length is 128 in \cref{apx:m128}.

\begin{table}[!t]
\centering
\caption{Comparison of Shamir and \TLSS with 50 holders and  secret length 5000.}
\label{tab:tlss_performance}
\begin{tabular}{@{}c@{\hspace{0.1em}}c@{\hspace{0.1em}}cccc@{}}
\toprule
 & \multirow{2}{*}{Scheme} & \multicolumn{2}{c}{$\lceil \log p \rceil =64$} & \multicolumn{2}{c}{$\lceil \log p \rceil =128$} \\ \cmidrule(l){3-6} 
 &  & Time (s) & Traffic (MB) & Time (s) & Traffic (MB) \\ \midrule
\multirow{2}{*}{User} & Shamir & 1.24 & 2.24 & 1.28 & 4.20 \\
 & \TLSS & 1.51 & 2.23 & 1.48 & 4.19 \\ \midrule
\multirow{2}{*}{Holder} & Shamir & 0.022 & 0.090 & 0.216 & 0.168 \\
 & \TLSS & 0.001 & 0.038 & 0.008 & 0.085 \\ \midrule
\multirow{2}{*}{Total} & Shamir & 2.53 & 6.72 & 2.64 & 12.60 \\
 & \TLSS & 1.80 & 4.15 & 2.17 & 8.41 \\ \bottomrule
\end{tabular}
\end{table}

\begin{figure}[!t]
	\centering  
    \subfloat[Holder Time]{
    \label{fig:m63_dix_holder_time}
    \includegraphics[width=0.45\linewidth]{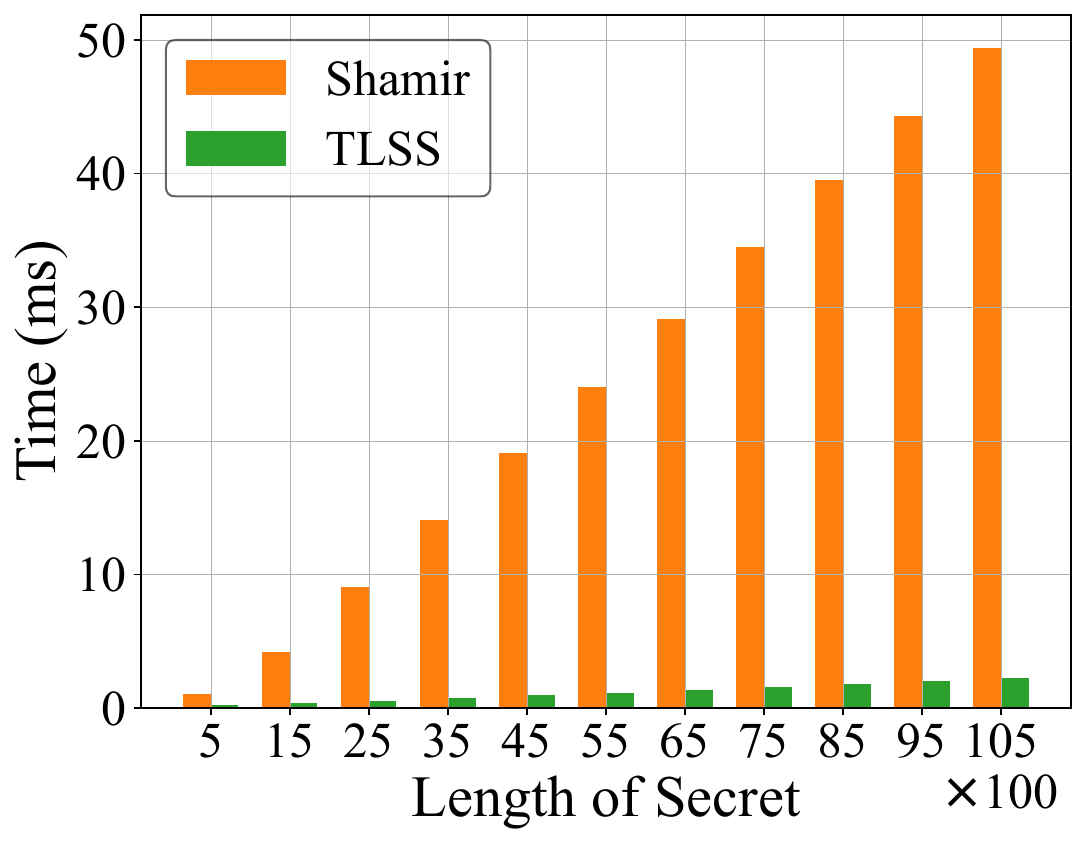}}
     \subfloat[Holder Traffic]{
    \label{fig:m63_dix_holder_traffic}
    \includegraphics[width=0.45\linewidth]{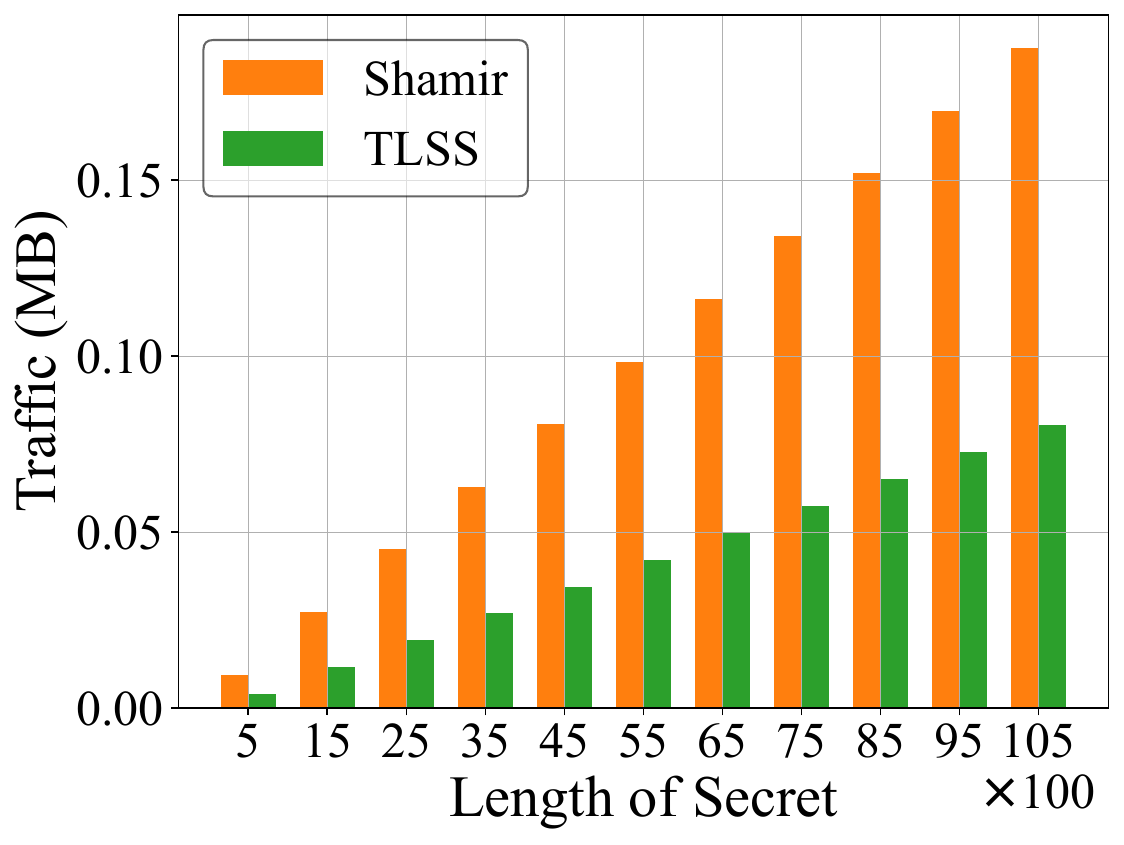}}
    
\caption{Holder performance with 50 holders and modulus bit length 64.}
\label{fig:m63_holder_performance}
\end{figure}

\begin{figure}[!t]
	\centering  
    \subfloat[Total Time]{
    \label{fig:m63_dix_sim_time}
    \includegraphics[width=0.45\linewidth]{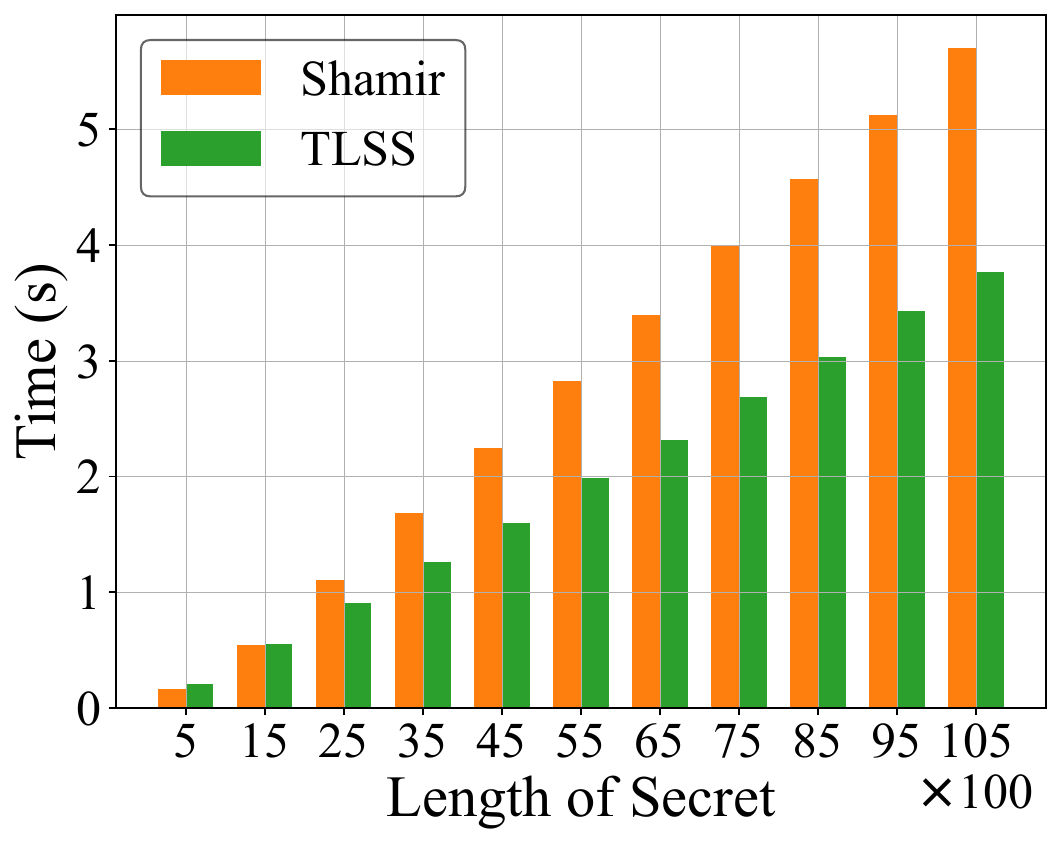}}
     \subfloat[Total Traffic]{
    \label{fig:m63_dix_server_traffic}
    \includegraphics[width=0.45\linewidth]{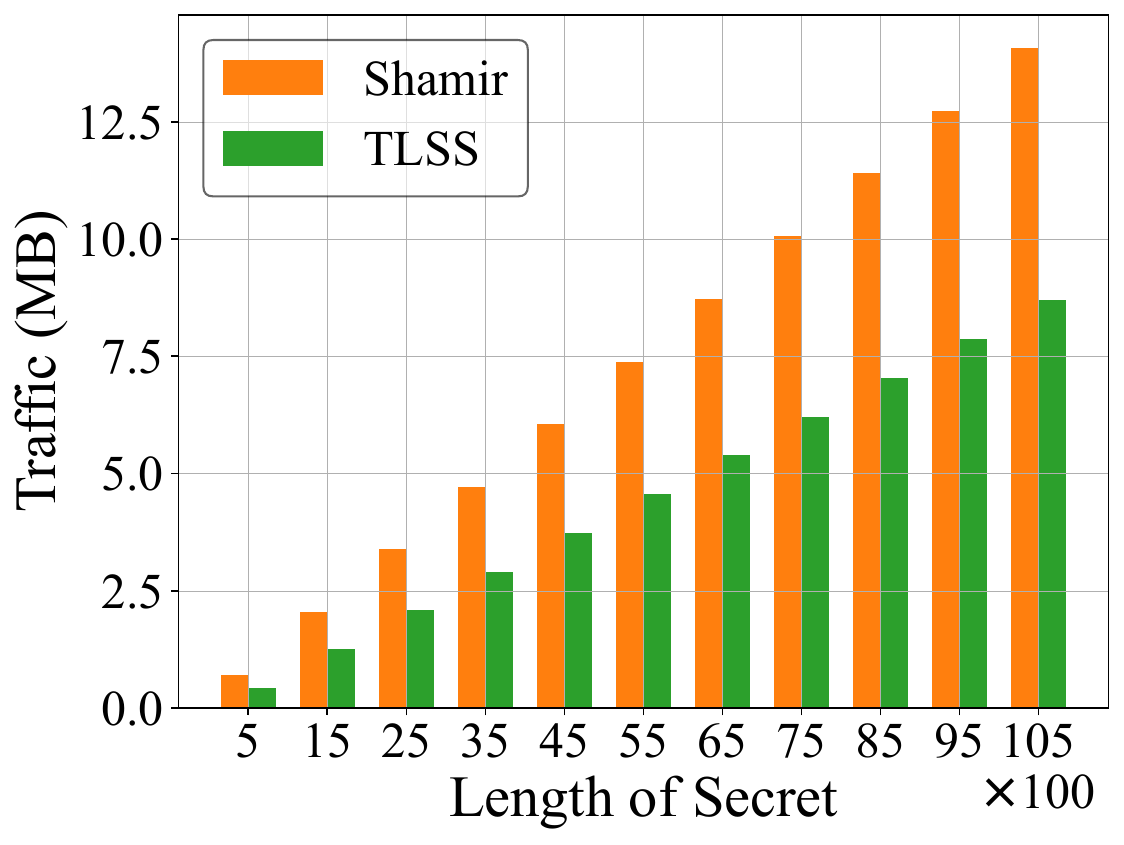}}
    
\caption{Comparison of total overhead for different secret lengths with 50 holders and  modulus bit length 64.}
\label{fig:m63_dix_total_overhead}
\end{figure}

\begin{figure}[!t]
	\centering  
    \subfloat[Total Time]{
    \label{fig:m63_dih_sim_time}
    \includegraphics[width=0.45\linewidth]{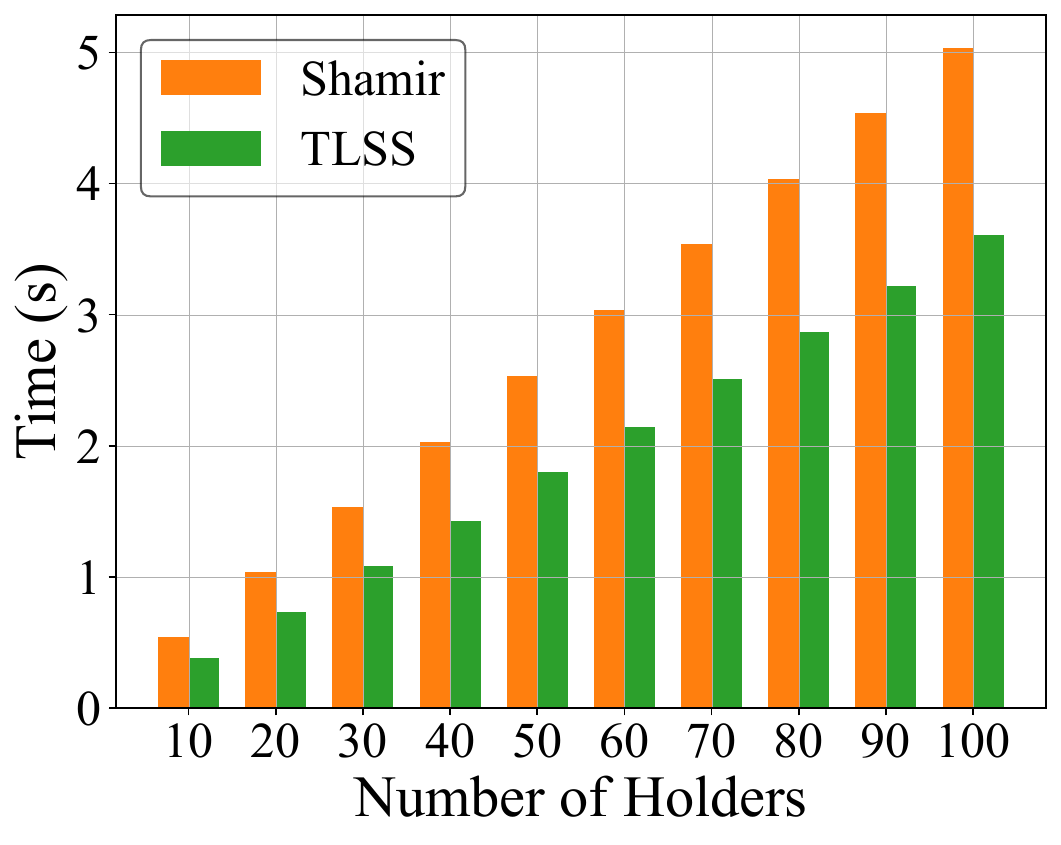}}
     \subfloat[Total Traffic]{
    \label{fig:m63_dih_server_traffic}
    \includegraphics[width=0.45\linewidth]{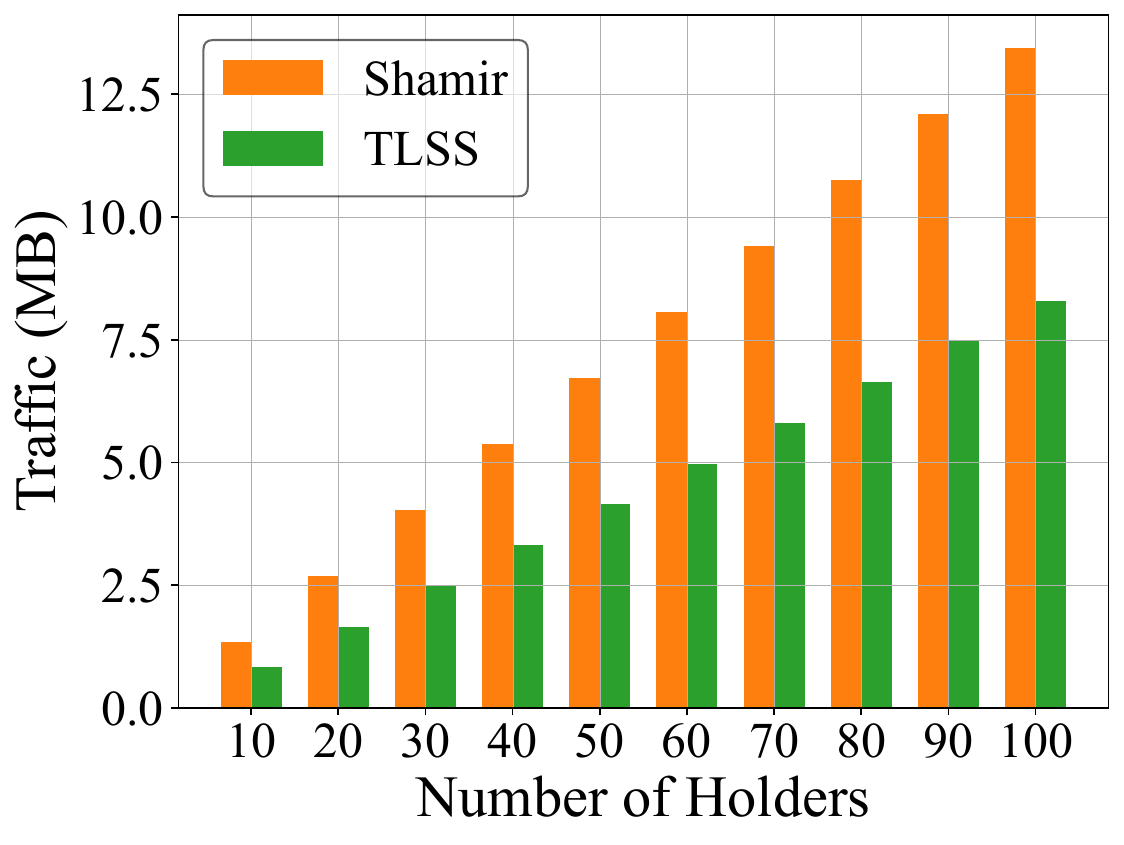}}
    
\caption{Comparison of total overhead for different numbers of holders with secret length 5000 and modulus bit length 64.}
\label{fig:m63_dih_overhead}
\end{figure}

\subsection{Performance of CRT Encoding}
Another key contribution of our work is the use of CRT to encode the input for KhPRF. Therefore, we evaluate the impact of different batch sizes ($d_c$) for CRT encoding. To ensure that KhPRF provides at least 128-bit security, we set the polynomial ring dimension to 8192, and the ciphertext modulus to a prime with a 180-bit length. The plaintext modulus is set to $2^{32}$ to meet the precision requirements of most FL scenarios. We set $d_c$ to 2, 3, and 4. Our baseline is the masking method from OPA \cite{karthikeyan2025opa}. 
Since our scheme focuses on the semi-honest setting, the comparison here is made with the semi-honest version of OPA, without considering any verification overhead.
When $d_c = 1$, our method is equivalent to the OPA masking method. We use speedup$_t$ and speedup$_c$ to denote the time and traffic speedups of our scheme relative to OPA. The default value for $\Delta$ is set to $2^{10}$ (note that $\Delta$ is larger than the maximum number of users participating in the summation), and the default input length is set to $2^{18}$.

\begin{table}[!t]
\centering
\caption{Masking performance with CRT encoding for $\Delta = 2^{10}$ and input length $2^{18}$.} 
\label{tab:crt_encoding_performance}
\begin{tabular}{@{}c@{\hspace{0.7em}}ccccc@{}}
\toprule
\multicolumn{2}{c}{Scheme} & Time (s) & Speedup$_t$ & Traffic (MB) & Speedup$_c$ \\ \midrule
\multicolumn{2}{c}{OPA \cite{karthikeyan2025opa}} & 7.56 & 1.00$\times$ & 1.67 & 1.00$\times$ \\ \midrule
\multirow{3}{*}{Ours} & $d_c=2$ & 3.90 & 1.94$\times$ & 1.35 & 1.23$\times$ \\
 & $d_c=3$ & 2.70 & 2.80$\times$ & 1.25 & 1.34$\times$ \\
 & $d_c=4$ & 2.03 & 3.72$\times$ & 1.19 & 1.40$\times$ \\ \bottomrule
\end{tabular}
\end{table}

As shown in \cref{tab:crt_encoding_performance}, our CRT encoding significantly reduces the overhead of masking for users. The speedup achieved by CRT encoding is close to $d_c$, because the time for CRT encoding itself is very fast. As a result, the masking time primarily depends on the time taken by KhPRF. After CRT encoding, the input is reduced to $\frac{1}{d_c}$ of its original size, so the number of KhPRF calls required also reduces by a factor of $\frac{1}{d_c}$. The communication gain for users mainly comes from the reduction in the total number of $\Delta$ values after encoding. However, since $\Delta$ does not constitute a large portion of the overall communication, the speedup in communication overhead is less than $d_c$.

\begin{figure}[!t]
	\centering 
    \subfloat[Masking Time]{
    \label{fig:crt_input_length_time}
    \includegraphics[width=0.45\linewidth]{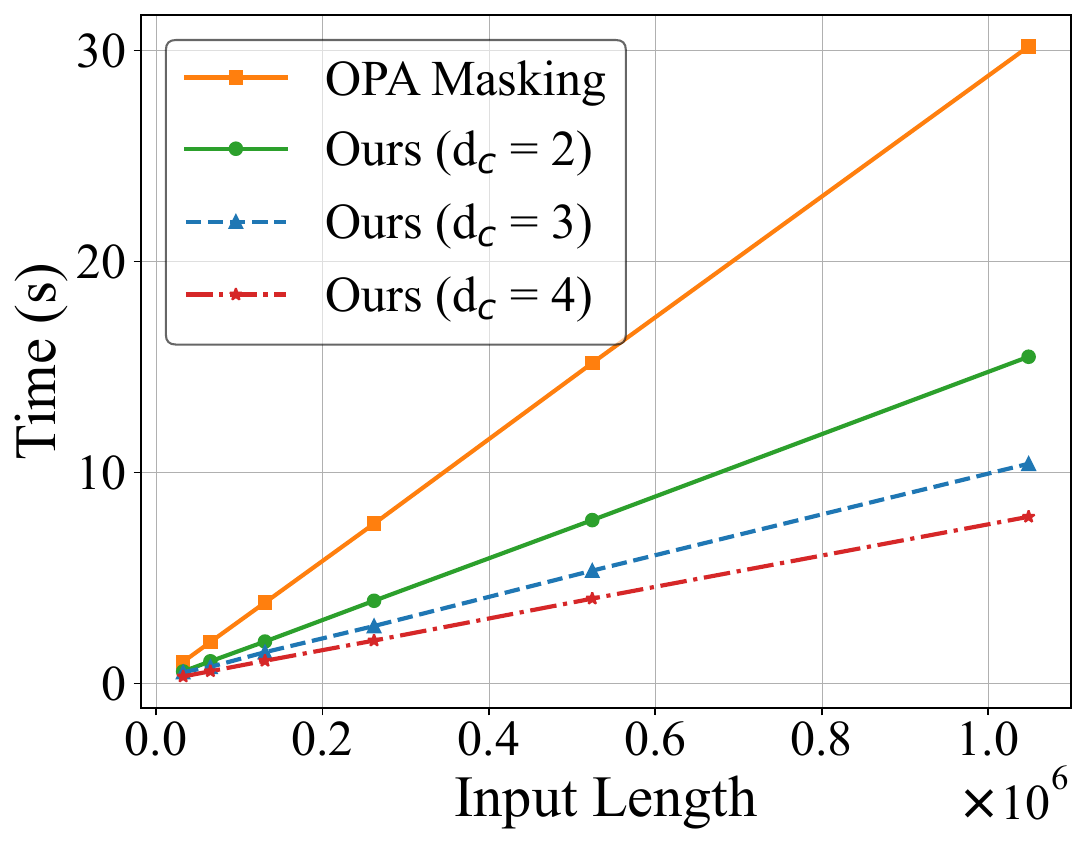}}
     \subfloat[Traffic]{
    \label{fig:crt_input_length_traffic}
    \includegraphics[width=0.45\linewidth]{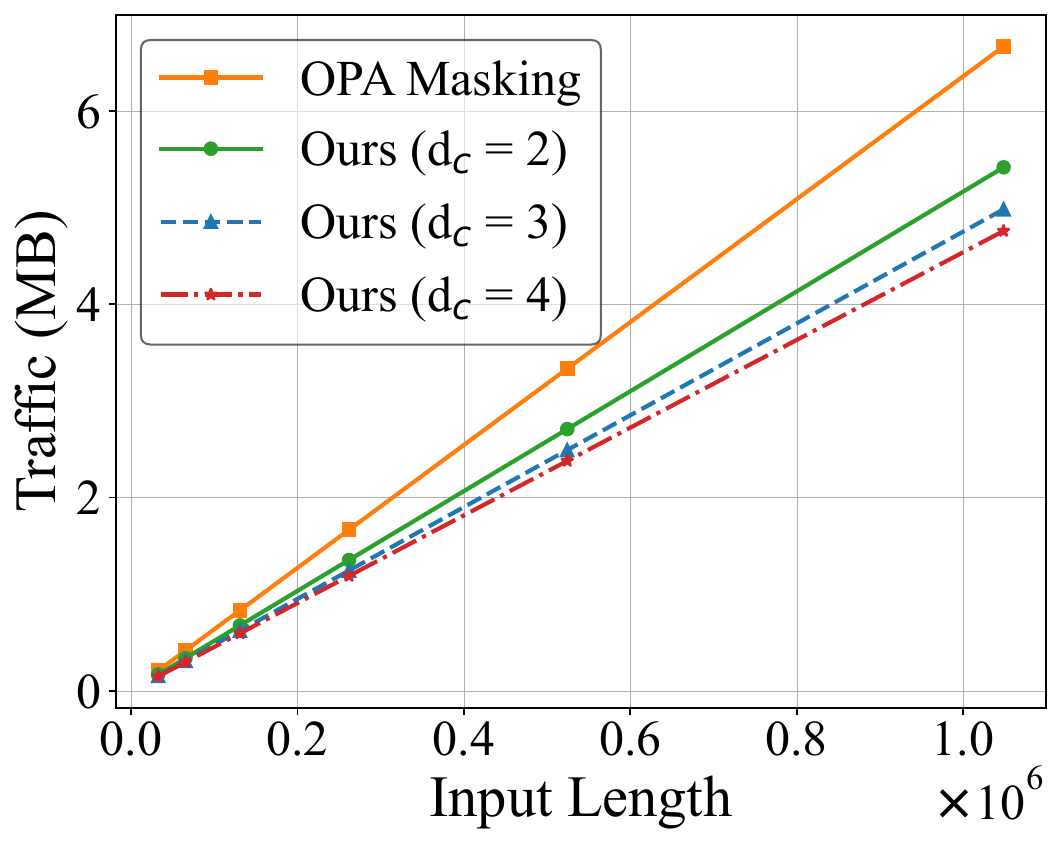}}
    
\caption{Overhead comparison of KhPRF-based masking under different input lengths with $\Delta = 2^{10}$. Masking time includes CRT encoding and KhPRF computation.}
\label{fig:crt_input_length}
\end{figure}

\begin{figure}[!t]
	\centering  
    \subfloat[Masking Time]{
    \label{fig:crt_num_user_time}
    \includegraphics[width=0.45\linewidth]{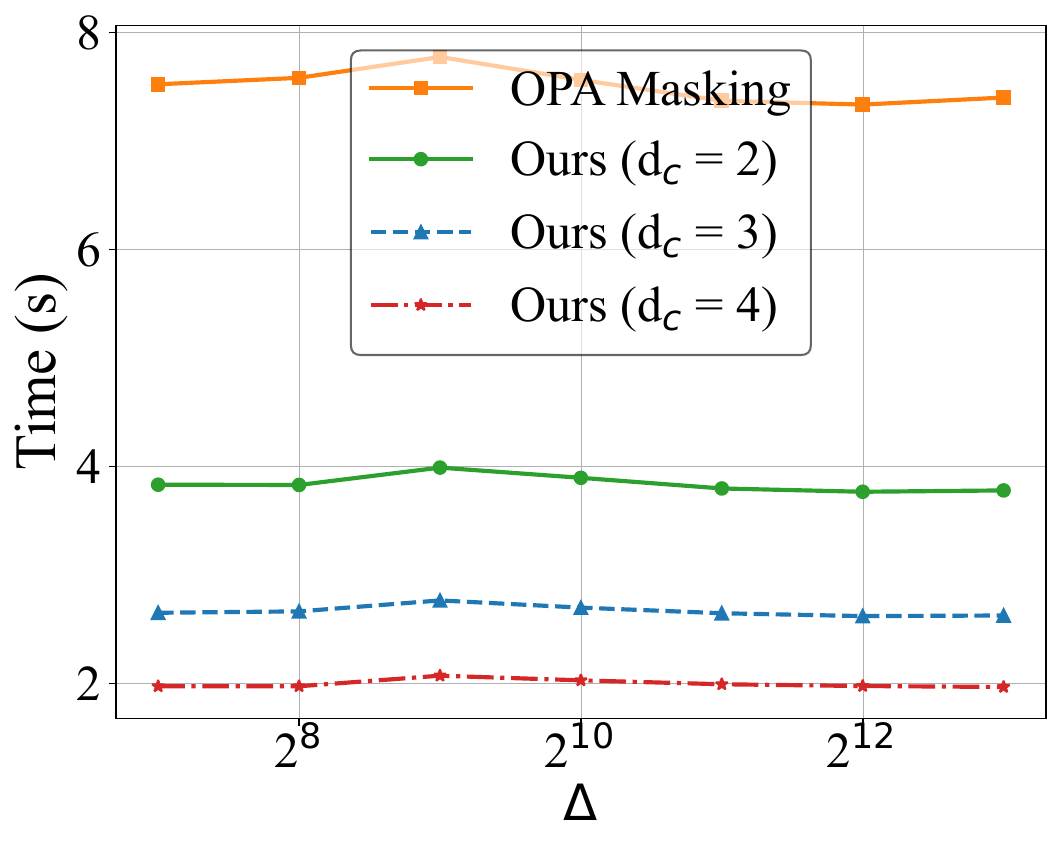}}
     \subfloat[Traffic]{
    \label{fig:crt_num_user_traffic}
    \includegraphics[width=0.45\linewidth]{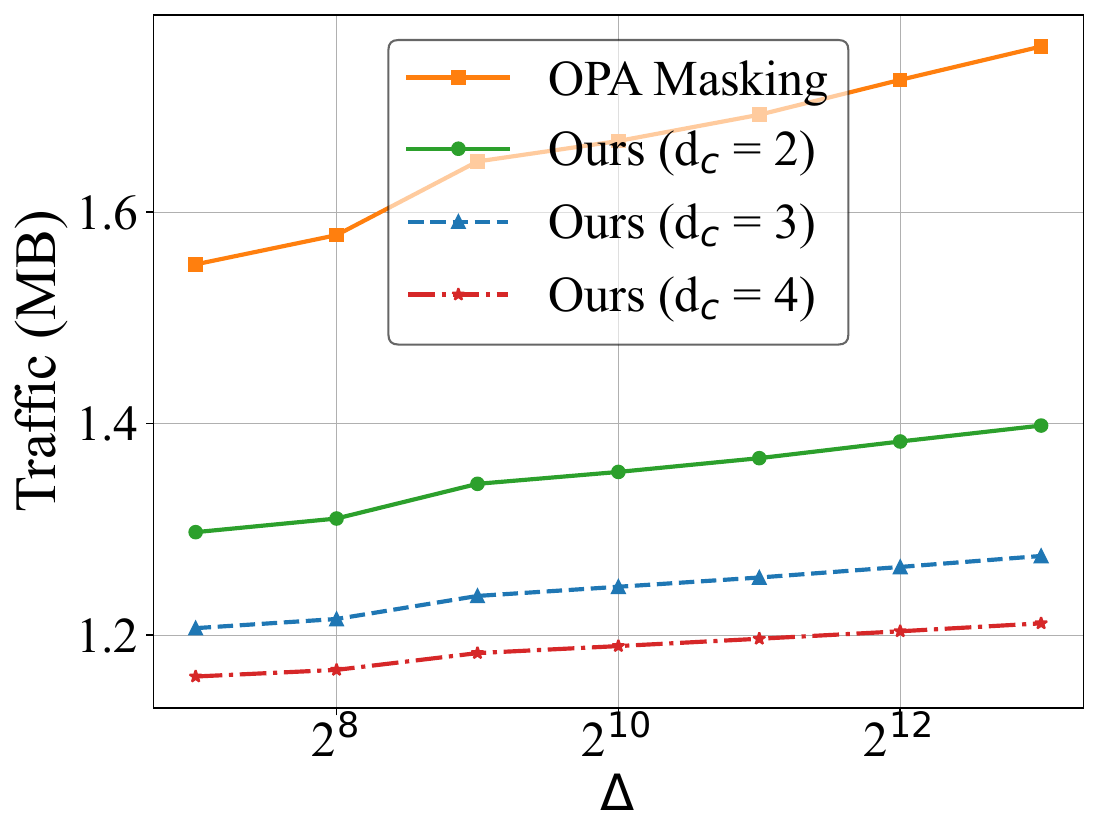}}
    
\caption{Overhead comparison of KhPRF-based masking under different $\Delta$ values with input length $2^{18}$. The x-axis is logarithmic with base 2.}
\label{fig:crt_num_user}
\end{figure}

In \cref{fig:crt_input_length}, we compare the time required for masking and the communication volume after masking for different input lengths. It is evident that the reduction in masking time is quite significant. The communication overhead shows a noticeable decrease when $d_c = 2$, while for $d_c = 3$ and $d_c = 4$, the communication overhead only decreases slightly. We also present the change in overhead as $\Delta$ increases in \cref{fig:crt_num_user}. From \cref{fig:crt_num_user_time}, it can be observed that as $\Delta$ grows, the masking time remains almost unchanged, because the computation time of KhPRF is not significantly affected by $\Delta$. The computation time mainly depends on the size of the ciphertext modulus and the polynomial ring dimension. As shown in \cref{fig:crt_num_user_traffic}, as $\Delta$ increases, the size of the final masked data grows, which makes the benefits of our encoding more pronounced.

\begin{table}[!t]
\caption{KhPRF masking overhead under different parameter configurations.}
\label{tab:user_different_parameters}

\begin{tabular}{@{}cccccc@{}}
\toprule
$\lceil \log p_m \rceil$ & $d_c$ & Mask (s) & Encode (ms) & Total (s) & Traffic (MB) \\ \midrule
4 & 32 & 0.32 & 51.47 & 0.37 & 0.21 \\
8 & 16 & 0.54 & 59.03 & 0.60 & 0.29 \\
16 & 8 & 1.05 & 66.73 & 1.12 & 0.56 \\
24 & 5 & 1.72 & 67.66 & 1.79 & 0.84 \\
32 & 4 & 2.01 & 80.20 & 2.09 & 1.12 \\ \bottomrule
\end{tabular}

\end{table}

\textbf{Parameters Setting and Impact of Different Sizes of Plaintext.} We fixed the polynomial ring dimension at $8192$, $\Delta = 2^{10}$, and $q$ as a prime number of $180$ bits, and compared the overhead of using KhPRF masking with different plaintext sizes when the input length is $2^{18}$, as shown in \cref{tab:user_different_parameters}. We used CRT packing to pack the plaintext, aiming for the largest possible packed bit length, but not exceeding 128 bits for security. The parameter settings and overheads for different plaintext sizes are shown in \cref{tab:user_different_parameters}. 
As shown, under the same security parameters, when $p_m$ is smaller, $d_c$ can be larger, which means fewer calls to KhPRF and consequently less time overhead for masking. When $d_c$ increases, the number of CRT encodings required decreases, but the time per encoding increases due to the larger $d_c$. Therefore, though the encoding time is reduced when $p_m$ is smaller, the difference is not significant. As $p_m$ decreases, the space required for each element becomes smaller, and as $d_c$ increases, the required $\Delta$ decreases, which results in reduced traffic.

\subsection{Performance of \NFSA}
To better demonstrate the performance of our \NFSA scheme in practical FL parameters, we evaluate the performance of secure aggregation on several models, which is consistent with most existing secure aggregation schemes, such as \cite{bell2023acorn,lycklama2023rofl}. The models tested are: (1) a CNN with 19K parameters on the Federated-MNIST dataset \cite{caldas2018leaf}, (2) a LeNet-5 \cite{lecun2002gradient} with 62K parameters, (3) a ResNet-20 with 273K parameters on the CIFAR-10 dataset \cite{krizhevsky2009learning}, and (4) an LSTM model \cite{graves2012long} with 818K parameters on the text-based Shakespeare dataset. Users locally optimize the models using stochastic gradient descent with learning rates of 0.05, 0.01, 0.05, and 0.3, respectively. Since our focus is on the performance of secure aggregation, we do not consider the time spent on model training or the communication volume for distributing the global model. All model parameters are represented using 32-bit values. We assume 100 users and 5 decryptors.

We compare our scheme with the latest secure aggregation framework, OPA \cite{karthikeyan2025opa}, which employs the same techniques. OPA uses Shamir's SS and KhPRF for masking user inputs without CRT encoding. For KhPRF, the polynomial ring dimension is set to 8192, and the ciphertext modulus is 180 bits. Our scheme uses CRT encoding with $d_c = 2, 3, 4$, referred to as Ours-2, Ours-3, and Ours-4, respectively. Note that the key sharing process is the same in both schemes, using \TLSS, so the overhead for key sharing remains identical across all approaches.

\begin{table}[!t]
\centering
\caption{Comparison of secure aggregation overhead on the Shakespeare dataset.}
\label{tab:nfsa_overhead}
\begin{tabular}{@{}cccccc@{}}
\toprule
 & Metric & OPA \cite{karthikeyan2025opa}& Ours-2 & Ours-3 & Ours-4 \\ \midrule
\multirow{2}{*}{User} & Time (s) & 26.17 & 12.77 & 8.91 & 6.51 \\
 & Traffic (MB) & 5.76 & 4.98 & 4.71 & 4.57 \\ \midrule
\multirow{2}{*}{Decryptor} & Time (s) & 0.65 & 0.54 & 0.54 & 0.54 \\
 & Traffic (MB) & 19.25 & 0.19 & 0.19 & 0.19 \\ \midrule
\multirow{2}{*}{Server} & Time (s) & 85.59 & 43.54 & 30.84 & 22.51 \\
 & Traffic (MB) & 672.16 & 499.29 & 471.64 & 457.65 \\ \bottomrule
\end{tabular}
\end{table}

We first present the experimental results on the Shakespeare dataset in \cref{tab:nfsa_overhead}, where our scheme consistently shows lower overhead compared to OPA. For users, the overhead consists of two parts: key sharing and input masking. The key sharing overhead is almost the same, but our scheme achieves lower masking time. As $d_c$ increases, the user's overhead decreases, demonstrating the effectiveness of CRT encoding. For decryptors, they only need to perform key reconstruction without processing any input-related information. Our scheme performs better, as \TLSS eliminates the server's intermediary communication. In OPA, each user's key share must be forwarded to the decryptors, resulting in a communication overhead approximately 100 times higher than ours. Specifically, OPA requires 19.25MB, while our scheme only requires 0.19MB. While the decryptor computation time in our scheme is much lower than in Shamir-based methods when using \TLSS alone, the aggregate time for key sharing becomes dominant. As a result, the overall decryptor computation overhead in our scheme is about 17\% lower than OPA. Additionally, our scheme significantly reduces the server's overhead: Ours-2 reduces server computation time by about 50\% and communication by about 25\% compared to OPA.

\begin{figure}[!t]
	\centering  
    \subfloat[User Time]{
    \label{fig:nfsa_user_time}
    \includegraphics[width=0.45\linewidth]{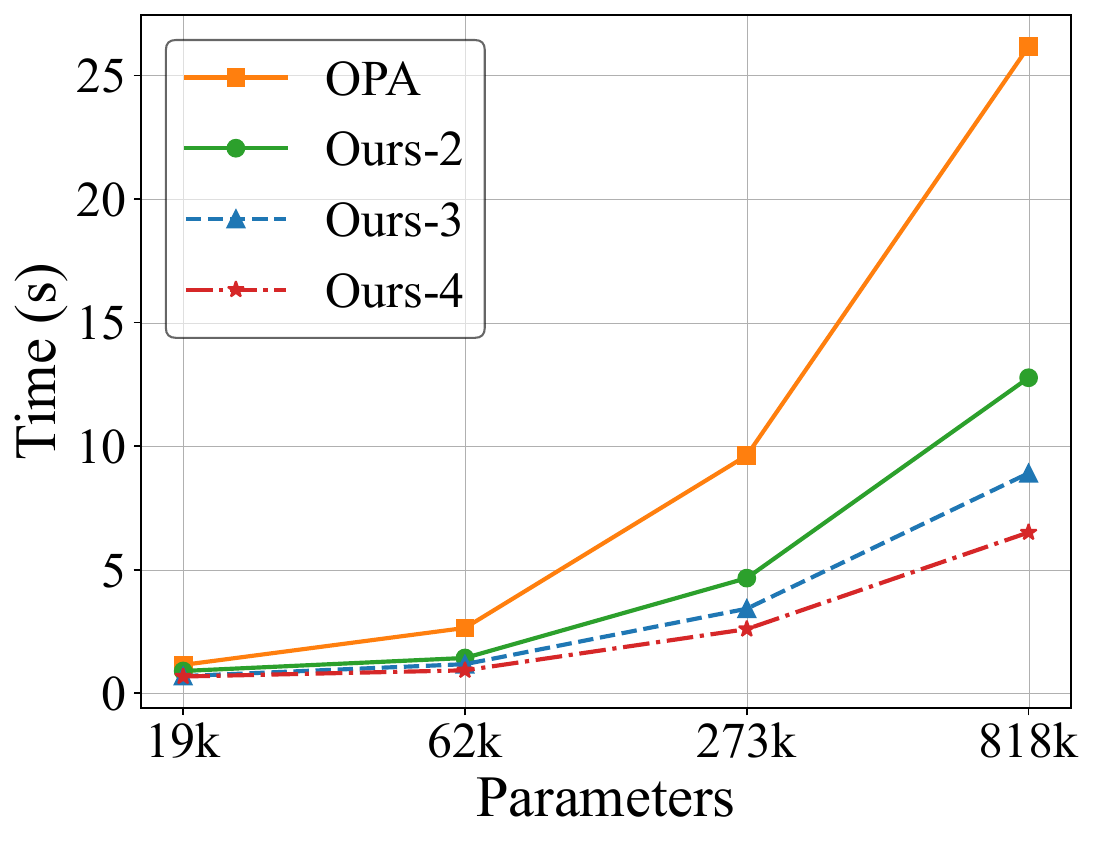}}
     \subfloat[User Traffic]{
    \label{fig:nfsa_user_traffic}
    \includegraphics[width=0.45\linewidth]{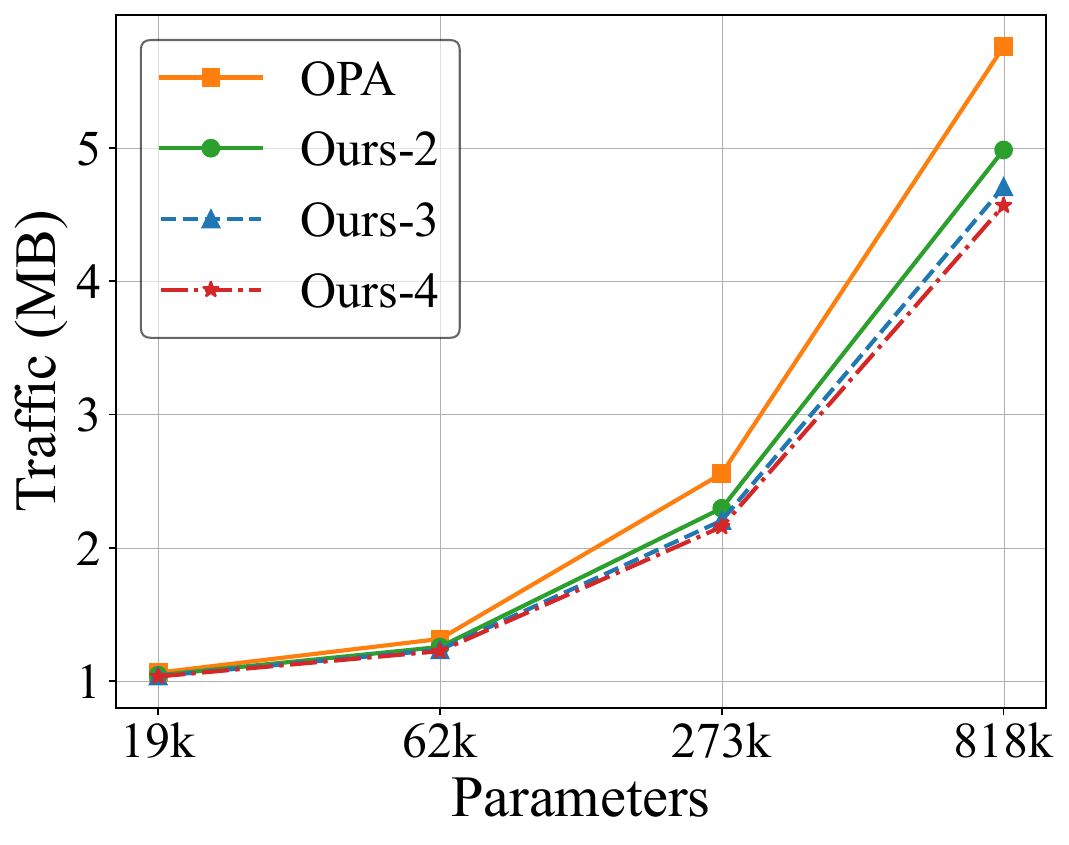}}
    
\caption{Comparison of user overhead for one round of aggregation.}
\label{fig:nfsa_user_overhead}
\end{figure}

\begin{figure}[!t]
	\centering  
    \subfloat[Server Time]{
    \label{fig:nfsa_server_time}
    \includegraphics[width=0.45\linewidth]{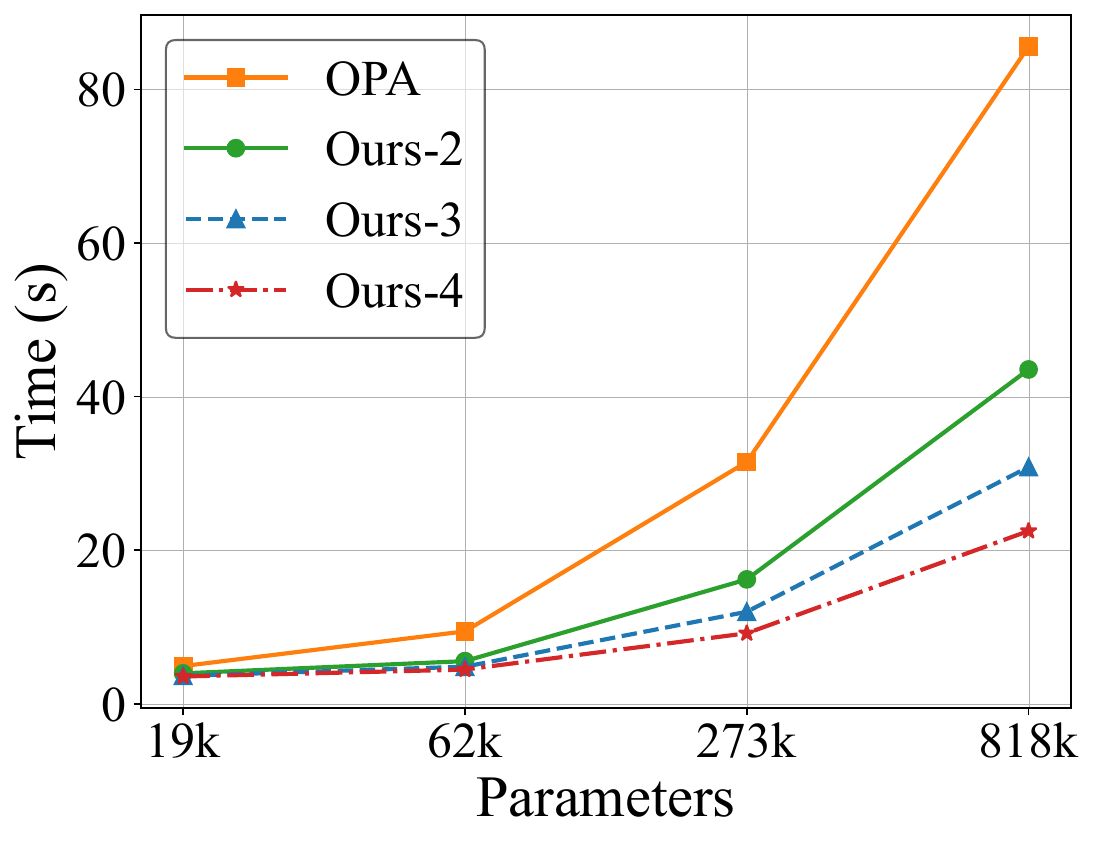}}
     \subfloat[Server Traffic]{
    \label{fig:nfsa_server_traffic}
    \includegraphics[width=0.45\linewidth]{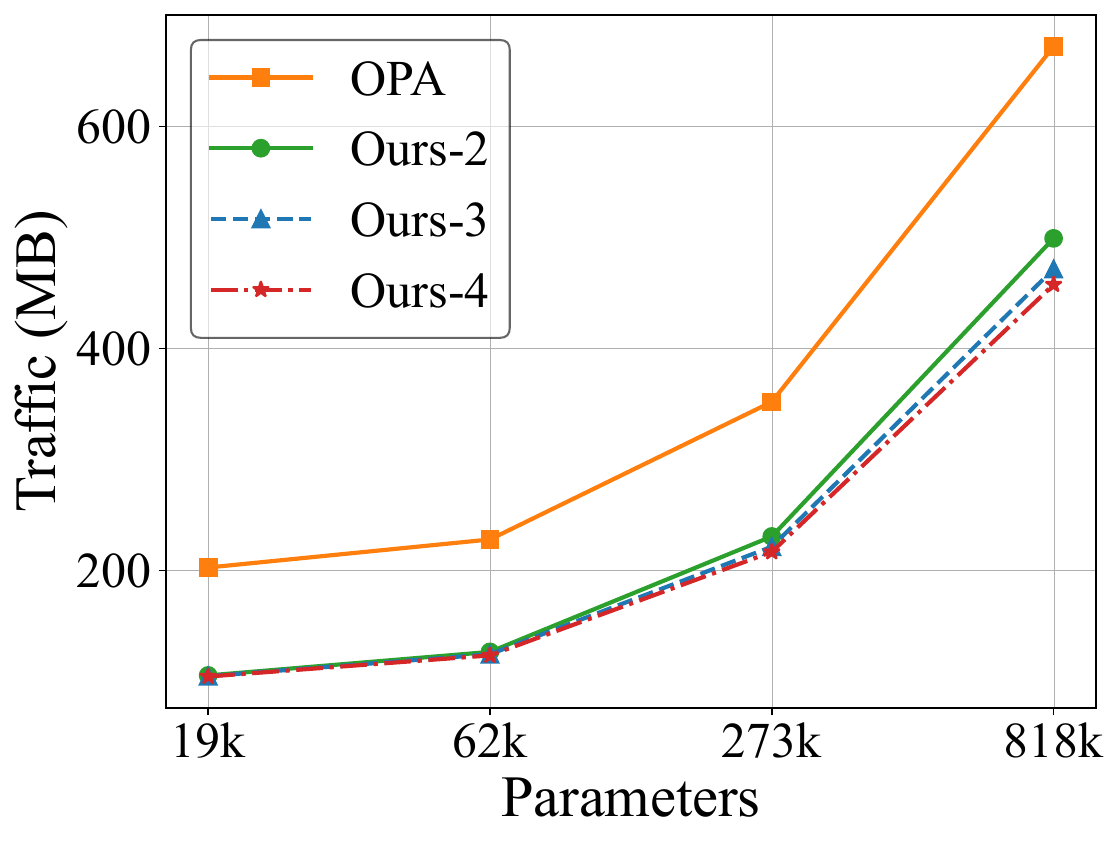}}
    
\caption{Comparison of server overhead for one round of aggregation.}
\label{fig:nfsa_server_overhead}
\end{figure}

Since the decryptor's overhead is independent of the model, we only compare the user and server overhead in \cref{fig:nfsa_user_overhead} and \cref{fig:nfsa_server_overhead}. As shown in \cref{fig:nfsa_user_time}, our scheme demonstrates greater advantages as the model size increases. When the model parameters are 19K and 62K, the user communication overhead of our scheme is almost the same as OPA's, but our scheme requires less user time. The server time overhead, shown in \cref{fig:nfsa_server_time}, follows a similar trend. However, in \cref{fig:nfsa_server_traffic}, the server traffic of OPA is consistently much higher than that of our scheme, due to \TLSS, which significantly reduces the communication overhead for key sharing in our approach.

Overall, our proposed secure aggregation scheme, \NFSA, achieves lower time and communication overheads compared to OPA, which uses the same underlying technologies, thanks to two key innovations: \TLSS and CRT encoding. In particular, our scheme significantly outperforms OPA in terms of user computation time, server communication, and decryptor computation and communication overhead.

\section{Conclusion}
In this work, we propose a novel SS scheme, \TLSS, specifically designed for the single-server FL scenario. This approach enables users to share secrets without requiring the server to relay secret information, thereby reducing the attack surface and overall communication overhead. Additionally, we introduce a new encoding scheme for input masking based on the almost KhPRF. By leveraging CRT, we batch multiple inputs together, which reduces the number of KhPRF calls and subsequently lowers the masking overhead for users. This also mitigates the issue of user communication inflation when many users participate in aggregation. Our theoretical analysis demonstrates that the proposed scheme is scalable and is practical for real-world applications. Although our experiments are conducted within the context of FL, the approach can be naturally extended to various secure aggregation tasks, such as electronic voting, smart-grid aggregation, and other secure aggregation tasks.
However, our scheme is designed for semi-honest environments, and for full malicious security, OPA remains a more robust alternative. Additionally, while CRT packing improves aggregation efficiency, it requires a larger modulus $q$ as the batch size increases, resulting in larger KhPRF key sizes. This also introduces added complexity for fine-grained input verification. 
In future work, we plan to further optimize the verifiability of \TLSS and \NFSA, enabling its use in efficient verifiable FL, thus expanding the scope of its applications.

\bibliographystyle{ACM-Reference-Format}
\bibliography{ref}

\appendix

\section{Ethical Considerations}
In this section, we discuss the potential ethical issues.

\textbf{Disclosures.} Our research does not involve any system or software vulnerabilities. 

\textbf{Experiments with live systems without informed consent.} Our research does not involve any live experiments. We focus on theoretical algorithms, and all experiments are conducted through computer simulations, with no live experimentation involved. From a scientific perspective, live experiments are not required for our study.

\textbf{Terms of service.} Our research did not involve any interaction with third-party services governed by terms of service.

\textbf{Deception. }  There is no deception in our research process. 

\textbf{Wellbeing for team members.} All team members involved in the research have collaborated according to the planned division of labor, and there were no significant disagreements during the research process. Our research does not include any disturbing or inappropriate material.

\textbf{Innovations with both positive and negative potential outcomes. } Our research provides users with improved secret sharing and FL, which significantly contribute to protecting user privacy and enhancing the user experience. At the same time, it offers service providers new opportunities for privacy-related services. The research has not caused any negative impact on other groups.

\textbf{Retroactively identifying negative outcomes.} Our solution does not involve any direct negative effects. Additionally, we have analyzed the drawbacks of our proposed solution, providing potential users with parameter recommendations and a detailed analysis of its advantages and disadvantages.

\textbf{The law.} Our research process and content fully comply with the law.

\section{Open Science}

This appendix outlines the artifacts provided to enable the evaluation of the core contributions of this paper. All necessary code is available through the GitHub repository at the following URL:

\textbf{Repository URL:}  
\url{https://github.com/pahjastia/NFSA-with-TLSS}

The core contributions of this paper include the \TLSS scheme and a CRT encoding method for almost KhPRF. The following artifacts are provided to evaluate these contributions:

\begin{itemize}
    \item \textbf{Code Artifacts:}
    \begin{itemize}
        \item \texttt{sim\_tlss.py}: Implements a single-threaded simulation of the TLSS protocol, including key generation, Shamir secret sharing, compression of shares, and secret reconstruction.
        \item \texttt{sim\_masking.py}: Implements a single-threaded simulation of the CRT encoding and KhPRF-based masking, focusing on vector aggregation and performance evaluation.
        \item \texttt{utils.py}: Contains helper functions for ECDH key agreement, Shamir secret sharing, AES-GCM encryption, and other utility functions.
        \item \texttt{dealer.py / holder.py / server.py}: Defines the roles and operations for dealers, holders, and servers in the \TLSS scheme.
       
        \item \texttt{crt\_coding.py}: Provides the necessary functions for CRT encoding and decoding, as well as related utilities.
        \item \texttt{KhPRF.py}: Implements the KhPRF based on RLWR with our NTT optimization.
    \end{itemize}
    
\end{itemize}

\section{More Preliminaries}
\subsection{Secure Aggregation for Federated Learning}
\label{apx:pre_sa}
In FL, the most commonly used aggregation rule is federated averaging \cite{mcmahan2017communication}. We represent model parameters as vectors. In each round of training, the new global model parameters $\mathbf{x}_0^{(r+1)}$ are the weighted average of the model parameters $\mathbf{x}_1^{(r)}, \mathbf{x}_2^{(r)}, \dots, \mathbf{x}_N^{(r)}$ from all users:
\begin{equation}
\mathbf{x}_0^{(r+1)} = \sum_{i=1}^N \alpha_i \mathbf{x}_i^{(r)},
\end{equation}
where the weight for each user is $\alpha_i = \frac{\Vert \mathcal{D}_i \Vert}{\sum_{i=1}^N \Vert \mathcal{D}_i \Vert}$.

In secure FL, we typically map the model parameters $\mathbf{x}_i$ of the users to a finite ring or finite field $\mathbb{Z}_p^d$ \cite{bonawitz2017practical}. Since model parameters are floating-point numbers, we usually apply quantization to convert them into integers. In this work, we use simple rounding quantization as in \cite{aono2017privacy}. Let $x$ be a real number and $\Delta_q$ the quantization factor. Then, the quantization operation is given by $\operatorname{Q}(x) = \lfloor x\Delta_q \rfloor$. The inverse quantization formula is simple: to dequantize, we divide $\operatorname{Q}(x)$ by $\Delta_q$, i.e., $\operatorname{Q}^{-1}(x) = \frac{x}{\Delta_q}$.
Note that our approach focuses primarily on secure aggregation and is independent of the specific quantization scheme and is compatible with other quantization methods, such as in \cite{zhang2020batchcrypt}.

Since our method focuses on secure aggregation rather than quantization, following most of the existing work \cite{bonawitz2017practical, karthikeyan2025opa}, we assume the inputs are integer-valued vectors. The user weights can be multiplied into the inputs during encoding. Thus, the aggregation formula typically considered in secure aggregation is:

\begin{equation}
\mathbf{x}_0^{(r)} = \sum_{i=1}^N \operatorname{Q}(\Vert \mathcal{D}_i \Vert \times \mathbf{x}_i^{(r)} ).
\end{equation}

Before training, the user decodes the global aggregation result to obtain the global model parameters:

\begin{equation}
\mathbf{w}_0^{(r)} = \frac{\operatorname{Q}^{-1}(\mathbf{x}_0^{(r)})}{\sum_{i=1}^N \Vert \mathcal{D}_i \Vert}.
\end{equation}
The division by $\sum_{i=1}^N \Vert \mathcal{D}_i \Vert$ after dequantization is necessary because, during each aggregation, some users may drop out. Hence, the users participating in the aggregation are not known before the aggregation is completed.

The dataset size can be treated as a public parameter or appended to the local model parameters (i.e., by setting $\mathbf{x}_i^{(r)}[d+1] = \Vert \mathcal{D}_i \Vert$) for private weight summation. Therefore, in the following discussion, we assume that the input $\mathbf{x}_i^{(r)}$ for each user $U_i$ in the $r$-th round of secure aggregation is a $d$-dimensional vector in the finite ring $\mathbb{Z}_{p_m}$. The goal of secure aggregation is to securely compute:
\begin{equation}
\mathbf{x}_0^{(r)} = \sum_{i \in \mathcal{U}} \mathbf{x}_i^{(r)} \bmod p_m,
\end{equation}
where $\mathcal{U}$ denotes the set of users online in the $r$-th round.

\subsection{Secret Sharing}
\label{apx:ss}
In a Secret Sharing (SS) scheme, there exists a dealer and $M$ share holders, denoted as $P_1, P_2, \cdots, P_M$. The dealer divides the secret $s$ into $M$ shares, $s_1, s_2, \cdots, s_M$, using a sharing algorithm, and transmits each share $s_m$ to the corresponding holder $P_m$ via a secure communication channel. The security of the sharing process ensures that no single share $s_m$ reveals any information about the secret $s$. 
When reconstruction is needed, each holder $P_m$ sends their share $s_m$ to the reconstruction entity (the server). The server then uses a reconstruction algorithm along with the shares $s_1, s_2, \cdots, s_M$ to recover the secret $s$.
In the following, we describe two SS schemes used in this work: the 2-out-of-2 additive SS and Shamir's SS.

\subsubsection{Additive Secret Sharing} 
In this work, we consider the 2-out-of-2 scheme, where there are exactly two share holders. We use $s^{A_1}$ and $s^{A_2}$ to denote the two share values, respectively.
At the sharing process, the dealer randomly samples a value $s^{A_1}$ from $\mathbb{Z}_p$, and then computes:
\begin{equation}
    s^{A_2} = s - s^{A_1} \bmod{p}.
\end{equation}
The reconstruction algorithm is simple and requires only one modular addition:
\begin{equation}
    s = s^{A_1} + s^{A_2} \bmod{p}.
\end{equation}
Since $s^{A_1}$ is uniformly sampled, the value of $s^{A_2}$ is also uniformly distributed over $\mathbb{Z}_p$. Therefore, knowing either $s^{A_1}$ or $s^{A_2}$ alone reveals no information about $s$.

\subsubsection{Shamir's Secret Sharing}
A Shamir's SS is a $(T, M)$ threshold scheme. Any $T-1$ shares provide no information about the secret $s$. However, any $T$ shares can be used to reconstruct the secret $s$. 

In the sharing phase, the dealer first randomly samples $T-1$ values $c_1, c_2, \cdots, c_{T-1}$ from $\mathbb{Z}_p$, and then constructs a polynomial of degree $T-1$:
\begin{equation*}
    f(x) = s + c_1 x + c_2 x^2 + \cdots + c_{T-1} x^{T-1}.
\end{equation*}
The share of any holder $P_m$ is $s_m = f(\beta_m)$, where $\beta_m$ is invertible in $\mathbb{Z}_p$ and satisfies the condition that $\beta_i \neq \beta_j$ for all $i \neq j$.

The reconstruction of the secret is straightforward. We can reconstruct the unique polynomial $f(x)$ with degree $T-1$ from any $T$ shares, for example $\{(\beta_{1}, s_{1}), (\beta_{2}, s_{2}), \cdots, (\beta_{T}, s_{T})\}$ using polynomial interpolation. The most commonly used interpolation method is Lagrange interpolation. First, we compute the Lagrange basis functions:
\begin{equation*}
    l_{i}(x) = \left(\prod_{j=1, j \neq i}^{T} (x - \beta_{j})\right) \times \operatorname{Inv}\left(\prod_{j=1, j \neq i}^{T} (\beta_{i} - \beta_{j})\right),
\end{equation*}
where $\operatorname{Inv}(x)$ denotes the inverse of $x$ in $\mathbb{Z}_p$, i.e., $x \times \operatorname{Inv}(x) \equiv 1 \bmod{p}$, which can be computed using the extended Euclidean algorithm.
Finally, the secret is reconstructed using the Lagrange basis functions as follows:
\begin{equation}
\label{eq:shamir_rec}
    s = \sum_{i=1}^T s_{i} \times l_{i}(0) \bmod{p}.
\end{equation}

\section{Proofs}
\subsection{Proof of \cref{thm:ss_security_distinguish}}
\label{apx:proof_ss_security_distinguish}
\begin{proof}
    We use a simulation-based approach \cite{lindell2017simulate} to prove \cref{thm:ss_security_distinguish}, which involves constructing a polynomial-time simulator that is independent of the input. The output of this simulator is computationally indistinguishable from the adversary's view during the execution of the real protocol. If such a simulator exists, it is evident that the protocol execution does not leak any information about the input to the adversary.
    Below, we construct a simulator whose output is indistinguishable from $\{s_m^{A_2}\}_{m \in [1:M]}^{x_1}$.

    The shares $\{s_m\}_{m \in [1:M]}^{x_1}$ are the share values of Shamir's SS, so they are computationally indistinguishable from a random distribution over $\mathbb{Z}_p$. Thus, the simulator can use a random input $r$ in place of $x_1$ to execute the Shamir's SS sharing algorithm, resulting in $\{s_m\}_{m \in [1:M]}^{r}$, which is indistinguishable from the original shares $\{s_m\}_{m \in [1:M]}^{x_1}$. 
    
    Notice that in each sharing instance, either $\kappa_{d,m}$ is different (due to re-executing the Setup), or $v_m$ is different (for different inputs), so each time $\mathcal{F}_{\kappa_{d,m}}(v_m, 1)$ generates an output that is equivalent to resampling from $\mathbb{Z}_p$. Since the output of $\mathcal{F}_{\kappa_{d,m}}(v_m, 1)$ is pseudorandom, it is computationally indistinguishable from the uniform distribution over $\mathbb{Z}_p$. Therefore, we directly use uniform sampling from $\mathbb{Z}_p$ to replace $s_m^{A_1}$ in the computation of $s_m^{A_2}$. The resulting output $\{s_m^{A_2}\}_{m \in [1:M]}^{r}$ is indistinguishable from $\{s_m^{A_2}\}_{m \in [1:M]}^{x_1}$.

    Based on the construction of the simulator, we know that the simulator's output is independent of the secret input, thus proving \cref{thm:ss_security_distinguish}.
\end{proof}

\subsection{Proof of \cref{thm:security_nfsa}}
\label{apx:proof_of_security_nfsa}
\begin{proof}
    Since our scheme does not require the recovery of a user's private key when they drop out, the server can only access less information about an offline user compared to an online user. If the privacy of online users is not compromised, it is clear that the privacy of offline users is also protected. Therefore, for the sake of simplicity in the proof, we assume that no users drop out.

    In the offline phase in \cref{alg:offline}, the server obtains the public keys of all users, $\{pk_i^u\}_{i \in [1:N]}$, and the public keys of the decryptors, $\{pk_m^p\}_{m \in [1:M]}$. In the $r$-th round of secure aggregation in \cref{alg:online}, the messages sent by users to the server consist of two parts. The first part is the share of the KhPRF keys: $\{\{\mathbf{s}_m^{A_2,i,r}\}_{m \in [1:M]}\}_{i \in \mathcal{U}}$, and the second part is the ciphertext of the user inputs: $\{\mathbf{c}_i^{(r)}\}_{i \in \mathcal{U}}$. 
    In the masking step, the server receives the share values $\{\mathbf{s}_m^{A_1,r}\}_{m \in \mathcal{P}_2}$ from the decryptors. Therefore, the real view $V_{real}$ of the server during the execution of the \NFSA protocol is composed of the initialized public keys and the shares and ciphertexts from each round, namely $\{\{\mathbf{s}_m^{A_2,i,r}\}_{m \in [1:M]}\}_{i \in \mathcal{U}}$, $\{\mathbf{c}_i^{(r)}\}_{i \in \mathcal{U}}$, and $\{\mathbf{s}_m^{A_1,r}\}_{m \in \mathcal{P}_2}$.

    We construct a polynomial-time simulator using the hybrid argument method \cite{lindell2017simulate} to simulate the server's view. Specifically, we begin with the real-world setting and construct a sequence of simulators. By making finite, polynomial-time modifications, we ensure that the views obtained by the server from interactions with any two adjacent simulators are indistinguishable. The behavior of the final simulator is independent of the private inputs and can be executed in the ideal world (where no user privacy data is involved), thereby completing the construction of the required simulator. The detailed construction is as follows:

    \begin{itemize}
        \item[$\mathtt{S}_0$] In this simulator, we make no modifications, so its behavior is identical to that of the users and decryptors in the real protocol, generating the real view $V_{real}$.
        
        \item[$\mathtt{S}_1$] In this simulator, we modify the shared PRF key of user $U_i$ in \cref{alg:online}. Instead of using $\kappa_{i,m}^{(r)}$, we replace it with a randomly generated key of the same length, denoted $\kappa_{i,m}^{(r),r}$. The decryptor no longer derives the key of user $U_i$ through the hash function, but instead uses $\kappa_{i,m}^{(r),r}$ directly. The other steps remain unchanged. We note that, by the security of the hash function, $\kappa_{i,m}^{(r)}$ and $\kappa_{i,m}^{(r),r}$ are indistinguishable. This means we are simply using a new key for \TLSS, and according to \cref{thm:ss_security_distinguish}, the server cannot distinguish between different \TLSS shares. Therefore, the generated server's views by $\mathtt{S}_1$ and $\mathtt{S}_0$ are indistinguishable.
        
        \item[$\mathtt{S}_2$] 
        In this simulator, all the PRF keys required by \TLSS for each user in every round are replaced with random keys, while the other steps remain the same as in $\mathtt{S}_1$. Based on the analysis in $\mathtt{S}_1$, the generated server's views by $\mathtt{S}_2$ and $\mathtt{S}_1$ are indistinguishable.
        
        \item[$\mathtt{S}_3$] 
        In this simulator, the public keys that need to be sent to the server in \cref{alg:offline}, including all users' public keys $\{pk_i^u\}_{i \in [1:N]}$ and the decryptors' public keys $\{pk_m^p\}_{m\in[1:M]}$, are replaced with randomly generated public keys and sent to the server. The other steps remain the same as in $\mathtt{S}_2$. Note that the online and offline phases of $\mathtt{S}_2$ are independent, and the online phase does not rely on the keys negotiated in the offline phase of $\mathtt{S}_2$. Therefore, the online phase of $\mathtt{S}_3$ is identical to that of $\mathtt{S}_2$. Due to the security of KA, the randomly generated public keys in $\mathtt{S}_3$ are indistinguishable from the public keys generated by KA in $\mathtt{S}_2$. Hence, the generated server's views by $\mathtt{S}_3$ and $\mathtt{S}_2$ are indistinguishable.

        \item[$\mathtt{S}_4$] 
        In this simulator, in each round, one online user $U_i^{(r)}$ is selected, and their input is replaced with $\mathbf{x}^{(r)}$, while the inputs of all other users are replaced with zero vectors of the same length. The other steps remain the same as in $\mathtt{S}_3$. Note that, after this replacement, the final result is still $\mathbf{x}^{(r)}$.
        Due to the security of KhPRF, $F_H(\mathbf{k}_i^{(r)}, r)$ generated in \cref{alg:online} and the random value are indistinguishable, so $\mathbf{c}_i^{(r)}$ and the random value are also indistinguishable. Since the sharing part of $\mathtt{S}_4$ is identical to that of $\mathtt{S}_3$, the generated server's views by $\mathtt{S}_4$ and $\mathtt{S}_3$ are indistinguishable.

    \end{itemize}

    The simulator $\mathtt{S}_4$ is a polynomial-time simulator independent of the users' inputs and internal states. Due to the transitivity of indistinguishability, the server's view generated by the interaction between $\mathtt{S}_4$ and the server is indistinguishable from the real server's view in the actual protocol execution. Thus, we have completed the proof of \cref{thm:security_nfsa}.
    
\end{proof}

\section{KhPRF based on RLWR}
\label{apx:KhPRF_rlwr}

When long pseudo-random numbers need to be generated using KhPRF, the efficiency of implementations based on LWR is far inferior to those based on RLWR. LWR requires matrix multiplication, while RLWR operates in the polynomial ring. By using NTT (Number Theoretic Transform) representation, fast polynomial multiplication can be achieved. Below, we describe a variant of KhPRF based on RLWR and provide some of our optimization techniques.

Let the polynomial ring be defined as $\mathcal{R}_q = \mathbb{Z}_q[x]/p(x)$, where $p(x)$ is a degree-$n_\lambda$ cyclotomic polynomial. Let $H_3(x): \chi \to \mathcal{R}_q$ represent a hash function mapping the input $x$ to an element of $\mathcal{R}_q$. In the RLWR variant, the construction of the KhPRF in \cref{eq:LWR_random_oracle} is modified as follows:
\begin{equation} 
\label{eq:RLWR_khprf} 
F_H(\mathbf{k}, x) = \lfloor H_3(x) \times \mathbf{k} \rfloor_{p_r},
\end{equation}
where the multiplication is performed in the polynomial ring, and $\lfloor \mathbf{x} \rfloor_{p_r}$ denotes rounding each coefficient of $\mathbf{x}$ (in the power basis).

A polynomial from the ring has $n_\lambda$ coefficients, so a single computation can yield $n_\lambda$ pseudo-random numbers. If the NTT transformation is used for computation, the complexity is $O(n_\lambda\log n_\lambda)$ \cite{mert2019design}. On the other hand, directly using $n_\lambda$ invocations of \cref{eq:LWR_random_oracle} to generate $n_\lambda$ pseudo-random numbers results in a computational complexity of $O(n_\lambda^2)$.

Typically, we set $p(x) = x^{n_\lambda} + 1$, where $n_\lambda$ is a power of 2 and $q \equiv 1 \bmod 2n_\lambda$. According to the properties of cyclotomic polynomials, $p(x)$ has $n_\lambda$ roots, and their order is $2n_\lambda$. Thus, it can be represented in the form of Double CRT or as an NTT expression \cite{gentry2012homomorphic}. By applying the NTT transformation, we convert the coefficient representation of the polynomial into its NTT representation. This transformation is bijective. In the NTT representation, multiplication in the polynomial ring is essentially the element-wise multiplication of two vectors, requiring only $O(n_\lambda)$ complexity. The time complexity of both the NTT forward and inverse transforms is $O(n_\lambda \log n_\lambda)$.

\subsection{Our Optimization}
We observe that the bijection implies that if the coefficients of an element are uniformly distributed in the coefficient representation, they will also be uniformly distributed in the NTT representation. Therefore, the key $\mathbf{k}$ for the KhPRF can be directly sampled in the NTT representation. Similarly, $H_3(x)$ can directly map $x$ to an element of $\mathcal{R}_q$ in its NTT representation. As a result, the NTT transformation is not required, and the multiplication in \cref{eq:RLWR_khprf} only incurs $O(n_\lambda)$ complexity. However, the rounding operation is not linear and cannot be directly performed in the NTT representation. Thus, after obtaining $H_3(x) \times \mathbf{k}$, an inverse NTT transformation is required to perform the rounding. Therefore, the final computation for the KhPRF is:

\begin{equation}
    \label{eq:RLWR_khprf_ntt}
    F_H(\mathbf{k},x) = \lfloor \operatorname{NTT}^{-1}(H_3(x) \cdot \mathbf{k} \bmod q) \rfloor_{p_r},
\end{equation}
where $\cdot$ represents the element-wise product modulo $q$.
Since the NTT transformation is linear, it does not affect the homomorphic properties of the KhPRF.

\subsection{Parameter Selection for KhPRF}
\label{apx:parameter_select_KhPRF}
One potential difficulty in implementation is finding the roots of $p(x)$ in $\mathbb{Z}_q$. We note that $q \equiv 1 \mod 2n_\lambda$, so there exists an integer $t = \frac{q-1}{2n_\lambda}$. First, we obtain a primitive root $g$ of $\mathbb{Z}_q$ using a random algorithm. Then, we compute $w = g^t \mod q$. This $w$ is the root we need, and the other roots are given by $w^{2i+1}$ where $i = 1, 2, \dots, n_\lambda-1$. The following provides a simple proof:

Since $t = \frac{q-1}{2n_\lambda}$, we have:
\begin{equation*}
    \begin{aligned}
        w^{n_\lambda}&\equiv g^{tn_\lambda} \mod q\\
        &\equiv g^{\frac{q-1}{2n_\lambda}\times n_\lambda} \mod q\\
        & \equiv g^{\frac{q-1}{2}} \mod q.
    \end{aligned}
\end{equation*}
Since $g$ is a primitive root, we know that $g^{\frac{q-1}{2}} \equiv -1 \mod q$. Therefore, we have $w^{n_\lambda} \equiv -1 \mod q$. Additionally, $(w^{2i+1})^{n_\lambda} \equiv (w^{2n_\lambda})^i \times w^{n_\lambda} \equiv -1 \mod q$. Since $w^{2i+1}$ for $i = 1, 2, \dots, n_\lambda-1$ are distinct, and $p(x)$ has only $n_\lambda$ roots, we have found all the roots.

Note that if $w^{2i+1} \equiv w^{2j+1} \mod q$ when $i > j$ and $i, j \in [1:n_\lambda-1]$, then $w^{2(i-j)} \equiv 1 \mod q$, which contradicts the fact that the order of $w$ is $2n_\lambda$. This is why we require $q \equiv 1 \mod 2n_\lambda$. Since $n_\lambda$ is a power of 2, we can leverage the NTT algorithm to achieve a computational complexity of $O(n_\lambda \log n_\lambda)$.

Next, we discuss how to choose the security parameters $\Delta, p_c, q$, and $n_\lambda$ for the KhPRF.
The rounding parameter of LWR for KhPRF is $p_r=\Delta p_c$.
According to the results in OPA \cite{karthikeyan2025opa}, given an LWR instance $(\mathbf{A}, \mathbf{t} = \lfloor \mathbf{A} \mathbf{k} \rfloor_{p_r})$, where $\mathbf{A} \in \mathbb{Z}_q^{L \times n_\lambda}$ and $\mathbf{k} \in \mathbb{Z}_q^{n_\lambda}$, we can convert it into an LWE instance as $(\mathbf{A}, \mathbf{A} \mathbf{k} + \mathbf{e})$, where $\mathbf{e} = \frac{q}{p_r} \times \mathbf{t} - \mathbf{A} \mathbf{k}$.
The variance of the error in the LWE samples is $\frac{q^2}{12 p_r^2}$. Based on this noise level, we use the LWE estimator to determine the value of $n_\lambda$ and the corresponding maximum $q_{\max}$ for a given security parameter. For the KhPRF based on RLWR, the process is similar; however, we must choose $q \equiv 1 \mod 2n_\lambda$, ensuring that $q < q_{\max}$ and that $q$ is a prime number.

\begin{figure}[!t]
	\centering  
    \subfloat[Holder Time]{
    \label{fig:m127_dix_holder_time}
    \includegraphics[width=0.45\linewidth]{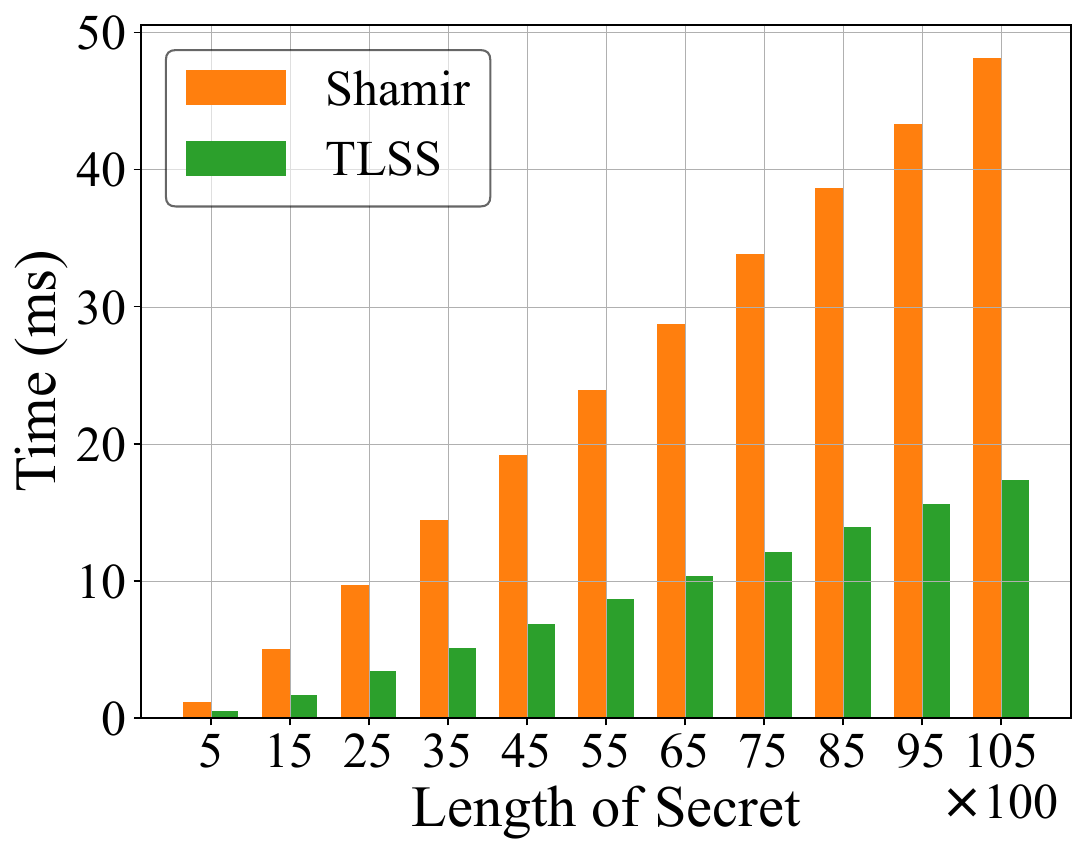}}
     \subfloat[Holder Traffic]{
    \label{fig:m127_dix_holder_traffic}
    \includegraphics[width=0.45\linewidth]{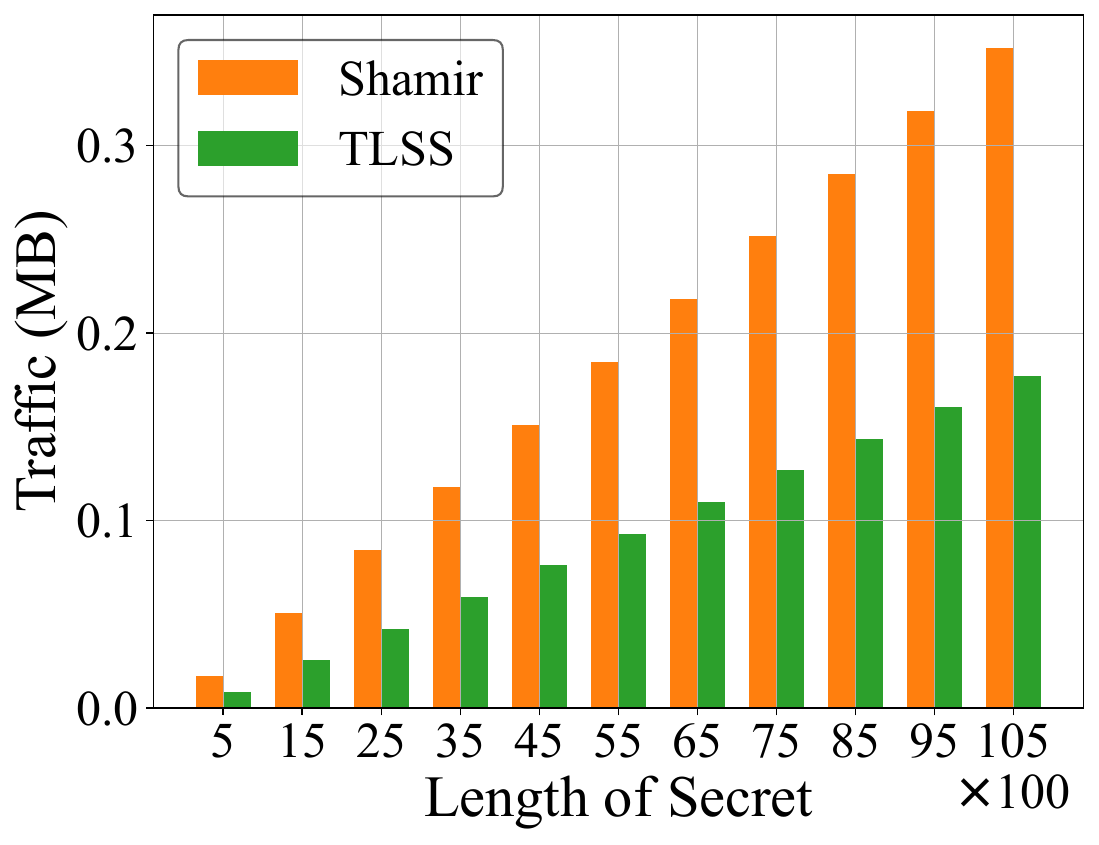}}
    
\caption{Holder performance with modulus bit length 128.}
\label{fig:m127_holder_performance}
\end{figure}

\begin{figure}[!t]
	\centering 
    \subfloat[Total Time]{
    \label{fig:m127_dix_sim_time}
    \includegraphics[width=0.45\linewidth]{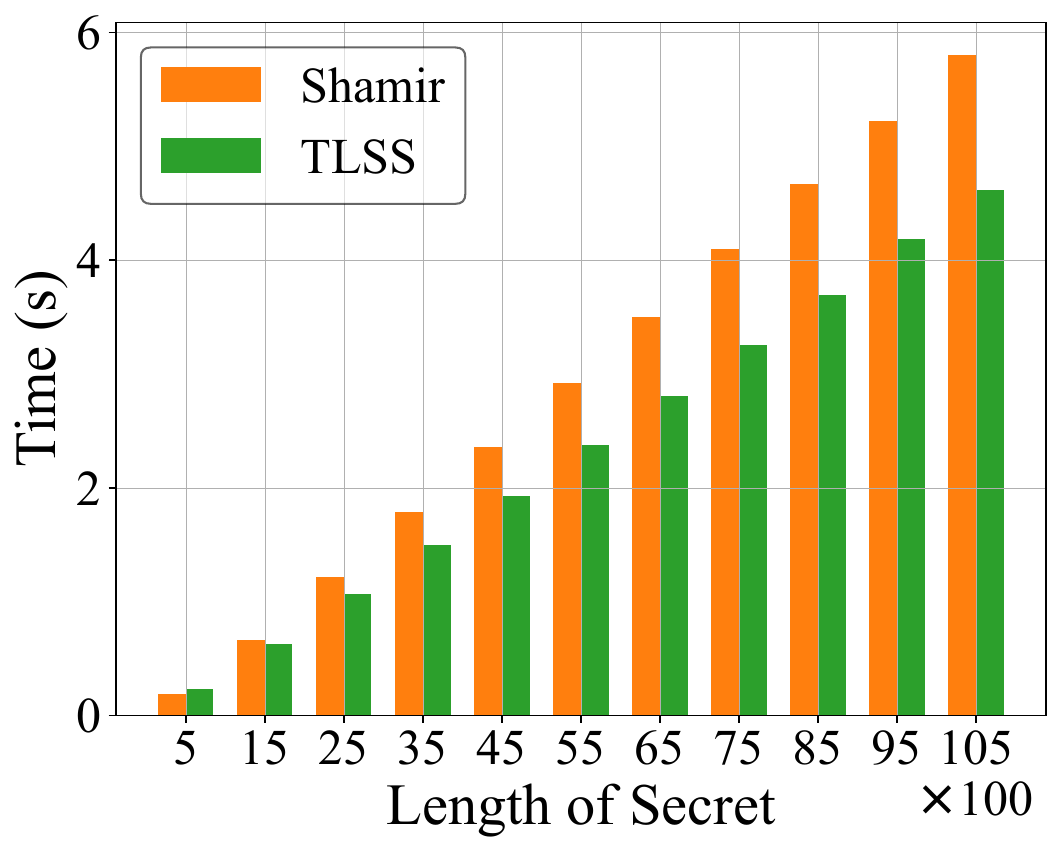}}
     \subfloat[Total Traffic]{
    \label{fig:m127_dix_server_traffic}
    \includegraphics[width=0.45\linewidth]{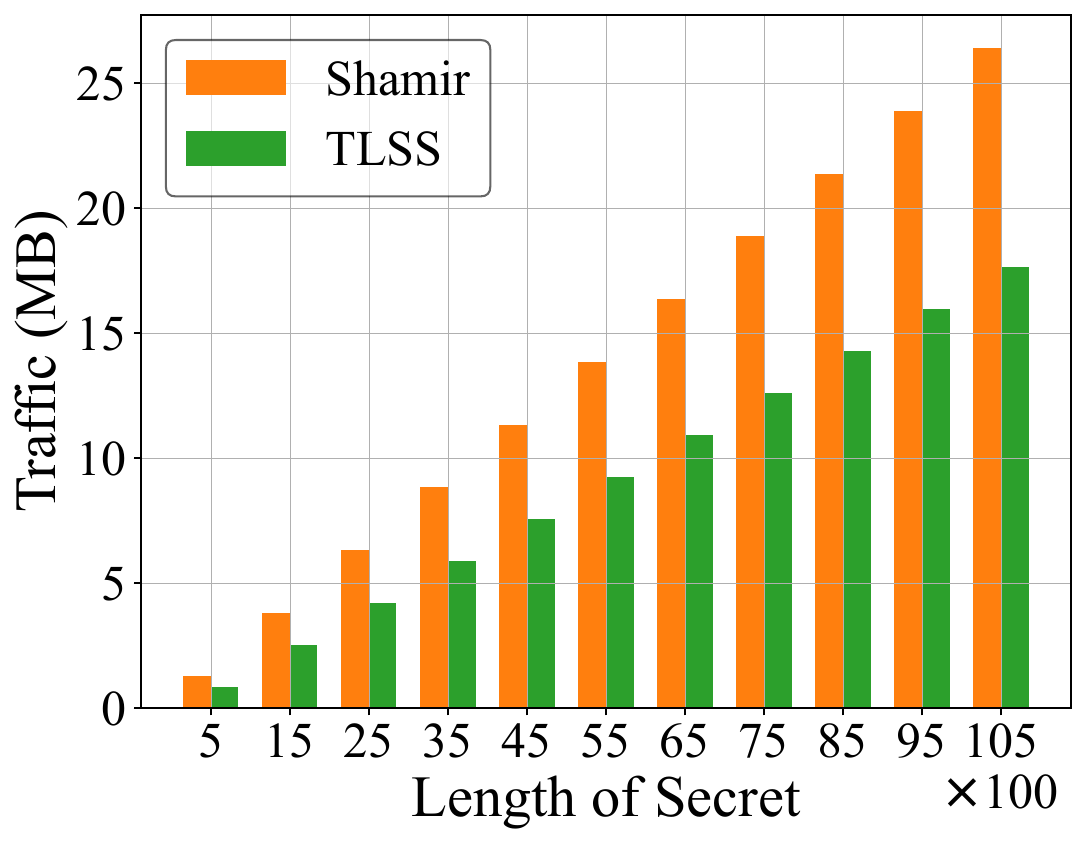}}
    
\caption{Comparison of total overhead under different secret lengths with holder number 50 and modulus bit length 128.
}
\label{fig:m127_dix_total_overhead}
\end{figure}

\begin{figure}[!t]
	\centering 
    \subfloat[Total Time]{
    \label{fig:m127_dih_sim_time}
    \includegraphics[width=0.45\linewidth]{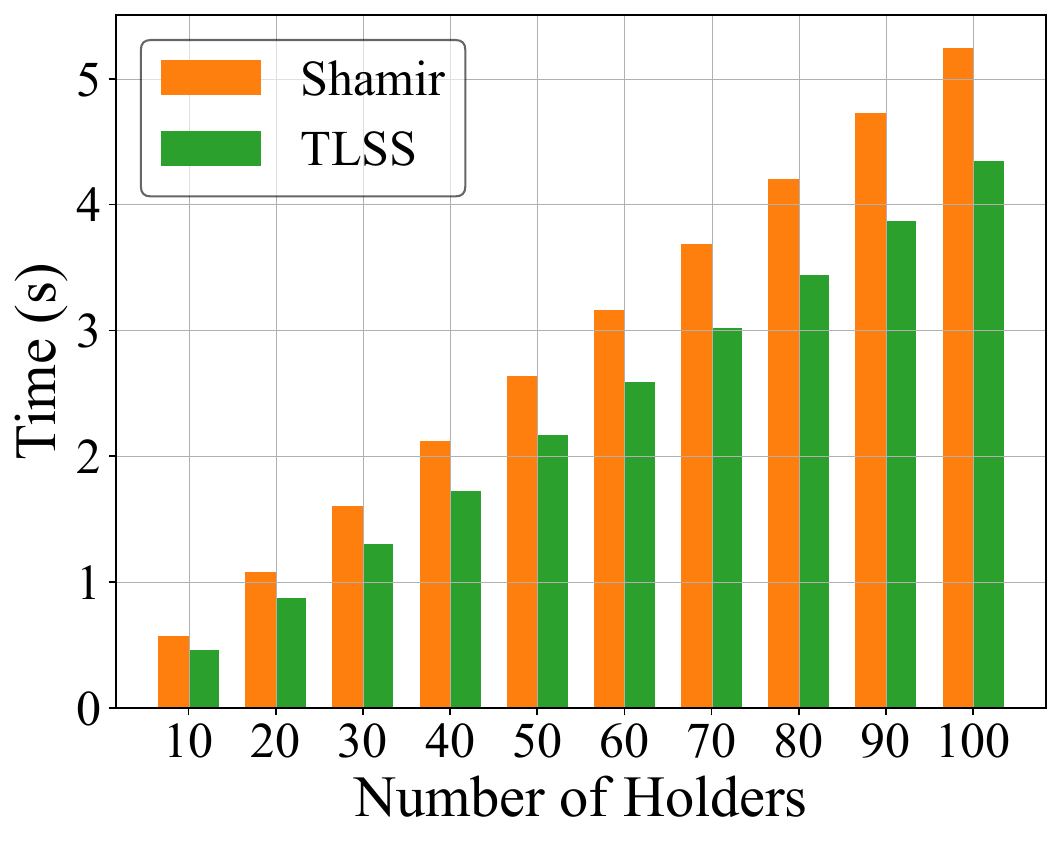}}
     \subfloat[Total Traffic]{
    \label{fig:m127_dih_server_traffic}
    \includegraphics[width=0.45\linewidth]{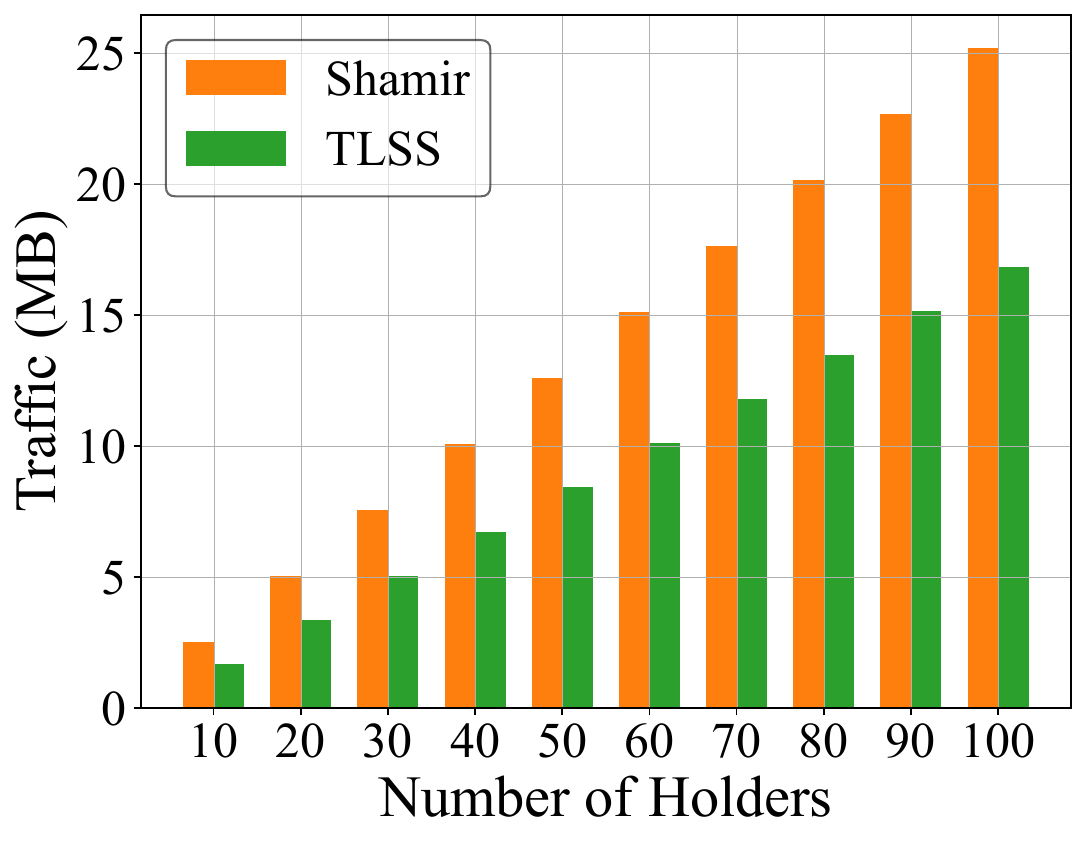}}
    
\caption{Comparison of total overhead under different numbers of holders with secret length 5000 and modulus bit length 128.}
\label{fig:m127_dih_overhead}
\end{figure}

\section{\TLSS Performance with Greater Modulus}
\label{apx:m128}

Since the bit length of the modulus affects the performance of \TLSS, we present a comparison of the overhead between \TLSS and Shamir when the bit length of the modulus is 128. Due to the inefficiency of our implementation, the time required for generating the PRF is significantly larger compared to when the modulus bit length is 64. However, the computation time for the holders, under different secret lengths, remains noticeably lower than Shamir, as shown in \cref{fig:m127_dix_holder_time}. The communication overhead for the holders is also much lower than Shamir, even more significantly so compared to when the modulus bit length is 64. This is because the ratio of communication overhead between \TLSS and Shamir does not vary significantly across different moduli, and the increase in modulus bit length to 128 doubles the overall overhead. Therefore, the difference in communication overhead for the holders in \TLSS becomes even more pronounced. Furthermore, the total overhead across different secret lengths and the total cost across different numbers of holders still shows that \TLSS significantly outperforms Shamir, as shown in \cref{fig:m127_dix_total_overhead} and \cref{fig:m127_dih_overhead}.

\end{document}